\pgfplotsset{compat = newest}
\theoremstyle{theorem}
\newtheorem{lemma}{Lemma}
\newtheorem{proposition}{Proposition}
\newtheorem{theorem}{Theorem}
\newtheorem{corollary}{Corollary}
\newtheorem{definition}{Definition}
\theoremstyle{definition}
\newtheoremstyle{probstyle}
{}                
{}                
{\itshape}    
{}                
{\bfseries}       
{:}               
{ }               
{}                
\theoremstyle{probstyle}
\newtheorem{problem}{Problem}
\newtheoremstyle{nonum}
{}                
{}                
{\itshape}        
{}                
{\bfseries}       
{:}               
{ }               
{\thmname{#1}\thmnumber{ #2}\thmnote{ #3}}                
\theoremstyle{nonum}
\newtheorem*{lemma*}{Lemma}
\newtheorem*{proposition*}{Proposition}
\newtheorem*{theorem*}{Theorem}
\newtheoremstyle{rem}
{}                
{}                
{\normalfont}        
{}                
{\mdseries}       
{.}               
{ }               
{\thmname{#1}\thmnumber{ #2}\thmnote{ #3}}                
\theoremstyle{rem}
\newtheorem{remark}{Remark}
\newtheorem{example}{Example}
\newcommand{\Bc}{\mathcal{B}}
\newcommand{\Cc}{\mathcal{C}}
\newcommand{\Dc}{\mathcal{D}}
\newcommand{\Fc}{\mathcal{F}}
\newcommand{\Oc}{\mathcal{O}}
\newcommand{\Hc}{\mathcal{H}}
\newcommand{\Jc}{\mathcal{J}}
\newcommand{\Rc}{\mathcal{R}}
\newcommand{\Sc}{\mathcal{S}}
\newcommand{\Zc}{\mathcal{Z}}
\newcommand{\Ic}{\mathcal{I}}
\newcommand{\Lc}{\mathcal{L}}
\newcommand{\Tc}{\mathcal{T}}
\newcommand{\As}{\mathscr{A}}
\newcommand{\Bs}{\mathscr{B}}
\newcommand{\Ds}{\mathscr{D}}
\newcommand{\Os}{\mathscr{O}}
\newcommand{\Ss}{\mathscr{S}}
\newcommand{\Ns}{\mathscr{N}}
\newcommand{\Vs}{\mathscr{V}}
\newcommand{\Ws}{\mathscr{W}}
\newcommand{\Xs}{\mathscr{X}}
\newcommand{\Ys}{\mathscr{Y}}
\newcommand{\Rs}{\mathscr{R}}
\newcommand{\Es}{\mathscr{E}}
\newcommand{\Gs}{\mathscr{G}}
\newcommand{\Hs}{\mathscr{H}}
\newcommand{\Rb}{\mathbb{R}}
\newcommand{\Nb}{\mathbb{N}}
\newcommand{\Cb}{\mathbb{C}}
\newcommand{\Bf}{\mathfrak{B}}
\newcommand{\Hf}{\mathfrak{H}}
\newcommand{\Df}{\mathfrak{D}}
\newcommand{\Sf}{\mathfrak{S}}
\newcommand{\Uf}{\mathfrak{U}}
\newcommand{\CE}{\mathbb{E}|}
\newcommand{\SE}{\mathbb{J}|}
\newcommand{\ro}{\mathcal{R}_0}
\newcommand{\eo}{\mathcal{J}_0}
\newcommand{\rs}{\mathcal{R}}
\newcommand{\es}{\mathcal{J}}
\newcommand{\one}{\mathbb{1}}
\newcommand{\zero}{\mathbb{0}}
\newcommand{\ad}{{\rm ad}}
\newcommand{\tr}{{\rm tr}}
\newcommand{\Span}{{\rm span}}
\newcommand{\alg}{{\rm alg}}
\newcommand{\zentrum}{\mathcal{Z}}
\newcommand{\ket}[1]{\left| #1 \right>}
\newcommand{\bra}[1]{\left< #1 \right|}
\newcommand{\expect}[1]{\left< #1 \right>}
\newcommand{\braket}[2]{\left< #1 \middle\vert #2 \right>}
\newcommand{\ketbra}[2]{\left\vert #1 \middle>\middle< #2 \right\vert}
\begin{document}

\title{Exact Model Reduction for\newline Continuous-Time Open Quantum Dynamics}

\author{Tommaso Grigoletto}
\orcid{0000-0002-2891-3528}
\affiliation{Department of Information Engineering, University of Padova, Italy} 

\author{Yukuan Tao}
\affiliation{Department of Physics and Astronomy, Dartmouth College, Hanover, New Hampshire 03755, USA}
\altaffiliation{Present address: Department of Mathematics, Nottingham, NG7 2RD, UK} 

\author{Francesco Ticozzi}
\orcid{0000-0002-1742-119X}
\affiliation{Department of Information Engineering, University of Padova, Italy} 
\affiliation{Department of Physics and Astronomy, Dartmouth College, Hanover, New Hampshire 03755, USA}

\author{Lorenza Viola}
\orcid{0000-0002-8728-9235}
\affiliation{Department of Physics and Astronomy, Dartmouth College, Hanover, New Hampshire 03755, USA}

\maketitle

\onecolumn
\begin{abstract}
We consider finite-dimensional many-body quantum systems described by time-independent Hamiltonians and Markovian master equations, and present a systematic method for constructing {\em smaller-dimensional}, reduced models that \emph{exactly} reproduce the time evolution of a set of initial conditions or observables of interest. 
Our approach exploits Krylov operator spaces and their extension to operator algebras, and may be used to obtain reduced {\em linear} models of minimal dimension, well-suited for simulation on classical computers, or reduced {\em quantum} models that preserve the structural constraints of physically admissible quantum dynamics, as required for simulation on quantum computers. Notably, we prove that the reduced quantum-dynamical generator is still in Lindblad form. By introducing a new type of {\em observable-dependent symmetries}, we show that our method provides a non-trivial generalization of techniques that leverage symmetries, unlocking new reduction opportunities. We quantitatively benchmark our method on paradigmatic open many-body systems of relevance to condensed-matter and quantum-information physics. In particular, we demonstrate how our reduced models can quantitatively describe decoherence dynamics in central-spin systems coupled to structured environments, magnetization transport in boundary-driven dissipative spin chains, and unwanted error dynamics on information encoded in a noiseless quantum code.
\end{abstract} \bigskip

\twocolumn
\section{Introduction}
\subsection{Context and motivation}

Simulating the dynamics of many-body quantum systems is a longstanding challenge of computational physics, with implications ranging from strongly-correlated quantum matter, quantum chemistry, and high-energy physics to quantum technologies. Ultimately, the factor responsible for the computational hardness of the problem is the exponential growth of the system's tensor-product state space with the number of degrees of freedom that quantum mechanics mandates \cite{nielsen_chuang_2010}. While this complexity is already present for closed many-body systems evolving unitarily under Hamiltonian dynamics, additional challenges emerge from considering general, {\em open} quantum dynamics \cite{breuer2002theory}: this may be motivated by the recognition that unwanted dissipation and decoherence caused by the coupling to an uncontrolled environment are inevitably present in real-word systems to some extent \cite{ZurekRMP}; or, conversely, by the realization that suitably engineered dissipation may lead, alone or together with coherent evolution, to quantum states and phenomena not accessible otherwise \cite{verstraete2009quantum,TV2009,fazio}. Simulation of many-body quantum dynamics is a prime target application envisioned for quantum computers and quantum simulators \cite{AspuruG,Fauseweh,DiMeglio}, notably, open-quantum system simulators \cite{ViolaLloyd,Blatt}. Despite the enormous potential that quantum simulation brings, and impressive experimental advances \cite{Monroe,Lukin1,Lukin2,Google}, the ability to classically simulate large-scale quantum-circuit dynamics remains nonetheless crucial both for benchmarking purposes, to verify that the system is performing as expected, and to set a classical computational bar that quantum computation must pass to demonstrate quantum advantage \cite{Mandra}. 

Motivated by the above applications, a host of strategies for reducing the demands of direct simulation of quantum dynamics and obtaining tractable ``effective'' models have been developed. As in the classical case, a main guiding principle is the fact that, even if access to a complete description could be granted, this would neither be desirable nor necessary, as only certain information is ultimately ``relevant'' on physical grounds. Discarding ``irrelevant'' information can be achieved by invoking different forms of coarse-graining, aggregating, or averaging over microscopic degrees of freedom, typically in combination with perturbative arguments -- paradigmatic example being renormalization-group approaches, mean-field and hydrodynamic descriptions, or cluster-expansion techniques \cite{Gorkov,Subir,BeiZ}. For open quantum systems specifically,  weak-coupling and timescale-separation assumptions are used in standard Born-Markov derivations of master equations \cite{alicki-lendi,breuer2002theory,Riva}, along with more sophisticated approaches based, for instance, on different forms of clustering or coarse graining in space \cite{Liu,Rossini} or time \cite{Lidar, Tureci}. Further to that, acknowledging that the dynamics of {realistic} many-body quantum systems (subject to specified locality, energy, and time constraints) may {\em a priori} be confined to an exponentially small fraction of the available state space \cite{poulin2011} in turn legitimates efficient parametrizations of ``relevant'' quantum states in terms of tensor networks \cite{Orus,BeiZ}. The use of tensor-network methods, in combination with chain-mapping techniques \cite{Plenio,Tamascelli} or process-tensor representations \cite{Pollock,Keeling}, lies at the core of current state-of-the-art numerical methods available for open quantum system simulation in the presence of highly structured, possibly strongly-coupled environments. As the conceptual and practical significance of open quantum dynamics continues to expand across quantum science, efforts for obtaining improved simulation methods are multiplying, with a special focus on the simplest yet paradigmatic case of dynamics described by {\em time-independent Lindblad master equations} \cite{alicki-lendi}. Recent representative contributions include more accurate and efficient classical simulation algorithms based on high-order adiabatic elimination \cite{azouit2017towards,Regent} or low-rank structure \cite{Appelo}, as well as quantum simulation algorithms that exploit unitary embeddings \cite{Lin} or classical randomness \cite{Marvian}.

From a system-theoretical standpoint, the different approaches mentioned above can all be seen as instances of approximate {\em model (-order) reduction} (MR), a general framework aimed at reducing the complexity of a mathematical model while retaining some of the essential features of the original one \cite{antoulas}. A familiar setting in which MR takes place, in fact in its strongest {\em exact} form, arises when the dynamics exhibits a manifest symmetry. If the dynamics are described by a Hamiltonian, a decomposition of the Hilbert space according to irreducible representations of the symmetry group naturally furnishes invariant subspaces \cite{Ballentine}; hence, if the initial state resides entirely in one such subspace, an exact lower-dimensional Hamiltonian model may be found by restricting to this invariant subspace via projection. This holds true both for ``conventional'' symmetries, for which the number of dynamically disconnected invariant, ``Krylov subspaces'' \cite{kry31,nandy2024quantumdynamicskrylovspace} grows at most polynomially with system size, and for non-manifest ``unconventional'' symmetries, as occurring in systems exhibiting so-called Hilbert space fragmentation \cite{PhysRevX.9.041017,fragmentation,NumFragmentation}, whereby exponentially many disconnected Krylov subspaces exist. In this context, an important step was taken by Kumar and Sarovar \cite{kumar2014model}, where conditions for certifying the existence of a non-trivial proper invariant subspace were identified and, building on that, exact MR methods were developed for quantum many-body {\em Hamiltonian} models. A subspace of the Hilbert space, however, is not the natural candidate for reduction when the initial condition is not a pure state or the dynamics is not unitary. As such, the approach does not lend itself to the application to genuine open quantum dynamics.

{\bf Key advances.} In this work, we present a systematic framework for implementing {\em exact Lindbladian MR}, applicable to many-body quantum systems described by time-independent Hamiltonians and Markovian master equations in finite dimension. This is achieved by lifting the notion of a Krylov subspace to  operator subspaces that are naturally motivated by control theory \cite{kalman1969topics, wonham}: namely, a \textit{reachable subspace}, containing the trajectory of the system when initialized in a given, arbitrary initial condition; or, in a dual manner, an \textit{observable subspace}, containing the trajectory generated by a given observable evolved in Heisenberg picture. 

Our framework may describe a many-body open quantum system $S$ evolving on a (finite-dimensional) Hilbert space $\Hc$ and subject to Markovian dissipation, or it can be applied to reduce the joint (Hamiltonian or Lindblad) dynamics of a many-body bipartite system evolving on $\Hc \equiv \Hc_S\otimes \Hc_E$, with $E$ being a quantum environment -- in which case the reduced dynamics of $S$ alone is generically {\em non-Markovian}. While being suitable for describing arbitrary quantum states (including {\em non-factorized} system-environment initial conditions), our approach explicitly leverages the fact that, in typical applications, 
the dynamical behavior of the system is not only needed solely for {\em certain} initial conditions, but also for only {\em certain} observable properties of interest (e.g., expectations of few-body or collective observables, diagonal elements of the density operator, and so on). This results in the possibility of carrying out dimensional MR based on {\em both} initial conditions and {\em output observables}, which has not been exploited before to the best of our knowledge. As standard in both quantum probability \cite{accardi1982quantum} and rigorous formulations of quantum statistical mechanics \cite{bratteli1,bratteli2}, we define the dynamical models of interest on {\em operator ($*$-)algebras}. By insisting that projections are also taken onto operator $*$-subalgebras, we can guarantee that the reduced models we obtain obey the structural constraints of physically admissible (completely positive and trace-preserving, CPTP) quantum dynamics -- which we prove are {\em still} in Lindblad form. An iterative algorithm is provided to construct a reduced, lower-dimensional Lindbladian model, when one exists. We further show that, even in situations where symmetries are present in the original Lindblad dynamics, our general framework may allow for more effective reduction than achievable by known symmetry-based approaches, through the identification of a new type of {\em observable-dependent symmetries}. 

{\bf Related work.} From a technical standpoint, we borrow some key tools and results from our previous work on MR for classical hidden Markov models \cite{tac2023} and quantum discrete-time dynamics \cite{tit2023}. While our guiding philosophy here is similar, extension to continuous-time Markovian dynamics nonetheless entails, as we will see, several non-trivial extensions. In particular, it is by no means obvious -- and it is in fact remarkable -- that our MR procedures yield a {\em valid Lindblad generator}. Further to that, extension to continuous-time dynamics substantially broaden the class of physical settings and phenomena to which our methods may be usefully applied to, given the foundational importance of Lindblad master equations. 

Concerning early contributions to the use of MR techniques for quantum dynamics, projection-based methods were employed for nonlinear conditional (stochastic) quantum master equations in the context of cavity QED \cite{Ramon}, and later used to exactly derive a special (Maxwell-Bloch-type) class of Lindblad dynamics \cite{Hideo2008}. In \cite{nurdin2014structures}, MR by balanced truncation was exploited to derive reduced (exact or approximate) models for linear quantum stochastic dynamics, with emphasis on providing conditions under which physical realizability constraints are preserved. Closer to our approach, both in spirit and in terms of the algebraic formalism employed, are the work by Kumar and Sarovar we mentioned above  \cite{kumar2014model} and the work in \cite{Kabernik}, where a notion of state-space coarse-graining is formulated for quantum systems. As we already noted, while the approach in \cite{kumar2014model} also achieves exact quantum MR, a quantum model is obtained ``for free'' because only Hamitonian and pure-state dynamics are considered. On the flip side, while the approach of \cite{Kabernik} is not {\em a priori} restricted to Hamiltonian dynamics, the coarse-grained reduced models are not always exact nor provably CPTP or in Lindblad form. More recently, building on the concept of ``lumpability'' of Markov processes \cite{kemenyFiniteMarkovChains1983a}, exact quantum MR techniques based on constrained bisimulation have been developed for boosting the simulation of quantum circuits \cite{jimenez2023efficient}. Aside from the fact that this approach targets discrete-time dynamics (rather than continuous-time as we consider here), as the one in \cite{kumar2014model} it is again restricted to unitary evolution and pure states in current form.
  
While not explicitly cast in the language of MR, methods that describe quantum dynamics using {\em Krylov operator subspaces} have received intense attention recently, in the context of foundational work on universal operator growth \cite{PhysRevX.9.041017} and its implications for quantum chaos, complexity, and scrambling \cite{nandy2024quantumdynamicskrylovspace} -- including extensions to Lindbladian dynamics \cite{Nandy2022,Nandy2023}. 
Central to these approaches is the determination of an operator basis (via a Lanczos or, for dissipative systems, a bi-Lanczos algorithm) relative to which the (vectorized) generator $\Lc$ attains a tridiagonal form, such that the evolution of observables is mapped to a (generally non-Hermitian) tight-binding model on a ``Krylov lattice'', with Lanczos coefficients obeying closed recurrence relationships \cite{nandy2024quantumdynamicskrylovspace}. Although, on a formal level, such Krylov subspaces are related to the observable subspaces we consider, the resulting MR is, in general, approximate. More importantly, in contrast with our approach, translating the positivity constraints of general density operators on the structure of the Lanczos coefficients and the associated dynamics is non-trivial even for Hamiltonian evolution, and understanding how general CPTP constraints manifest in the Krylov representation of a Lindbladian remains an open problem as yet.
    
The fact that MR returns a valid quantum model is crucial for reducing the demands of simulation on a quantum processor. 
Nonetheless, simpler forms of MR suffice if the aim is to perform simulations of quantum systems using classical computers. In this vein, a data-driven approach based on dynamic-mode decomposition was introduced in \cite{goldschmidt2021bilinear} for predicting the behavior of the target system in the presence of control, without {\em a priori} obtaining an accurate characterization. Likewise, a scheme for obtaining approximate expectation values of Pauli observables was presented in \cite{rudolph2023classical}, based on combining a classically constructed ``surrogate'' landscape with truncation techniques.  As in our case, assuming a specified set of initial conditions and output observables plays a key role in these (approximate) approaches. A fundamental difference, however, stems from the fact these ``non-quantum'' MR approaches only aim to a more efficient classical simulation of the output, rather than a proper reduced-order effective model that can be used to represent the dynamics at all times. 

\subsection{Structure and summary of main results}
\label{sub:structure}

The dynamical models we focus on are formally introduced in Sec.\,\ref{sec:problems}, in terms of a Lindblad master equation (in the Schr\"{o}dinger or dual, Heisenberg picture) {paired with a linear output equation} -- which we term a {\em quantum-dynamical semigroup with output}. After exemplifying some basic scenarios of interest that our formalism encompasses, we pose two dimensional MR problems, distinguished by the type of reduced model one is seeking: a reduced CPTP quantum dynamics or, respectively, a general linear dynamical system -- in such a way that the specified outputs are {\em exactly} reproduced for all times. The essential idea we use for obtaining the desired reduction is also anticipated therein (Sec.\,\ref{sub:reduce}): namely, the reduction of the dynamics to an appropriate Krylov operator subspace. 

The next two sections of the paper, Sec.\,\ref{sec:reachable} and Sec.\,\ref{sec:obmr}, are devoted to introduce the mathematical apparatus needed for tackling and solving the MR problems we posed. In particular, the requisite Krylov operator subspaces are built by leveraging the available knowledge of the initial conditions of interest for the system -- formalized in the notion of a {\em reachable subspace}, introduced in Sec.\,\ref{sec:reachablelin} -- and the target output observables for the dynamics -- formalized in the dual notion of an {\em observable subspace}, constructed in Sec.\,\ref{sec:non-observable_ct}). In order to extend such subspaces to operator subalgebras, amenable to support quantum dynamics, some existing structural results on operator $*$-algebras and their associated quantum conditional expectations and ``distortion maps'' are needed; these are recalled in Sec.\,\ref{sec:algebraic_model_reduction}.

In essence, our key results in these theory sections may be summarized as follows.

$\bullet$ {\em Linear reductions.} The solution to the linear MR problem is derived in Theorem \ref{thm:linear_reachable_model_reduction} based on the reachable subspace. Informally, we show that a linear dynamical model of provably minimal dimension and able to reproduce the full state evolution is obtained by ``discarding degrees of freedom'' that are not reached by the dynamics -- formally, by projecting the dynamics onto the Krylov operator subspace generated by the initial conditions. Likewise, by assuming arbitrary initial conditions, a reduced linear dynamical model that is also provably optimal and able to reproduce the full evolution of observables is obtained in Theorem \ref{thm:linear_observable_model_reduction} by discarding degrees of freedom that have no observable output -- formally, by projecting the dynamics onto the Krylov subspace orthogonal to unobservable operators. These two MRs can be combined together to yield a minimal realization of the dynamics. If $n$ is the dimension of the original state space, the reduced model can be obtained with a {\em single} run of an algorithm whose complexity scales may be shown to scale roughly as $n^8$. 

$\bullet$ {\em Quantum reductions.} The solution to the quantum MR problem requires, as mentioned, that the reduced dynamics be themselves defined on an operator $*$-algebra. We obtain such algebras by essentially ``closing'', in an appropriate mathematical sense, the reachable and observable subspaces we mentioned above. If the relevant $*$-algebra, say, $\As$, is smaller than the full operator algebra associated to the system, the key feature we leverage to achieve MR stems from a well-known structural characterization that $\As$ enjoys, the {\em Wedderburn decomposition} \cite{wedderburn1908hypercomplex, arveson}. Pictorially, the idea is illustrated in Fig.\,\,\ref{fig:block-diagonal_representation}: since every $*$-algebra can be seen as the direct sum of possibly repeated irreducible blocks, a smaller-dimensional representation may be obtained by ``compressing'' $\As$ into a reduced $*$-algebra, say, $\check{\As}$, for which no repeated block appears. In Theorem \ref{thm:Lindblad reduction}, we formally prove that a Lindblad generator ${\cal L}$ 

retains its CPTP properties when appropriately restricted to an operator subalgebra in this way, and a reduced Lindblad generator $\check{\cal L}$ may be obtained from ${\cal L}$ through composition with suitable ``reduction'' and  ``injection'' maps.  A constructive algorithm is proposed, to compute the desired reduction starting from the specified set of initial conditions. Notably, the initial states can belong to an {\em arbitrary} linear set, without the need to assume initial factorization in the case a composite system-environment setting, as in standard derivations of master equations \cite{breuer2002theory}. In Sec.\,\ref{sec:observablered}, we derive an equivalent result for reductions based on the observable set of interest, and the two procedures are combined in an iterative algorithm described in Sec.\,\ref{sec:jointred}. In a worst-case scenario, {\em each run} of this algorithm still scales as $O(n^8)$. 

\begin{figure}[t]
\centering
    \includegraphics[width=.9\columnwidth]{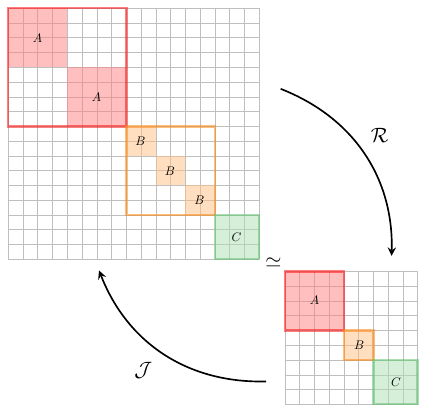}
\caption{\normalfont 
Pictorial illustration of the basic principle enabling quantum MR. The Wedderburn decomposition of a representative  operator $*$-algebra, $\As = ( \one_2\otimes\Cb^{4\times 4})\bigoplus(\one_3\otimes\Cb^{2\times2})\bigoplus \Cb^{3\times3}\subseteq\Cb^{17\times17}$, is depicted in the top, while its compressed representation, $\check{\As} = \Cb^{4\times 4}\bigoplus\Cb^{2\times2}\bigoplus \Cb^{3\times3}\subseteq\Cb^{9\times9}$, is given at the bottom. The action of the reduction and injection maps, $\Rc$ and $\Jc$, is also shown, with formal definitions given in Sec.\,\ref{sec:algebraic_model_reduction} and \ref{sec:lindblad_reduction}.}
\label{fig:block-diagonal_representation}
\end{figure}

Section\,\ref{sec:symred} bridges between the general theory sections and the illustrative applications that will follow, by focusing on the interplay of our general MR procedures with symmetries which, for Lindblad dynamics, are usually categorized as ``strong'' or ``weak'' -- depending on whether, loosely speaking, invariance occurs at the level of individual (Hamiltonian and Lindblad) operators or only at the superoperator level \cite{Viktor,Buca_2012}. We show how viable subspaces and algebras for MR naturally emerge from considering symmetries, and characterize the sense in which a strong symmetry leads to a computational advantage over a weak one, from a dimensional-reduction standpoint. 
Nonetheless, focusing on observable-based MR, we prove in Theorem \ref{maxsymm} that our methods can improve reductions based on both strong and weak symmetries. In fact, finding the relevant output algebra by the procedure described above corresponds to finding ``generalized symmetries'' that emerge only when considering the observables of interest: this new class of {\em observable-dependent symmetries} is defined in Sec.\,\ref{sec:ods}.

\smallskip

In Sec.\,\ref{sec:examples}, we examine concrete examples motivated by condensed-matter physics and quantum information science applications, with a twofold goal: on the one hand, to benchmark our general approach, by showing that it consistently recovers known exact results when available, or it yields the correct numerical results of the full model, as expected; on the other hand, to showcase additional flexibility and resource reduction, \emph{beyond} what achievable with existing methods to the best of our knowledge. Specifically, in Sec.\,\ref{sec:examples_central_spin} we study variations of the well-known central-spin model, beginning with the case where the dynamics is Hamiltonian (no dissipation) and enjoys permutation symmetry on the bath spins. Here, we observe how the introduction of a bath Hamiltonian $H_B$ influences the achievable MR, by showing how, even when $H_B$ breaks permutation symmetry, a useful observable-dependent symmetry group may be identified for physically relevant observables, thereby leading to a reduced model of the same size (Sec.\,\ref{sec:examples_intrabath couplings}). The reduced models may be able to be scaled well-beyond the maximal manageable size of the full model: for example, in some cases we are able to simulate up to 46 bath spins on a typical laptop. We further consider the effect of Markovian dissipation on the bath spins (Sec.\,\ref{sec:examples_central_spin_dissipative}), by contrasting the degree of MR attainable under permutation-invariant (collective) noise operators, in which case a strong symmetry is present, with the one attainable if the dissipation is local, in which case only a weak symmetry remains. The reduced models we obtain are refined in Sec.\,\ref{sec:examples_central_spin_reachable}, where we show how further reachable-based MR may be obtained by considering specific initial states. 

As a second example, in Sec.\,\ref{sec:examples_XXZ} we analyze a boundary-driven XXZ spin chain. Here, we again demonstrate how observable-based MR can be applied to studying physical quantities of interest to quantum transport, beyond the steady-state regime that is usually considered in the literature. We also again contrast the level of MR that is achievable depending on whether a weak or strong symmetry is present in the dynamics. Finally, in Sec.\,\ref{sec:examples_encoding} we consider the smallest noiseless-subsystem code for a non-Abelian class of (collective) errors, based on three physical qubits \cite{KLV}, and illustrate how MR can be implemented on the \emph{logical} level, to obtain a more efficient description of the error dynamics that an encoded qubit undergoes as a result of unwanted error terms. In doing so, we provide another example of MR afforded by an observable-dependent symmetry which is not a (weak) symmetry of the dynamics. 

We conclude in Sec.\,\ref{sec:end} by iterating our key findings and their significance, and by discussing some important open questions -- along with an outlook to future steps. Technical proofs of mathematical results stated in the main text are provided in Appendix\,\ref{appendix:supplementary_results}, whereas considerations on the implementation complexity of our MR procedures are given in Appendix\,\ref{app:complexity}.
Appendixes\,\ref{centralspin} and \ref{sec_appendix_symm} expand, respectively, on MR derivations and symmetry properties of some of the examples we presented in Sec.\,\ref{sec:examples}.

\section{Quantum semigroups with outputs}

\label{sec:problems}

\subsection{Dynamical settings and problem statement}

Throughout this work, we focus on Markovian, time-independent continuous-time quantum dynamics taking place on a Hilbert space $\Hc$ of finite dimension, $n<\infty$, with $\Bf(\Hc)$ denoting the associated space of linear operators. Thus, the dynamics are described by a continuous one-parameter semigroup of CPTP quantum maps $\{\Tc_t\}_{t\geq0}$, with $\Tc_0 = \Ic$ being the identity map and the (homogeneous) composition property $\Tc_t\circ\Tc_s = \Tc_{t+s}$ holding for all $t,s \geq 0$ or, in differential form, by the Markovian generator $\Lc$ \cite{breuer2002theory, alicki-lendi}.  

Let $\Hf(\Hc)$ and $\Df(\Hc) \subset \Hf(\Hc)$ denote the set of self-adjoint (Hermitian) operators and the (convex) set of density operators on $\Hc$. Then {\em for any initial condition} $\rho_0\in \Df(\Hc)$ the time-evolved state in the Schr\"{o}dinger picture, $\rho(t)=\Tc_t(\rho_0)=e^{\Lc t}(\rho_0)$, is the solution of the master equation
\begin{equation}
    \label{eqn:Lindblad}
\dot\rho(t)=\Lc[\rho(t)], \quad t \geq 0, 
\end{equation} 
where is well-known  \cite{lindblad1976generators,Gorini:1975nb} that $\Lc$ takes the following canonical Lindblad form (in units $\hbar =1$): 
\begin{align}
     \Lc (\rho) & \equiv  -i [H,\rho] + \sum_{u}\Dc_{L_u}(\rho) \notag \\
     	& = -i [H,\rho] + \sum_u \Big( L_u \rho L_u^\dag -\frac{1}{2}\{L_u^\dag L_u,\rho\} \Big).
    \label{eqn:Lindblad_generator}
\end{align}
Here, $H=H^\dag$ describes the Hamiltonian contribution to the dynamics and Markovian dissipation is characterized in terms of linear operators $\{L_u\}$ on $\Hc$ which we refer to as noise (or Lindblad) operators. 
Closed-system, unitary dynamics on $\Hc$ is recovered in the limit where all the noise operators (hence the dissipators) vanish.  
Equivalently, the dynamics may be described in the Heisenberg picture by considering the evolution of observables, $X=X^\dag$, under the dual generator $\Lc^\dag$:
\begin{equation}
\dot{X}(t)= \Lc^\dag [X(t)], \quad t\geq 0, 
\label{eqn:Heis}
\end{equation} 
with $ \Lc^\dag (X) =  i [H,X] + \sum_u \big( L_u^\dag X L_u -\frac{1}{2}\{L_u^\dag L_u, X\} \big).$
The solutions of Eq.\,\eqref{eqn:Lindblad} (or, respectively, Eq.\,\eqref{eqn:Heis}) provide access to the time-dependent expectation values of {\em any observable} $X \in \frak{B}(\Hc)$, starting from {\em any initial condition} $\rho_0$: 
$$\langle X(t)\rangle = \tr (X \rho(t)) = 
\tr (X e^{\Lc t}[\rho_0]) = \tr( e^{\Lc^\dag t}[X]\rho_0).$$ 

Physically, the above framework may either directly describe an $n$-dimensional open quantum system evolving on $\Hc$, subject to purely Markovian dissipation, or it allows for Hamiltonian coupling to a quantum ``bath''
to be included -- by considering a bipartition $\Hc \equiv \Hc_S\otimes\Hc_B$, with dim$(\Hc_S)\equiv n_S$, and by defining the reduced state of the system alone via the partial trace operation, $\rho_S(t) \equiv \tr_B(\rho(t))$. For the important case of {\em local} observables on $\Hc_S$, $X \equiv X_S \otimes \one_B$, the relevant time-dependent expectation values may be equivalently computed as 
$$ \langle (X_S \otimes \one_B)(t) \rangle = \tr(X \rho(t)) = \tr_S (X_S \rho_S(t)),$$
with the reduced dynamics of $\rho_S(t)$ being non-unitary (and generally {\em non-Markovian}) even in the case where $\Dc_{L_u}\equiv 0$.  
Regardless, in many situations of fundamental and practical relevance, interest may be {\em a priori} restricted to a subset of initial input states, and a subset of output quantities that depend upon the final, time-evolved state $\rho(t)$ and may be directly associated to or required for computing experimentally accessible properties. For instance, one may want to be able to predict: 

(i) the probabilities associated to the measurement outcomes of a specific observable of interest, say, $X\in\Hf(\Hc)$; 

(ii) the expectation values of a finite set of observables, say, $\{X_i \}\subset\Hf(\Hc)$; 

(iii) the reduced state $\rho_S(t)$ of a subsystem coupled to a quantum bath as above or, more generally, the reduced state of a subsystem of interest, say, $\rho_1(t)$, in a multipartite setting, whereby $\Hc=\bigotimes_j \Hc_j$ and $\rho_1(t)$ is obtained via a partial trace operation on all but the first subsystem, $\rho_1(t)\equiv \tr_{\overline{\Hc_1}} (\rho(t))$.  

Borrowing from system-theoretic terminology, we can think of all of these quantities as \textit{outputs} of the model and view them as a functional of the state, $Y(t) \equiv \Oc[\rho(t)]$, in terms of an {\em output map} $\Oc:\Df(\Hc)\to\Ys,$
with $\Ys \subseteq  \Bf(\Hc)$ being an appropriate operator subspace.
In the above cases, for instance, it is easy to see that the following identifications may be made: (i) $\Oc[\rho(t)] = \sum_j \bm e_j \tr[\pi_j \rho(t)]$, where $\pi_j$ are the spectral projectors of the target observable, $X = \sum_j x_j \pi_j$, $x_j\in {\mathbb R}$, and $\{\bm e_j\}$ the standard basis for $\Ys=\Rb^{|\{\pi_j\}|}$; 
(ii) $\Oc[\rho(t)] = \sum_i \bm e_i \tr[X_i\rho(t)]$, where now $\Ys=\Rb^{|\{X_i\}|}$; 
(iii) $\Oc[\rho(t)] = \tr_B[\rho(t)]$ and $\Ys=\Bf(\Hc_S)$, and similarly if $\Hc$ is multipartite. 
We also stress that, although nonlinear functions of the state $\rho(t)$ are {\em not} considered outputs in this work, 
one can retrieve any nonlinear functions of the output $Y(t)$ by composing the output map with the nonlinear function of interest. For example, in the bipartite system setting, we may obtain the von Neumann entropy of the reduced state by considering $\Hs\circ\tr_B$, with $\Hs(\rho_S) \equiv -\tr(\rho_S\ln \rho_S)$, or similarly for the purity, ${\mathscr{P}}(\rho_S) \equiv \tr(\rho_S^2)$.

More precisely, with the above considerations in mind, in this work we consider models where one is interested in: 
\begin{itemize}
    \item {\em A subset of initial conditions}. Specifically, we shall assume $\Sf \subseteq\Df(\Hc)$ to be either a finite set, or a finitely-generated convex or linear set in $\Df(\Hc)$; 
    \item {\em A subset of output quantities}. Specifically, we shall assume that the outputs of interest depend {\em linearly} on the state $\rho(t)$ of the system. 
The structure of the  output map then takes the following general form:
\begin{eqnarray}
\Oc(\cdot) = \sum_i E_i \tr(O_i^\dag \cdot) ,
\label{eq:omap}
\end{eqnarray}
where $\{E_i\}$ is an orthogonal basis for $\mathscr{Y}$, and $\{O_i\}$ is a finite set of (not necessarily Hermitian) operators on $\Hc$.
\end{itemize} 

Thus, the class of models of interest will consist of a dynamical equation, Eq.\,\eqref{eqn:Lindblad} or \eqref{eqn:Heis}, along with an output equation as in Eq.\eqref{eq:omap}. However, instead of defining the relevant dynamics on the {\em full} operator space ${\mathfrak{B}(\Hc})$, it will be convenient for our MR purposes to allow for restriction to {\em operator $*$-subalgebras}, which provide the most general setting on which physically admissible CPTP evolutions can be defined \cite{accardi1982quantum}, and support descriptions of quantum-classical hybrid systems \cite{barchielli2023markovian,Dammeier2023quantumclassical}, as well as general quantum information encodings \cite{knill-encoding,ticozzi2010quantum}.
Formally, we introduce the following: 

\begin{definition}
Let $\As$ be a $*$-subalgebra and  $\Ys \subseteq {\mathfrak{B}(\Hc})$ an operator space. Given a Lindblad generator $\Lc:\As\mapsto\As$ as in Eq.\,\eqref{eqn:Lindblad_generator}, a linear output map $\Oc:\As\mapsto\Ys$ as in Eq.\,\eqref{eq:omap}, and a set of initial conditions $\Sf\subseteq\Df(\Hc)\bigcap\As$, a {\em quantum dynamical semigroup with output} (QSO) is defined as:
\begin{equation}
    \begin{cases}
        \dot{\rho}(t) = \Lc[\rho(t)] \\
        Y(t) = \Oc[\rho(t)]
    \end{cases}\!\!\!, \quad \rho(0)\in\Sf.
    \label{eqn:QHM_model_ct}
\end{equation}
\end{definition}

Standard semigroups dynamics are recovered if we consider $\As=\frak{B}(\Hc)=\Ys$ and $\Oc = \Ic$, in which case the operators  $\{O_i\}$ in \eqref{eq:omap} form an orthonormal basis for the full operator space. Notably, the QSO class of models defined above can be seen as quantum, continuous-time version of classical {\em hidden-Markov models} \cite{vidyasagarHiddenMarkovProcesses2011}, \cite{tac2023,tit2023}. These models do not include the conditioning effects (back-action) that are associated with a quantum measurement, and describe the expectations of the operators $O_i$ if measurements were effected only at the final time $t$, following evolution from $t_0=0.$ While the effect of multiple, intermediate measurements can be studied within a similar framework to the one we present here \cite{conditional}, {\em unconditional, one-time dynamics} already afford relevant applications, while allowing for the basic framework to be laid out in a simpler way. The fact that in a QSO model we are not considering every initial condition and we are only interested in reproducing certain output quantities raises a natural question: {\em Is it possible to find a smaller model that reproduces exactly the same output trajectories, for all times?}

As stressed in the introduction, a proper quantum structure is both desirable or essential in several contexts. The key problem we solve in this work may be formulated as follows:
 
\begin{problem}
{\bf Continuous-time quantum MR.}
\label{prob:quantum_model_reduction_ct}
Given a QSO $(\Lc,\Oc,\Sf)$ as in Eq.\,\eqref{eqn:QHM_model_ct}, find the (minimal) {\em quantum} model $(\check{\Lc},\check{\Oc},\check{\Sf})$ defined on an algebra $\check {\As},$ with (minimal) $\textrm{dim}(\check {\mathscr A})<\textrm {dim}({\mathscr A})$, and a linear map $\mathcal{R}:\As\to\check\As$, such that for any initial condition $\rho_0\in\Sf$, we have
\[ \Oc e^{\Lc t} (\rho_0) = \check{\Oc} e^{\check{\Lc} t}\Rc (\rho_0), \quad \forall t\geq0.\] 
\end{problem}

In solving this problem, a simpler problem may be formulated, that is of intrinsic interest and useful in applications as well. Namely, we may just want to construct the minimal {\em linear} system, not necessarily of QSO form, that reproduces exactly the output of the system of interest, without imposing any physical consistency constraint. This is very natural, for example, when one looks for the smallest linear system that allows for simulation of a quantum model on a {classical} computer.  A linear MR problem may then be stated as follows:

\begin{problem}
{\bf Continuous-time linear MR.} 
\label{prob:linear_model_red_ct}
Given a QSO $(\Lc,\Oc,\Sf)$ as in Eq.\,\eqref{eqn:QHM_model_ct}, find the (minimal) {\em linear} model $(\Fc_L, \Oc_L,\Vs)$, defined on an operator subspace $\Vs$, with (minimal) $\textrm{dim}(\Vs)\leq \textrm {dim}(\Hc)^2$, of the form
\begin{equation}
    \begin{cases}
        \dot{\xi}(t) = \Fc_L[\xi(t)] \\ Y(t) = \Oc_L[\xi(t)]
    \end{cases}\!\!\!, \quad \xi(0)  \in \Vs , 
    \label{eqn:lin_model}
\end{equation}
and a linear map $\mathcal{R}:\Bf(\Hc)\to\Vs$ such that, for any initial condition $\rho_0\in\Sf$, we have
\[ \Oc [\rho(t)] = \Oc e^{\Lc t}( \rho_0) = \Oc_L e^{\Fc_L t} \Rc (\rho_0),  \quad \forall t \geq 0.\]
\end{problem}

Note that the state $\xi(t)$ of such a linear model need not to have any physical meaning. As an operator, it nonetheless contains all the required information that is necessary to ensure that the outputs of the full and reduced model exactly coincide. As we describe in the following sections, the solution of this relaxed problem is instrumental to solving Problem \ref{prob:quantum_model_reduction_ct}, as it allows one to extract the minimal resources needed for reproducing $Y(t)$ when starting from $\rho_0\in\Sf$.

\subsection{Reductions to operator subspaces}
\label{sub:reduce}

Our general strategy to accomplish MR is inspired by the projection-based MR approach in the control literature \cite{antoulas}. That is, we proceed by first restricting the dynamics to suitable Krylov operator subspaces, which include a set of variables sufficient to reproduce the evolution of the output of interest from the relevant initial conditions, and then by ``closing'' these subspaces to operator $*$-algebras.

Assume that a target subspace, say, $\Ws\subseteq\Bf(\Hc)$ of dimension $d$, has been identified, and let $\Pi=\Pi^2$ be a (not necessarily orthogonal) projection superoperator of rank $d$ on such a subspace. For any $d$-dimensional space $\cal X$ isomorphic to $\Ws$,  $\Pi$ can be factorized as 
\begin{equation}
    \Pi={\cal J} \circ {\cal R},
 \label{eq:fact} 
 \end{equation} 
where the {\em reduction map} ${\cal R}:\frak{B}(\Hc)\rightarrow {\cal X}$ and the {\em injection map} ${\cal J}:{\cal X}\rightarrow \Ws$ are such that ${\cal R}\circ{\cal J}={\cal I}_{\cal X},$ the identity super-operator on ${\cal X}$. Then a QSO of the form \eqref{eqn:QHM_model_ct} can be reduced to ${\cal X}$ by defining:
$${\cal L}_{\cal X} \equiv {\cal R}\circ {\cal L}\circ {\cal J},\quad \Oc_{\cal X}\equiv \Oc\circ {\cal J},$$
and associating to each $\rho\in \frak{S}$ a corresponding $\rho_{\cal X}={\cal R}(\rho).$ Of course, in general such a reduced model will not reproduce correctly the output of the initial model, nor will it maintain the requisite CPTP properties.
The following sections focus on constructing $\Ws$ and ${\cal X}$ so that the resulting reduced model does solve the problems we posed above. 

\section{Reachable-based model reduction}
\label{sec:reachable}

In this section we start to examine the MR Problems \ref{prob:quantum_model_reduction_ct} and \ref{prob:linear_model_red_ct} in a particular case, namely, when the output map of interest $\Oc$ is the identity superoperator. In this scenario, the reduced QSO must be able to reconstruct \emph{all} the states and expectations that are attained by the original model, and the reduction can leverage only the knowledge of the initial condition set.

\subsection{Minimal reachable linear models and their limitations}
\label{sec:reachablelin}

Given an initial condition $\rho_0\in\Sf$, a trajectory $\mathfrak{T}_{\rho_0}$ is the set of density operators that the system's state assumes as a function of time under the specified generator, i.e., $\mathfrak{T}_{\rho_0} = \{\rho(t) = e^{\Lc t}[\rho_0], t\geq 0\}$. By extension, we will denote the set of states belonging to trajectories starting from initial conditions in $\Sf$ as $\mathfrak{T}_\Sf = \bigcup_{\rho_0\in\Sf}\mathfrak{T}_{\rho_0}$. The reachable space $\Rs$ is then the smallest operator subspace generated from $\mathfrak{T}_\Sf$: 

\begin{definition}[Reachable subspace]
Given a QSO $(\Lc,\Oc,\Sf)$ as in Eq.\,\eqref{eqn:QHM_model_ct}, the {\em reachable subspace from $\Sf$} is the operator subspace 
    \begin{equation}
        \Rs \equiv \Span \{e^{\Lc t} (\rho_0),\, t\geq 0,\, \rho_0\in\Sf \} \subseteq  \Bf(\Hc). 
        \label{eqn:reachable_space_ct}
    \end{equation}
\end{definition}

Note that the reachable subspace $\Rs$ is effectively a {\em Krylov operator subspace} \cite{kry31} and, from a system-theory viewpoint, it can be seen as the subspace reachable {\em from the input} of a fictitious linear input-output system (see e.g., \cite[Sec.\,III.A]{cdc2022}).  As in the well-studied classical case, $\Rs$ enjoys some relevant properties that are summarized in the following proposition, whose proof can be found in Appendix \ref{appendix:supplementary_results}:

\begin{proposition}
    \label{prop:reachable_characterization}
    Given a QSO $(\Lc,\Oc,\Sf)$ as in Eq.\,\eqref{eqn:QHM_model_ct}, defined on a subalgebra $\As\subseteq \frak{B}(\Hc),$ with $n=\dim(\Hc)$, the reachable space can be computed as
    \begin{equation}
        \Rs = \Span\{\Lc^i(\rho_0),\, i=0,1,\dots,n^2-1,\,\rho_0\in\Sf\}.
    \end{equation}
    Moreover, $\Rs$ is the smallest $\Lc$-invariant (and $e^{\Lc t}$-invariant) subspace containing $\Span\{\Sf\}$.
\end{proposition}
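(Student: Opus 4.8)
The plan is to recognize this as the continuous-time analogue of the classical controllability/Krylov characterization and to establish it in two stages: first the explicit formula, proved by double inclusion, and then the minimality/invariance statement, which follows almost immediately from the formula. Throughout, the structural fact I will exploit is that $\Bf(\Hc)$ is finite-dimensional, $\dim\Bf(\Hc)=n^2$, so that every operator subspace is closed and the Cayley--Hamilton theorem caps the length of any Krylov sequence at $n^2$. Writing $\Kc \equiv \Span\{\Lc^i(\rho_0): i=0,\dots,n^2-1,\ \rho_0\in\Sf\}$ for the claimed right-hand side, the goal is to show $\Rs=\Kc$.

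For the inclusion $\Rs\subseteq\Kc$, I would expand $e^{\Lc t}(\rho_0)=\sum_{k\geq0}\tfrac{t^k}{k!}\Lc^k(\rho_0)$ and observe that, by Cayley--Hamilton applied to $\Lc$ acting on the (at most $n^2$-dimensional) space $\As$, every power $\Lc^k$ with $k\geq n^2$ is a linear combination of $\Lc^0,\dots,\Lc^{n^2-1}$; hence each partial sum lies in $\Kc$, and since $\Kc$ is finite-dimensional and therefore closed, so does the limit $e^{\Lc t}(\rho_0)$. Taking spans over $t\geq0$ and $\rho_0\in\Sf$ gives $\Rs\subseteq\Kc$. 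For the reverse inclusion $\Kc\subseteq\Rs$, I would recover the individual powers from the trajectory by differentiation at the origin, $\Lc^k(\rho_0)=\tfrac{d^k}{dt^k}e^{\Lc t}(\rho_0)\big|_{t=0}$, noting that each derivative is a limit of finite difference quotients of elements $e^{\Lc t}(\rho_0)\in\Rs$ and therefore remains in the closed subspace $\Rs$. (Equivalently, one may sample the trajectory at $n^2$ distinct times and invert a Vandermonde-type system, but the derivative argument sidesteps any subtlety with repeated eigenvalues.) This yields $\Lc^k(\rho_0)\in\Rs$ for all $k$, hence $\Kc\subseteq\Rs$, completing $\Rs=\Kc$.

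For the second claim, I would first note that $\Lc$-invariance and $e^{\Lc t}$-invariance of a finite-dimensional subspace $\Vs$ are equivalent: one direction follows from $e^{\Lc t}(\Vs)\subseteq\Vs$ whenever $\Lc(\Vs)\subseteq\Vs$ (power series plus closedness), the other from $\Lc(v)=\tfrac{d}{dt}e^{\Lc t}(v)\big|_{t=0}$. Using the formula $\Rs=\Kc$, invariance of $\Rs$ is immediate since $\Lc(\Rs)=\Span\{\Lc^k(\rho_0):k=1,\dots,n^2\}\subseteq\Rs$, the top term $\Lc^{n^2}(\rho_0)$ re-entering by Cayley--Hamilton, while $\Span\{\Sf\}\subseteq\Rs$ because $\rho_0=e^{\Lc\cdot0}(\rho_0)$. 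Minimality then follows because any $\Lc$-invariant $\Vs\supseteq\Span\{\Sf\}$ must contain every $\Lc^k(\rho_0)$, hence all of $\Kc=\Rs$.

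The only point demanding genuine care --- and the main obstacle --- is the passage between the continuous object $\{e^{\Lc t}(\rho_0)\}$ and the discrete Krylov generators $\{\Lc^k(\rho_0)\}$ in both directions. This is precisely where finite-dimensional closedness and the Cayley--Hamilton truncation do the essential work, and where a naive argument could overlook the convergence of the exponential series into the subspace or the need to bound the Krylov length a priori.
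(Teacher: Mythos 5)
Your proof is correct, and it reaches the same conclusion as the paper's by a genuinely different (primal) route. The paper argues through duality: it identifies the set reached at time $t$ with the image of the superoperator $e^{\Lc t}\Pi_{\Ss}$ (where $\Ss=\Span\{\Sf\}$), rewrites that image as the orthogonal complement of $\ker(\Pi_{\Ss}^\dag e^{\Lc^\dag t})$, and characterizes this kernel by the vanishing of all $\Pi_{\Ss}^\dag\Lc^{\dag i}(X)$, invoking Cayley--Hamilton on the dual generator to truncate at $i=n^2-1$. You instead prove the two inclusions directly: the exponential series plus Cayley--Hamilton and closedness of the finite-dimensional Krylov space $\Kc$ give $\Rs\subseteq\Kc$, and one-sided differentiation of the trajectory at $t=0$ inside the closed subspace $\Rs$ gives $\Kc\subseteq\Rs$. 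Both arguments turn on the same Cayley--Hamilton truncation, and the invariance/minimality part is handled essentially identically. Your version is more elementary and, arguably, more careful on one point: the paper's kernel characterization asserts an ``if and only if'' between the vanishing of a power series at a fixed $t$ and the vanishing of all its coefficients, which strictly requires quantifying over $t$ in an interval (and underlies the paper's stronger side claim that the span of states reached at each \emph{fixed} time is already $t$-independent); your argument only ever uses the union of the trajectory over all $t\geq 0$ and recovers the individual powers $\Lc^k(\rho_0)$ by differentiation, which sidesteps that subtlety entirely. What the paper's dual formulation buys in exchange is a uniform template that transfers verbatim to the non-observable subspace $\Ns$ in the Heisenberg picture, which is how the companion result is stated there.
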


In linear system theory \cite{kalman1969topics, wonham, marrobasile}, 
it is well known that MR can be obtained by restricting the dynamics to the subspace that is reachable from the input. The next theorem extends this result to {\em any space that contains the space reachable from the initial state}, stated for the particular case of time-invariant Lindblad dynamics we consider. Such an extension will be instrumental to constructing a solution for Problem \ref{prob:quantum_model_reduction_ct}.

\begin{theorem}
\label{thm:linear_reachable_model_reduction}
Consider a QSO $(\Lc,\Oc, \Sf)$, defined on a sub-algebra $\As\subseteq\Bf(\Hc)$, and its reachable subspace $\Rs$. Let $\Vs$ be an operator subspace that contains the reachable space, i.e., $\Rs\subseteq\Vs$, with $\Pi_\Vs$ denoting a (non-necessarily orthogonal) projector superoperator on $\Vs$. Let $\rs$ and $\es$ be two full-rank factors  of $\Pi_\Vs$, i.e., $\es\rs = \Pi_\Vs$, and $\rs\es = \Ic_{\Vs}$, and define $\Fc_L \equiv \rs\mathcal{L}\es,$ $\Oc_L=\Oc\Jc$ and $\Sf_L \equiv \rs(\Sf)$. We then have 
\begin{equation}
    e^{\Lc t}(\rho_0) = \es e^{\Fc_L t}\rs (\rho_0), \quad  \forall t\geq0, \forall \rho_0\in\Sf.
    \label{eq:Rreduction}
    \end{equation} 
Moreover, $\Vs = \Rs$ is an operator subspace of \emph{minimal} dimension for which Eq.\,\eqref{eq:Rreduction} holds.
\end{theorem}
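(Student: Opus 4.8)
The plan is to establish the exponential identity \eqref{eq:Rreduction} power by power and then sum the Taylor series $e^{\Fc_L t}=\sum_k t^k \Fc_L^{\,k}/k!$; the minimality claim will then follow from a short dimension-counting argument. The two structural facts I would lean on are: (i) since $\rho_0\in\Sf\subseteq\Rs\subseteq\Vs$, the projector acts trivially, $\Pi_\Vs(\rho_0)=\rho_0$; and (ii) by Proposition~\ref{prop:reachable_characterization}, $\Rs$ is $\Lc$-invariant, so $\Lc^k(\rho_0)\in\Rs\subseteq\Vs$ and hence $\Pi_\Vs\big(\Lc^k(\rho_0)\big)=\Lc^k(\rho_0)$ for every $k$.

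The key computation is to collapse the powers of $\Fc_L$. Using $\Fc_L=\rs\Lc\es$ together with the factorization relations $\es\rs=\Pi_\Vs$ and $\rs\es=\Ic_\Vs$, repeated cancellation of the inner factors $\es\rs$ gives
\begin{equation}
  \es\,\Fc_L^{\,k}\,\rs=(\Pi_\Vs\Lc)^k\,\Pi_\Vs ,
\end{equation}
with $k$ copies of $\Lc$ interleaved by projectors. I would then argue by induction on $k$ that $(\Pi_\Vs\Lc)^k\Pi_\Vs(\rho_0)=\Lc^k(\rho_0)$: the base case $k=0$ is fact (i), and the inductive step uses fact (ii) to remove the leftmost projector, $\Pi_\Vs\Lc\big(\Lc^{k}\rho_0\big)=\Lc^{k+1}\rho_0$, since $\Lc^{k+1}\rho_0\in\Rs\subseteq\Vs$. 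Summing over $k$ then yields $\es e^{\Fc_L t}\rs(\rho_0)=\sum_k \tfrac{t^k}{k!}\Lc^k(\rho_0)=e^{\Lc t}(\rho_0)$, which is \eqref{eq:Rreduction}. The output identity $\Oc e^{\Lc t}=\Oc_L e^{\Fc_L t}\rs$ is then immediate upon composing with $\Oc$ and identifying $\Oc_L=\Oc\es$.

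For minimality, I would take any subspace $\Vs$ admitting full-rank factors $\rs,\es$ of a projector onto it for which \eqref{eq:Rreduction} holds, and observe that each reachable state is forced into $\Vs$: indeed $e^{\Lc t}(\rho_0)=\es\big(e^{\Fc_L t}\rs(\rho_0)\big)\in\img\es=\Vs$ for all $t\ge 0$ and all $\rho_0\in\Sf$. Hence $\mathfrak{T}_\Sf\subseteq\Vs$, and taking spans gives $\Rs\subseteq\Vs$, so that $\dim(\Rs)\le\dim(\Vs)$. Since $\Rs$ is itself $\Lc$-invariant, the first part applies with $\Vs=\Rs$ (choosing, e.g., the orthogonal projector onto $\Rs$ for the factorization), so $\Rs$ attains this lower bound and is therefore of minimal dimension.

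The step I expect to require the most care is the collapse identity together with its inductive consequence, precisely because $\Vs$ is only assumed to contain $\Rs$ and need \emph{not} itself be $\Lc$-invariant. What makes the interleaved projectors harmless is that the trajectory $\{\Lc^k(\rho_0)\}_k$ never leaves $\Rs$, so each $\Pi_\Vs$ acts as the identity on the relevant argument. It is therefore essential that the invariance in fact (ii) holds for $\Rs$, and not merely for $\Vs$; this is the role played by Proposition~\ref{prop:reachable_characterization} and constitutes the conceptual heart of the reduction.
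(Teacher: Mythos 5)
Your proof of the identity \eqref{eq:Rreduction} is essentially the paper's: both expand $e^{\Fc_L t}$ term by term, collapse $\es\,\Fc_L^{\,k}\,\rs$ into interleaved projectors via $\es\rs=\Pi_\Vs$, and then use the $\Lc$-invariance of $\Rs$ together with $\Rs\subseteq\Vs$ to make every $\Pi_\Vs$ act as the identity along the trajectory $\{\Lc^k(\rho_0)\}$ (the paper routes this through the auxiliary identity $(\Pi_\Vs\Lc\Pi_\Vs)^k\Pi_\Rs=(\Pi_\Rs\Lc\Pi_\Rs)^k\Pi_\Rs$ rather than your induction, but the content is identical, and your observation that invariance of $\Rs$ — not of $\Vs$ — is what matters is exactly the point). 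Where you genuinely diverge is the minimality claim. The paper argues by contradiction: it supposes a smaller $\Vs^*\subsetneq\Rs$ works, writes $e^{\Lc t_1}(\rho_0)=X+Y$ with $0\neq Y\in\Rs\ominus\Vs^*$, and shows $e^{\Lc t_2}(Y)$ would have to be simultaneously zero and nonzero. Your argument is direct: any $\Vs$ for which \eqref{eq:Rreduction} holds satisfies $e^{\Lc t}(\rho_0)\in\img(\es)=\Vs$ (surjectivity of $\es$ onto $\Vs$ following from $\es\rs=\Pi_\Vs$), hence $\mathfrak{T}_\Sf\subseteq\Vs$ and $\Rs=\Span\{\mathfrak{T}_\Sf\}\subseteq\Vs$, so $\dim(\Vs)\geq\dim(\Rs)$, with the bound attained at $\Vs=\Rs$. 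This is shorter, sidesteps the delicate bookkeeping in the paper's contradiction (which implicitly leans on invertibility of $e^{\Lc t}$ and contains a confusingly worded step), and yields the slightly stronger conclusion that every admissible $\Vs$ must actually \emph{contain} $\Rs$, not merely match its dimension. Both arguments are valid.
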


\noindent 
The proof can also be found in Appendix \ref{appendix:supplementary_results}.

\smallskip

Thanks to the above theorem, we can solve the linear MR Problem \ref{prob:linear_model_red_ct} in the case where $\Oc = \Ic$ by simply  restricting the original QSO to $\Rs$. More importantly, the model thus obtained has the \emph{smallest possible dimension}. In this sense, the reachable space $\Rs$ contains the {\em minimal} number of degrees of freedom required to reproduce the state when starting from initial conditions $\rho_0\in\Sf$.  The dimension of $\Rs$ can still grow exponentially with the number of physical degrees of freedom (e.g., $N$ qubits) in the original system, however, depending on the complexity of $\Lc$.

\smallskip

\begin{remark} It is worth noting that one can obtain an alternative representation of the linear reduced model, which may be especially convenient for numerically simulating a quantum system on a classical computer, by exploiting a standard vectorization procedure over $\Rs$ \cite{Havel}. Let $\{R_i\}$ be an orthogonal operator base for $\Rs$ and define 
$${\rm vec}_{\Rs}(\cdot) \equiv \sum_i \bm e_j \tr (R_i^\dag \cdot)  , \quad
{\rm unvec}_\Rs(\cdot) = \sum_i R_i \bm e_i^T (\cdot) ,$$
where $\{\bm e_i\}$ is the standard base for $\Cb^{\dim(\Rs)}$ and, by construction, ${\rm unvec}_\Rs\circ{\rm vec}_\Rs  = \Pi_\Rs$. The reduced state $\bm x(t) \equiv {\rm vec}_{\Rs}(\rho(t))$ then evolves in time according to the linear differential equation $\dot{\bm x}(t) = F \bm x(t)$, with the generator $F\in\Cb^{\dim(\Rs)\times\dim(\Rs)}$ computed as $[F]_{i,j} = \tr[R_i^\dag \Lc (R_j)]$. The time-evolved state of the system is retrieved by letting $\rho(t) = {\rm unvec}_\Rs[x(t)]$, $t \geq 0$. 
\end{remark}

\smallskip

Despite being natural and practically useful in some cases, the reduction on the reachable operator subspace $\mathscr R$ does not retain, in general, the structure of the original QSO model: namely, ${\cal F}_L$ is not, in general, a generator of a one-parameter semigroup of CPTP maps. Since the QSO models we consider are defined on operator *-subalgebras, in order to obtain a reduced model in the same class we need, at a minimum, to construct a supporting algebraic structure. Before introducing the general formalism needed to develop our results, some additional motivating discussion may be in order.

A natural approach for using $\mathscr R$ to build an algebraic model is to consider the smallest *-algebra containing $\mathscr R,$  $\mathscr{A}=\textrm{alg}(\mathscr{R}).$  This choice is particularly convenient, as any {\em orthogonal} projection onto a *-subalgebra $\Pi_\As$ can be shown to be a CPTP map \cite[Thorem II.6.10.2]{blackadar2006operator}, which admits a further factorization in full-rank maps of the form $\Pi_{\As}={\cal J}\circ {\cal R}$ as in Eq.\,\eqref{eq:fact}, with all maps involved being also CPTP \cite[Theorem 1]{tit2023}. This property, as we shall see in the following, is {\em sufficient} to ensure that the reduced generator $\check\Lc=\Rc\Lc\Jc$ is also a Lindblad generator. However, this choice of $\As$ is in general {\em non-optimal}, as illustrated by the following example.

\begin{example}
\label{example:distorted}
Consider a generator $\Lc$ with a fixed point $\rho$, i.e. $\Lc(\rho)=0$, and take such an equilibrium as an initial condition for the OQS, $\Sf=\{\rho\}$. The reachable space is then $\Rs = \Span \{\rho\},$ and  $\alg(\Rs) = \Span\{\Pi_i\}$, where $\Pi_{i}$ are the spectral projections of $\rho.$ Thus, the dimension of the reduced model, $\dim(\alg(\Rs)),$ equals the number of distinct eigenvalues of $\rho$.
However, by choosing $\check{\rho}(0)=1$, $\check{\Lc} = 0$ and $\es(\check{\rho}) = \rho\check{\rho}$, we have that the model $
\dot{\check{\rho}}(t) = \check{\Lc}[\check{\rho}(t)]$ and $\rho(t) = \es(\check{\rho}(t))$ with initial condition $\check{\rho}(0)=1$, provides the correct trajectory at all times, i.e. $\rho(t)=\rho$ for all $t\geq0$. Since the dimension of this model is 1, clearly it is minimal. 
\end{example}

The above explicitly shows that, although closing $\Rs$ to an algebra provides a valid reduced quantum model, this model is not guaranteed to be minimal. A strategy that avoids this issue is to seek the minimal {\em distorted algebra} \cite{blume2010information}, that is, a *-algebra with respect to a modified operator product, that contains $\Rs$ and is the image of the dual of a conditional expectation. The following subsections are dedicated to introducing these concepts and their use in the context of this work.

\subsection{Algebraic model reduction: Fundamentals}
\label{sec:algebraic_model_reduction}

\subsubsection{Finite-dimensional $*$-algebras and their representations}
\label{sec:algebraic_model_reduction_1}

We consider associative operator $*$-algebras $\As\subseteq\Bf(\Hc)$ which, in our finite-dimensional setting, are simply operator subspaces closed with respect to the standard matrix product and adjoint operation \cite{blackadar2006operator}.  

Given a set $\Sc\subseteq \Bf(\Hc)$ we define its {\em commutant}, denoted with $\Sc'$, as the set of operators that commute with every element of the set, that is, 
$$\Sc'=\{X\in\Bf(\Hc)|[X,S]=0,\, \forall S\in\Sc\}.$$ 
Note that $\Sc'$ is always a (not necessarily *-) algebra and if $\Sc$ is closed under the adjoint operation, it becomes a $*$-algebra. In particular, $\As'$ is a *-algebra if $\As$ is such.
Given an algebra $\As$, we also define its {\em center} as
$$\zentrum(\As) = \As \cap \As' = \{X\in\As|[X,A]=0, \,\forall A\in\As\}.$$ 
Again, $\zentrum(\As)$ is a commutative algebra and if $\As$ is a $*$-algebra, then $\zentrum(\As)$ is a commutative $*$-algebra.

A key result regarding the structure  of *-algebras, which has been extensively used in the theory of virtual quantum subsystems \cite{KLV, ViolaJPA,ZanardiVS} and information-preserving structures
 \cite{knill-encoding},\cite{blume2010information},\cite{ticozzi2010quantum},
is the {\em Wedderburn decomposition} \cite{wedderburn1908hypercomplex, arveson}. Given any associative *-algebra $\As \subseteq\Bf(\Hc)$, there exists a decomposition of the Hilbert space, say,
\begin{equation}
\Hc \equiv \bigoplus_{k} \big( \Hc_{F,k}\otimes\Hc_{G,k} \big) \oplus \Hc_R, 
\label{vs}
\end{equation}
and a unitary operator $U\in\Bf(\Hc)$ effecting a transformation to a basis in which the algebra and its commutant can be simultaneously decomposed as 
\begin{align}
    \As &= U\bigg(\bigoplus_k \Bf(\Hc_{F,k}) \otimes \one_{G,k} \oplus {\mathbb O}_R \bigg)U^\dag,
    \label{eq:wedderburn}\\
    \As' &= U\bigg(\bigoplus_k \one_{F,k} \otimes \Bf(\Hc_{G,k}) \oplus \Bf(\Hc_R) \bigg)U^\dag, 
    \label{eq:wedderburn_commutant}
\end{align}
whereby it also follows that
$$ \zentrum(\As) = U\bigg(\bigoplus_k \lambda_k \one_{F,k} \otimes \one_{G,k} \oplus {\mathbb O}_R\bigg)U^\dag.$$
When considering a matrix representation of these operator space, these decompositions amount to a simultaneous block-diagonalization  (numerical algorithms to find such a decomposition are provided in \cite{de2011numerical}). A property that directly follows from the above structural characterization and will be useful later is the following:
\begin{equation}
\label{eq:commperp}
\As\ominus \zentrum(\As)\,\perp\, \As',
\end{equation}
where orthogonality is meant here with respect to the standard Hilbert-Schmidt inner product. 

As it will become clearer in the next sections, to the purposes of analyzing the MR problems we are interested in, we can assume that $\As$ is {\em unital}, that is, it contains the identity operator. In this case, the summand $\Hc_R$ in Eq.\,\eqref{vs} is empty, and we will thus not further consider it in what follows. In case the considered algebra is not unital, we can focus only on its support, effectively removing the term $\zero_R$ \cite{tit2023, MyThesis}.

\subsubsection{Quantum conditional expectations and distorted algebras} 
\label{sec:algebraic_model_reduction_2}

Assume we are interested in describing and making prediction for a subalgebra of observables $\As$, knowing that the state of the system is $\rho.$ A reduced statistical description can then be obtained by suitably projecting the relevant operators in $\As$: the map enacting this ``coarse-graining'' is precisely the {\em dual of a conditional expectation} \cite{petz2007quantum}. In the following, we illustrate this idea and extend it to the reduction of dynamics.

\begin{definition}[Conditional expectation]
Given a unital $*$-algebra $\As \subseteq \Bf(\Hc)$, a {\em conditional expectation} onto $\As$, $\CE_\As:\Bf(\Hc)\to\As$, is a positive projector, that is,
$$\CE_\As(X) = X, \;\forall X\in\As, \quad \CE_\As(X^\dag X)\geq 0, \;\forall X\in\Bf(\Hc).$$ 
\end{definition}

While idempotent, a conditional expectation need {\em not} be self-adjoint and thus, in general, it is not an orthogonal projection. The adjoint operator of the conditional expectation with respect to the Hilbert-Schmidt inner product, denoted by $\SE_\As (\cdot) \equiv \CE_\As^\dag( \cdot)$, is known as the {\em state extension} \cite{petz2007quantum}. Some properties of the conditional expectation directly follow from its definition. In particular, since $\As$ is unital, the conditional expectation $\CE_\As (\cdot) $ is a positive, unital projector. Then its dual, $\SE_\As (\cdot)$ is a positive and TP projector. It is possible to show that, in fact, both maps are actually CP \cite{blackadar2006operator}.

The state extension is the map we need to reduce the description of the system in the Schr\"{o}dinger's picture. Let $X\in\As\subsetneq \Bf(\Hc)$ and $\rho\in \Df(\Hc).$ Then, in computing the expectation for $X$ according to $\rho$ we have:
$$\langle X \rangle = \tr(X\rho) =\tr[\CE_\As(X)\rho]=\tr[X\SE_\As(\rho)].$$
Therefore, a reduced description that preserving the expectations of a subalgebra of observables $\As$ with respect to any state can be obtained via the CPTP projector $\SE_\As.$
In doing so, we are reducing the set of density operators to the {\em image of $\SE_\As.$} In order to specify what kind of set this is, it is convenient to resort to the algebra structure and derive the induced form of the maps $\CE_\As,\SE_\As$.  Let the algebra $\As$ have a decomposition of the form \eqref{eq:wedderburn}. Then, there exists a set of full-rank density operators $\{\tau_k\in\Df(\mathcal{H}_{G,k})\}$ such that
\begin{widetext}
    \begin{align}
         \CE_\mathscr{A}(X) = U\bigg( \bigoplus_{k=0}^{K-1} \tr_{\Hc_{G,k}}\left[(W_k X W_k^\dag)(\one_{d_k}\otimes\tau_k)\right]\otimes\one_{G,k}\bigg) U^\dag , \qquad \forall X\in\mathcal{B(H)}, 
        \label{eqn:cond_exp_blocks}
    \end{align}
\end{widetext}
where $W_k$ is a linear operator from $\Hc$ onto $\Hc_{F,k}\otimes\Hc_{G,k}$ such that $W_k W_k^\dag=\one_{F,k}\otimes \one_{G,k}$ \cite{wolf2012quantum}. Under the same assumptions, the corresponding state extension onto $\mathscr{A}$ takes the form: 
\begin{align}
     \SE_\mathscr{A}(X) 
        &= U\bigg[ \bigoplus_{k=0}^{K-1} \tr_{\Hc_{G,k}}\left(W_k X W_k^\dag \right)\otimes \tau_k\bigg] U^\dag ,
    \label{eqn:state_ext_blocks}
\end{align}
whereby it is straightforward to see that
\begin{equation}
    \textrm{image}\SE_\mathscr{A}= U\bigg(\bigoplus_k \Bf(\Hc_{F,k}) \otimes \tau_{k}\bigg)U^\dag.
    \label{eq:distorted_Wedderburn}
\end{equation}
This set is also a *-closed algebra, but with respect to a {\em modified} operator composition. Define the {\em $\tau$-modified product} as  
$$\tau\equiv \bigoplus_k \one_{s,k}\otimes \tau_k, \quad  A{\circ}_\tau B \equiv A \tau^{-1}B.$$
It is easy to show that
$\As_\tau\equiv \textrm{image}(\SE_\As)$ is a linear operator subspace that is both *-closed and closed with respect to the $\circ_\tau$ product. Sets of this type are called {\em distorted algebras} \cite{blume2010information,wolf2012quantum} and are often used to characterize the set of fixed points of CPTP maps \cite{PhysRevA.66.022318,johnson2015general}. 

The main idea towards finding a quantum, CPTP MR is to extend the reachable subspace so that it is a distorted algebra. However, two key aspects must be taken into account: (i) we need the distorted algebra to be the image of a state extension; and (ii) we want the minimal such distorted algebra. Point (i) is nontrivial, and it is related to the existence of a conditional expectation whose dual preserves a state. While in the context of classical probability (which can be recovered within the present framework by considering Abelian algebras \cite{tac2023}) the existence of a conditional expectation fixing any state is always guaranteed, for quantum systems this need {\em not} be the case. Necessary and sufficient conditions for this to be the case are provided in Takesaki's modular theory and its specializations to the finite-dimensional case \cite{TAKESAKI1972306,petz2007quantum,tit2023}.

These issues have been addressed in \cite{tit2023} for discrete-time dynamics, and we recall here the main result we will need:

\begin{theorem}[Minimal compatible distorted algebra \cite{tit2023}]
\label{thm:minimalreachablealgebra}
Consider an operator subspace $\Vs\subseteq\Bf(\Hc)$ and a positive-definite operator $V\in\Vs$, $V=V^\dag>0$. Let us define 
\begin{equation}
\sigma \equiv \Pi_{\zentrum(\alg(\Vs))}\left(V\right)\in\zentrum(\alg(\Vs)),
\label{sigma}
\end{equation}
where $\Pi_{\zentrum(\alg(\Vs))}$ is the orthogonal projection onto the center ${\zentrum(\alg(\Vs))}$. Then, $\alg_\sigma(\Vs)$ is the minimal distorted algebra that both contains $\Vs$ and admits a state extension. 
\end{theorem}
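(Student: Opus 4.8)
The plan is to establish the two defining properties separately — that $\alg_\sigma(\Vs)$ admits a state extension, and that it has minimal dimension among distorted algebras with this property that contain $\Vs$ — after fixing a convenient normal form. First I would pass to the Wedderburn basis of $\As:=\alg(\Vs)$, so that (taking $\As$ unital, as the text permits) $\As=\bigoplus_k \Bf(\Hc_{F,k})\otimes\one_{G,k}$ with central block projectors $P_k$ spanning $\zentrum(\As)$. The orthogonal projection then reads $\sigma=\sum_k\sigma_k P_k$ with $\sigma_k=\tr(P_kV)/\tr(P_k)$, and since $V>0$ each $\sigma_k>0$, so $\sigma$ is positive-definite. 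Two elementary observations drive everything: $\sigma$ is the two-sided unit of the modified product, since $\sigma\circ_\sigma X=\sigma\sigma^{-1}X=X$; and, because $\sigma$ is central, $\sigma^{\pm1/2}\in\As$ and $\sigma^{-1/2}A\sigma^{-1/2}=\sigma^{-1}A$ for all $A\in\As$. In particular $\As$ is itself closed under $\circ_\sigma$ (as $\sigma^{-1}\in\zentrum(\As)\subseteq\As$), so the generated distorted algebra satisfies $\alg_\sigma(\Vs)\subseteq\As$.

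The key structural fact I would isolate is a conjugation characterization of admissibility: by Eq.\,\eqref{eq:distorted_Wedderburn}, a distorted algebra $\mathcal{N}$ is the image of a state extension iff there is a positive-definite $\tau$ with $\widetilde{\mathcal{N}}:=\tau^{-1/2}\mathcal{N}\tau^{-1/2}$ an ordinary unital $*$-algebra and $\tau\in\widetilde{\mathcal{N}}'$ — the latter being exactly the finite-dimensional Takesaki compatibility condition invoked in the text. Equivalently, $X\mapsto\tau^{-1/2}X\tau^{-1/2}$ is a dimension-preserving $*$-isomorphism $(\mathcal{N},\circ_\tau)\to(\widetilde{\mathcal{N}},\cdot)$. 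Applying this with $\tau=\sigma$ to $\mathcal{N}=\alg_\sigma(\Vs)$, I would verify all three requirements: closure of $\widetilde{\mathcal{N}}=\sigma^{-1/2}\alg_\sigma(\Vs)\sigma^{-1/2}$ under the ordinary product transports directly from $\circ_\sigma$-closure; unitality holds because the $\circ_\sigma$-unit $\sigma$ lies in $\alg_\sigma(\Vs)$; and $\sigma\in\widetilde{\mathcal{N}}'$ is immediate since $\widetilde{\mathcal{N}}\subseteq\As$ while $\sigma$ is central in $\As$. This shows $\alg_\sigma(\Vs)$ admits a state extension and contains $\Vs$ by construction.

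For minimality I would run the same isomorphism in reverse: for any competitor $\mathcal{N}$ (distortion $\tau$) admitting a state extension with $\Vs\subseteq\mathcal{N}$, closure under $\circ_\tau$ and $*$ forces $\alg_\tau(\Vs)\subseteq\mathcal{N}$, so $\dim\mathcal{N}\ge\dim\alg_\tau(\Vs)=\dim\alg\!\big(\tau^{-1/2}\Vs\tau^{-1/2}\big)$, the dimension of the ordinary algebra generated by the conjugated subspace. The claim thus reduces to showing that, over all admissible distortions, $\dim\alg(\tau^{-1/2}\Vs\tau^{-1/2})$ is minimized at $\tau=\sigma$, matching $\dim\alg_\sigma(\Vs)$. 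I expect this minimization to be the main obstacle. The point to make rigorous is that an admissible distortion is confined to the commutant $\widetilde{\mathcal{N}}'$, so conjugation by it can only ``trivialize'' the part of $\Vs$ lying in $\zentrum(\As)$ — which the central projection does optimally, by turning $\Pi_{\zentrum(\As)}(V)$ into the $\circ_\sigma$-identity — whereas the off-center part of $\Vs$ must still generate the full noncommutative block structure of $\alg_\sigma(\Vs)$ under any admissible $\tau$. To separate these absorbable (central) and non-absorbable (off-center) contributions cleanly, and thereby extract the matching lower bound, I would lean on the orthogonality relation \eqref{eq:commperp}, $\As\ominus\zentrum(\As)\perp\As'$, which pins down precisely which degrees of freedom a commutant-valued distortion can and cannot remove.
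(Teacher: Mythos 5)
First, a point of reference: the paper does not actually prove this statement---it is recalled from \cite{tit2023} with no argument given here---so your attempt has to be judged on its own merits rather than against an in-paper proof. The admissibility half of your argument is essentially right: passing to the Wedderburn basis of $\As=\alg(\Vs)$, computing $\sigma=\sum_k \tr(P_kV)\tr(P_k)^{-1}P_k>0$, observing that $\alg_\sigma(\Vs)\subseteq\As$ because $\sigma^{-1}$ is central, and then verifying the conjugation criterion with $\tau=\sigma$ (closure, unitality, $\sigma\in\widetilde{\mathcal{N}}'$ since $\widetilde{\mathcal{N}}\subseteq\As$ and $\sigma\in\zentrum(\As)$) is a clean route to showing $\alg_\sigma(\Vs)$ contains $\Vs$ and admits a state extension. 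One loose end there: you assert without argument that the $\circ_\sigma$-unit $\sigma$ lies in $\alg_\sigma(\Vs)$; this is true, but only because $\sigma^{-1/2}V\sigma^{-1/2}$ is positive definite, so Cayley--Hamilton gives $\one$ as a constant-free polynomial in it, whence $\one\in\alg(\sigma^{-1/2}\Vs\sigma^{-1/2})$.

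The genuine gap is the minimality half, which is the actual content of the theorem, and which you explicitly leave open. Your reduction---any admissible competitor $(\mathcal{N},\tau)$ containing $\Vs$ satisfies $\dim\mathcal{N}\geq\dim\alg(\tau^{-1/2}\Vs\tau^{-1/2})$, so it suffices to minimize this over admissible $\tau$---is sound, but the concluding sketch is not a proof. Two steps are missing. First, you never extract the usable form of the admissibility constraint: from $\tau\in\widetilde{\mathcal{N}}'$ and $\tau^{-1/2}X\tau^{-1/2}\in\widetilde{\mathcal{N}}$ one gets $\tau^{1/2}X\tau^{-1/2}=\tau^{-1/2}X\tau^{1/2}$, i.e., $\tau$ commutes with every $X\in\Vs$, so $\tau\in\alg(\Vs)'$; this is what makes the optimization tractable. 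Second, even granting that, one must show (a) that a non-central $\tau\in\As'$ cannot shrink the generated algebra below $\dim\alg_\sigma(\Vs)$---note that for $X=\bigoplus_kX_{F,k}\otimes\one_{G,k}\in\As$ and $\tau=\bigoplus_k\one_{F,k}\otimes\tau_{G,k}$ non-central, the conjugate $\bigoplus_kX_{F,k}\otimes\tau_{G,k}^{-1}$ leaves $\As$ entirely, so the two generated algebras are not nested and the orthogonality relation \eqref{eq:commperp} by itself does not decide the comparison---and (b) that among central distortions, which act on $\Vs$ as block-by-block rescalings, the choice $\tau=\sigma$ is optimal. Neither is carried out; as written, the proposal establishes that $\alg_\sigma(\Vs)$ is \emph{an} admissible distorted algebra containing $\Vs$, but not that it is minimal.
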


\subsubsection{Reduced representation of states and observables  \\ 
on subalgebras}

As we anticipated in Sec.\,\ref{sub:structure}, the Wedderburn decomposition of $\As$ let us naturally reduce the size of the representation of a $*$-subalgebra by removing the repeated blocks. Explicitly, if $\As= U\left(\bigoplus_k \Bf(\Hc_{F,k}) \otimes \one_{G,k}\right)U^\dag,$ with $m\equiv \sum_k \dim(\Hc_{F,k})$, we can observe that it is isomorphic to $\check{\As} = \bigoplus_k \Bf(\Hc_{F,k}) \subseteq \Cb^{m\times m}$ (recall Fig.\,\ref{fig:block-diagonal_representation}). If we aim to reduce a model onto a $*$-sub-algebra $\As$, it then suffices to consider its compressed representation $\check \As$. As shown in \cite{tit2023}, such a reduction can be achieved by using CP unital or CPTP maps that factorize the conditional expectation or its dual, respectively:

\begin{theorem}[Factorization of conditional expectations  \cite{tit2023}]
\label{thm:factorizations}
Let $\As\subset\Bs$ be a unital $*$-subalgebra with a Wedderburn decomposition  as in \eqref{eq:wedderburn}, and let $\check \As\equiv \bigoplus_k \Bf(\Hc_{F,k}).$ Then, for any conditional expectation $\CE_\As$ and state extension $\SE_\As$, there exist (non-square) factorization maps
\begin{equation}
\label{eq:cond_exp_factorization}
\CE_\As = \eo \ro, \quad \SE_\As = \es \rs ,
\end{equation}
where  $\eo:\check\As\rightarrow \As,$ $\ro:\As\rightarrow \check\As$   are CP and unital, and $\es:\check\As\rightarrow \Dc_{\tau}(\As),$  $\rs:\As\rightarrow \check\As$  are CPTP, such that for any $X\in\Hf(\Hc)$, $\rho\in\Df(\Hc)$ we have 
\begin{equation}
\tr[\CE_\As(X)\rho]=\tr[\ro(X)\rs(\rho)]=\tr[X\SE_\As (\rho)].
\end{equation}
\end{theorem}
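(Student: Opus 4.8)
The plan is to make the statement constructive by reading all four maps directly off the Wedderburn block forms of $\CE_\As$ and $\SE_\As$ already recorded in Eqs.~\eqref{eqn:cond_exp_blocks} and \eqref{eqn:state_ext_blocks}, ``cutting'' each formula at the point where the repeated $G$-blocks are introduced. Following the construction of \cite{tit2023}, with $\check\As=\bigoplus_k\Bf(\Hc_{F,k})$ I would define the reduction maps landing in $\check\As$ by
\[
\ro(X) = \bigoplus_k \tr_{\Hc_{G,k}}\!\big[(W_k X W_k^\dag)(\one_{d_k}\otimes\tau_k)\big], \qquad \rs(\rho) = \bigoplus_k \tr_{\Hc_{G,k}}(W_k \rho W_k^\dag),
\]
and the injection maps by
\[
\eo\Big(\bigoplus_k A_k\Big) = U\Big(\bigoplus_k A_k\otimes\one_{G,k}\Big)U^\dag, \qquad \es\Big(\bigoplus_k B_k\Big) = U\Big(\bigoplus_k B_k\otimes\tau_k\Big)U^\dag.
\]
With these choices the two factorizations $\CE_\As=\eo\ro$ and $\SE_\As=\es\rs$ follow by direct comparison with Eqs.~\eqref{eqn:cond_exp_blocks}--\eqref{eqn:state_ext_blocks}. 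Note that $\ro,\rs$ are naturally defined on all of $\Bf(\Hc)$ and restrict to the isomorphism $\As\cong\check\As$, so the factorizations hold as maps out of the full operator space, which is exactly what is needed for the trace identity with arbitrary $X\in\Hf(\Hc)$, $\rho\in\Df(\Hc)$; moreover $\mathrm{image}(\es)=U(\bigoplus_k\Bf(\Hc_{F,k})\otimes\tau_k)U^\dag=\Dc_\tau(\As)$, matching Eq.~\eqref{eq:distorted_Wedderburn}.

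Next I would certify the structural properties by decomposing each map into elementary completely positive (CP) building blocks: each of $\ro,\rs,\eo,\es$ is a direct sum over $k$ of compositions of isometric conjugation (by $W_k$ or $U$), tensoring with a fixed positive operator ($\one_{G,k}$ or $\tau_k$), and a partial trace over $\Hc_{G,k}$ — all CP — so complete positivity of the four maps is immediate. Unitality of $\eo$ is clear, and unitality of $\ro$ reduces to $\ro(\one)=\bigoplus_k\one_{F,k}\,\tr(\tau_k)$; trace preservation of $\es$ reduces to $\tr\,\es(B)=\sum_k\tr(B_k)\tr(\tau_k)$, and that of $\rs$ to $\tr\,\rs(\rho)=\sum_k\tr(W_k^\dag W_k\,\rho)$. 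The part requiring genuine care — and the natural place for the argument to go wrong — is therefore \emph{not} positivity but the normalization bookkeeping: unitality of $\ro$ and trace preservation of $\es$ both hinge on $\tr(\tau_k)=1$ (i.e.\ that each $\tau_k$ is a bona fide state), while trace preservation of $\rs$ hinges on $\sum_k W_k^\dag W_k=\one_\Hc$. This last identity is precisely where the standing \emph{unitality} assumption on $\As$ (equivalently, the absence of the remainder block $\Hc_R$ in Eq.~\eqref{vs}) enters, since each $W_k^\dag W_k$ is the orthogonal projector onto the $k$-th summand of $\Hc$ and these projectors resolve the identity only when the blocks exhaust $\Hc$.

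Finally I would establish the chain of trace identities by combining duality with an adjoint relation among the constructed maps. The outer equality $\tr[\CE_\As(X)\rho]=\tr[X\SE_\As(\rho)]$ is just the definition $\SE_\As=\CE_\As^\dag$. For the middle equality I would check directly that $\es=\ro^\dag$ with respect to the Hilbert--Schmidt inner product: using $\tr[\tr_{\Hc_{G,k}}(Y)^\dag B_k]=\tr[Y^\dag(B_k\otimes\one_{G,k})]$ and cyclicity gives $\ro^\dag(\bigoplus_k B_k)=\sum_k W_k^\dag(B_k\otimes\tau_k)W_k=\es(\bigoplus_k B_k)$, the last step following from the defining relation between $W_k$ and $U$. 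Then, since $\ro$ is Hermiticity-preserving and $X$ is Hermitian,
\[
\tr[\ro(X)\rs(\rho)] = \tr[X\,\ro^\dag(\rs(\rho))] = \tr[X\,\es\rs(\rho)] = \tr[X\,\SE_\As(\rho)],
\]
closing the chain. Because all four maps change the dimension from that of $\Hc$ (or $\As$) to $m=\sum_k\dim(\Hc_{F,k})$, they are genuinely non-square, as claimed; the only delicate ingredient, as noted, is the normalization that makes $\ro$ unital and $\rs,\es$ trace-preserving, everything else being a direct consequence of the Wedderburn block forms.
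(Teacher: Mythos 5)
Your construction is correct and coincides with the one the paper records: the theorem is imported from \cite{tit2023} without proof, and the explicit block forms the paper does give for $\Rc$ and $\Jc$ in Eqs.\,\eqref{eqn:reduction}--\eqref{eqn:injection} are exactly your $\rs$ and $\es$, with the remaining verifications (CP via elementary building blocks, the normalization $\tr(\tau_k)=1$ and $\sum_k W_k^\dag W_k=\one_\Hc$ for unitality/trace preservation, and $\es=\ro^\dag$ for the trace chain) filled in correctly. No gaps.
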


Explicitly, the block-structure of the CPTP  linear maps of interest is found to be as follows:
\begin{equation}
\label{eqn:reduction}
\Rc(X)=\bigoplus_k \tr_{\Hc_{G,k}}(W_k X W_k^\dag)=\bigoplus_k X_{F,k}=\check X ,
\end{equation}
and
\begin{equation}
\label{eqn:injection}
\es(\check X)=U\bigg(\bigoplus_k X_{F,k} \otimes\tau_{k}\bigg)U^\dag.
\end{equation}
We note that, in \cite{Kabernik}, $\Rc$ is referred to as a {\em quantum coarse-graining} map, while in \cite{petz2007quantum} this terminology is used for $\Jc$.

\begin{remark}
If the operator subspace $\Vs$ does not have full support, the resulting algebra $\alg(\Vs)$ is not unital. Consequently, $\Zc(\alg(\Vs))$ is also not unital and the projected operator $\sigma$ in \eqref{sigma} does not have full support. In such a case, we can restrict the model onto the support of $\Vs$ and then continue with the restriction of $\alg(\Vs)$ onto its support, which is now unital. Notice that by doing this we obtain a reduction that is not properly CPTP, but is only CPTP over the support of $\alg(\Vs)$. This does not pose a problem for our MR procedure, however, because in the subspace of $\Hc$ where the operator space $\Vs$ has no support we have no probability.
\end{remark}

\subsection{Reduction of continuous-time semigroup dynamics on \mbox{*-subalgebras}}
\label{sec:lindblad_reduction}

Before summarizing our MR procedure, we need to determine what happens to the restriction of a Lindblad generator $\Lc$ to a distorted algebra. While, for discrete-time dynamics, the reduction automatically preserves the CPTP character of the original map since $\Jc,\Rc$ are also CPTP, in the continuous-time case it is not {\em a priori} obvious that the reduction yields a valid Lindblad generator. In the following theorem, which is one of our main results, we prove that the reduction is indeed a CPTP semigroup generator that leaves the algebra invariant. While some supporting mathematical results are provided in Appendix \ref{appendix:supplementary_results}, the main idea of the proof is simple and we include it here due to the importance of the result.

\begin{theorem}[Reduced Lindblad dynamics]
\label{thm:Lindblad reduction}
Let $\mathscr{A}$ be a unital $*$-subalgebra of $\mathcal{B(H)}, $ and let $\rs$ and $\es$ denote the CPTP factorization of $\SE_\mathscr{A} = \es\rs$, as defined in Eq.\,\eqref{eq:cond_exp_factorization}. Then for any Lindblad generator $\Lc$, its reduction to $\As$, 
$$\check{\Lc}\equiv\rs\Lc\es, $$ 
is also a Lindblad generator, that is, $\check{\Lc}:\check\As\to\check\As$ and $\{e^{\check{\Lc}t}\}_{t\geq 0}$ is a quantum dynamical semigroup. 
\end{theorem}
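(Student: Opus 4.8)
The plan is to establish that $\check{\Lc}$ is a Lindblad generator by verifying the three defining properties of a generator of a norm-continuous CPTP semigroup: Hermiticity preservation, trace annihilation, and \emph{conditional complete positivity}, the last being the only nontrivial one \cite{wolf2012quantum}. The central observation is that, although the family $\Phi_t \equiv \rs\, e^{\Lc t}\,\es$ is in general \emph{not} a semigroup (since $\es\rs=\SE_\As$ need not commute with $e^{\Lc t}$), each $\Phi_t$ is nonetheless a genuine completely positive map, being the composition of the CP maps $\es$, $e^{\Lc t}$ and $\rs$. Moreover $\Phi_0=\rs\es=\Ic_{\check{\As}}$ by Theorem~\ref{thm:factorizations}, and $\tfrac{d}{dt}\Phi_t\big|_{t=0}=\rs\Lc\es=\check{\Lc}$. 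Thus $\check{\Lc}$ arises as the derivative \emph{at the identity} of a differentiable one-parameter family of CP maps, and the whole argument reduces to showing that any such derivative generates a CP semigroup.

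The second step is a Choi-matrix argument. Viewing $\check{\As}=\bigoplus_k\Bf(\Hc_{F,k})$ as a block-diagonal subalgebra of $\Bf(\check{\Hc})$ with $\check{\Hc}=\bigoplus_k\Hc_{F,k}$, write $C(\Phi)=(\id\otimes\Phi)(\ketbra{\Omega}{\Omega})$ for the (unnormalized) maximally entangled vector $\ket{\Omega}$, so that complete positivity of $\Phi_t$ is equivalent to $C(\Phi_t)\geq 0$. Differentiability gives the expansion $C(\Phi_t)=C(\Ic_{\check{\As}})+t\,C(\check{\Lc})+o(t)$, where $C(\Ic_{\check{\As}})=\ketbra{\Omega}{\Omega}$ is a rank-one projector. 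Hence for every $\ket{\psi}\perp\ket{\Omega}$ one has $\bra{\psi}C(\Phi_t)\ket{\psi}=t\,\bra{\psi}C(\check{\Lc})\ket{\psi}+o(t)\geq 0$; dividing by $t>0$ and letting $t\to 0^+$ yields $\bra{\psi}C(\check{\Lc})\ket{\psi}\geq 0$, which is exactly conditional complete positivity of $\check{\Lc}$.

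It then remains to check the two linear constraints. Hermiticity preservation of $\check{\Lc}=\rs\Lc\es$ is immediate, since $\rs$, $\Lc$ and $\es$ are each Hermiticity preserving. Trace annihilation follows from the bookkeeping $\tr[\check{\Lc}(\check{X})]=\tr[\rs\Lc\es(\check{X})]=\tr[\Lc\es(\check{X})]=0$, using that $\rs$ is trace preserving (which holds because $\sum_k W_k^\dag W_k=\one$ and $\tr\tau_k=1$), that $\es(\check{X})\in\Bf(\Hc)$, and that every Lindblad generator annihilates the trace. Invoking the GKLS/Lindblad characterization, a Hermiticity-preserving, trace-annihilating, conditionally completely positive map generates a CPTP semigroup; hence $\{e^{\check{\Lc}t}\}_{t\geq 0}$ is a quantum dynamical semigroup with $\check{\Lc}:\check{\As}\to\check{\As}$, as claimed.

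The main obstacle I expect is making the conditional-complete-positivity step fully rigorous in this \emph{algebraic} setting. Two subtleties must be handled: first, $\check{\As}$ is not a full matrix algebra but a direct sum $\bigoplus_k\Bf(\Hc_{F,k})$, so the Choi characterization of CP generators must be applied block-wise, or via the embedding $\check{\As}\subseteq\Bf(\check{\Hc})$ together with a check that the argument survives the embedding; second, the passage to the limit $t\to 0^+$ requires controlling the $o(t)$ remainder uniformly over the unit sphere of $\ket{\Omega}^{\perp}$, so that positivity genuinely transfers to $C(\check{\Lc})$. These are precisely the points where the supporting lemmas of Appendix~\ref{appendix:supplementary_results} are needed.
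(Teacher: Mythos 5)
Your proof is correct in outline but follows a genuinely different route from the paper's. The paper starts from the representation $\Lc(\rho)=\Phi(\rho)-K\rho-\rho K^\dag$ with $\Phi$ CP and $\Phi^\dag(\one)=K+K^\dag$, and shows that the reduced generator inherits the \emph{same} normal form: the key lemma is Proposition~\ref{prop:combining_reduction_injection}, which gives $\rs[K\es(\check\rho)]=\es^\dag(K)\check\rho$, so that $\check{\Lc}=\rs\Phi\es-\check K\cdot-\cdot\check K^\dag$ with $\check K=\es^\dag(K)\in\check\As$ and $\rs\Phi\es$ CP. This constructive form is what feeds Corollary~\ref{corollary:Hamiltonian} (the reduced Hamiltonian is $\es^\dag(H)$), and it also quietly resolves the direct-sum issue: since $\check K$ is block-diagonal and $\rs\Phi\es$ extends to a CP map on all of $\Bf(\check\Hc)$ by precomposition with the orthogonal projection onto $\check\As$, one obtains a bona fide Lindblad generator on the full matrix algebra $\Bf(\check\Hc)$ that leaves $\check\As$ invariant and restricts to $\check{\Lc}$. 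Your argument instead uses only the abstract fact that $\Phi_t=\rs\,e^{\Lc t}\es$ is a differentiable family of CP maps with $\Phi_0=\Ic_{\check\As}$, so its derivative at the identity is conditionally completely positive; this is more economical (no need for Proposition~\ref{prop:combining_reduction_injection}) but non-constructive, and it does not by itself produce the reduced GKLS operators.

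One caution on the step you flag yourself: the Choi-matrix implementation of conditional complete positivity does not literally apply here, and neither of your two proposed repairs works as stated. Computing $(\id\otimes\Phi_t)(\ketbra{\Omega}{\Omega})$ requires evaluating $\Phi_t$ on off-block-diagonal matrix units, which lie outside its domain $\check\As=\bigoplus_k\Bf(\Hc_{F,k})$; a purely block-wise treatment fails because $\check{\Lc}$ genuinely couples blocks (see the local-dissipation and boundary-driven examples in Secs.~\ref{sec:examples_central_spin_dissipative} and \ref{sec:examples_XXZ}); and extending via $\Phi_t\circ\Pi_{\check\As}$ replaces $C(\Phi_0)=\ketbra{\Omega}{\Omega}$ by the higher-rank projector $\sum_k\ketbra{\Omega_k}{\Omega_k}$, so the limit only yields positivity on $\big(\Span\{\Omega_k\}\big)^\perp$, which is strictly weaker than what the GKLS criterion on $\Bf(\check\Hc)$ demands. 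The correct fix is to run your derivative-at-the-identity argument against the intrinsic $C^*$-algebraic formulation of conditional complete positivity (Evans/Christensen--Evans): for $a_i,b_i\in\check\As$ with $\sum_i a_ib_i=0$ one has $\sum_{ij}b_i^\dag\Phi_t^\dag(a_i^\dag a_j)b_j\geq 0$ with equality at $t=0$, whence the derivative is nonnegative. With that substitution your proof closes; as written, the positivity transfer is the one step that would not survive scrutiny.
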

\begin{proof}
Let us start by considering a Lindblad generator in the form $\Lc (\rho) \equiv {\Phi (\rho)} - K\rho -\rho K^\dag$ \cite{wolf2012quantum}, for some operator $K\in\Bf(\Hc)$ and for some CP superoperator $\Phi$ such that $\Phi^\dag (\one) = K + K^\dag$. We can then consider 
$$\check{\Lc}(\check{\rho}) = \rs\Phi\es(\check{\rho}) - \rs[K \es(\check{\rho})] - \rs[\es (\check{\rho}) K^\dag].$$ 
From Proposition \ref{prop:combining_reduction_injection} in Appendix \ref{appendix:supplementary_results}, we then have that $\check{\Lc}(\check{\rho}) = \rs\Phi\es (\check{\rho}) - \es^\dag (K) \check{\rho} - \check{\rho}\es^\dag (K^\dag)$, where $\es^\dag(K)\in\check{\As}$. Moreover, from Throrem \ref{thm:factorizations}, we know that both $\rs$ and $\es$ are CPTP and hence $\rs\circ\Phi\circ\es$ is CP. To conclude, we can verify that $\check{\Lc}^\dag (\one) = 0$ by observing that 
\begin{align*}
(\rs \Phi \es)^\dag (\one) & = \es^\dag \Phi^\dag \rs^\dag (\one) = \es^\dag  \Phi^\dag (\one) \\
&= \es^\dag (K + K^\dag) =  \es^\dag (K) +\es^\dag (K^\dag),
\end{align*}
where we have used the fact that $\rs^\dag$ is unital and linearity. It then follows from Theorem 7.1 in \cite{wolf2012quantum} that $\check{\Lc}$ is a generator of a quantum dynamical semigroup.
\end{proof} 

A limiting case of interest is closed-system dynamics, described by a Hamiltonian.

\begin{corollary}
\label{corollary:Hamiltonian}
Under the assumptions of theorem \ref{thm:Lindblad reduction}, assume that the Lindblad generator is non-dissipative, that is, $\Lc = -i\ad_H$, with Hamiltonian $H\in\Hf(\Hc)$. Then the reduced generator $\check{\Lc}\equiv \rs\Lc\es$ is also non-dissipative, $\check{\Lc}=  \ad_{\check{H}}$, with the reduced Hamiltonian \(\check{H} \equiv \es^\dag(H) \in\check{\mathscr{A}}\).
\end{corollary}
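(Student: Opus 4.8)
The plan is to specialize the operator-sum decomposition used in the proof of Theorem \ref{thm:Lindblad reduction} to the purely Hamiltonian case. Writing a generic Lindblad generator as $\Lc(\rho) = \Phi(\rho) - K\rho - \rho K^\dag$, I would observe that the non-dissipative generator $\Lc = -i\,\ad_H$ fits this template with the particular choice $\Phi = 0$ and $K = iH$. Indeed, since $H = H^\dag$ we have $K^\dag = -iH$, so that $-K\rho - \rho K^\dag = -iH\rho + i\rho H = -i[H,\rho]$, recovering $\Lc$; moreover the compatibility constraint $\Phi^\dag(\one) = K + K^\dag$ degenerates to $0 = iH - iH$, so this decomposition is admissible and carries no dissipative content.

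First I would invoke the identity established within the proof of Theorem \ref{thm:Lindblad reduction} (via Proposition \ref{prop:combining_reduction_injection}), namely
\begin{equation*}
\check{\Lc}(\check{\rho}) = \rs\Phi\es(\check{\rho}) - \es^\dag(K)\check{\rho} - \check{\rho}\,\es^\dag(K^\dag).
\end{equation*}
Because $\Phi = 0$ here, the CP term $\rs\Phi\es$ vanishes identically, so no dissipator can survive the reduction. Substituting $K = iH$ and using the linearity of $\es^\dag$ gives $\es^\dag(K) = i\,\es^\dag(H)$ and $\es^\dag(K^\dag) = -i\,\es^\dag(H)$, whence $\check{\Lc}(\check{\rho}) = -i\,\es^\dag(H)\check{\rho} + i\,\check{\rho}\,\es^\dag(H) = -i[\check{H},\check{\rho}]$ with $\check{H} \equiv \es^\dag(H)$. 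This is exactly $\check{\Lc} = -i\,\ad_{\check{H}}$, consistent with the Schr\"{o}dinger-picture convention $\Lc = -i\,\ad_H$ of the statement.

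The only points requiring genuine care --- and the closest thing to an obstacle --- are verifying that $\check{H}$ is a \emph{bona fide} Hamiltonian, i.e. that $\check{H} = \check{H}^\dag$ and $\check{H} \in \check{\As}$. Hermiticity follows because $\es$ is CPTP (Theorem \ref{thm:factorizations}), so its adjoint $\es^\dag$ is a unital CP map and hence preserves self-adjointness: $\check{H}^\dag = \es^\dag(H^\dag) = \es^\dag(H) = \check{H}$. Membership $\check{H} \in \check{\As}$ is precisely the statement $\es^\dag(K) \in \check{\As}$ already used in Theorem \ref{thm:Lindblad reduction}. Since no step introduces a dissipator and the resulting generator is an honest commutator with a Hermitian element of $\check{\As}$, the corollary follows as a direct specialization, the substantive work having already been carried out in the parent theorem.
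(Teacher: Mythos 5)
Your proof is correct and follows exactly the route the paper intends: the corollary is stated without a separate proof precisely because it is the specialization $\Phi=0$, $K=iH$ of the decomposition $\Lc(\rho)=\Phi(\rho)-K\rho-\rho K^\dag$ used in the proof of Theorem \ref{thm:Lindblad reduction}, combined with Proposition \ref{prop:combining_reduction_injection} and the unitality of $\es^\dag$ for Hermiticity of $\check H$. The only discrepancy is the factor of $-i$ absent from the paper's statement $\check{\Lc}=\ad_{\check H}$, which your derivation correctly restores as $\check{\Lc}=-i\,\ad_{\check H}$.
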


We now have all the ingredients to construct a reduced QSO that is able to exactly reproduce the state $\rho(t)$ of the original model when starting from $\rho_0\in\Sf$. The key steps may be summarized as follows:
\begin{enumerate}
\item Compute the reachable space $\Rs$ from $\Sf.$ We assume that $\Rs$ has full support: if this is not the case, we can reduce the model to the supporting subspace.
\item Find a positive self-adjoint element $V$ of $\Rs.$ 

\item Compute the commutant algebra $\As'\equiv\Rs',$ and orthogonally project $V$ onto $\As'.$ {Thanks to the property highlighted in Eq.\,\eqref{eq:commperp},} this is sufficient to obtain $\sigma$ as in Theorem \ref{thm:minimalreachablealgebra}. 

\item Compute the distorted reachable algebra $\Ds$. This can be done by using $\Ds=\sigma^{1/2}\alg(\sigma^{-1/2}\Rs \sigma^{-1/2})\sigma^{1/2},$ see \cite{tit2023} for detail. Note that, in cases where $\Rs = \text{alg}(\Rs)$, or $\one\in\Rs$, one can simply take $\Ds = \Rs$, as we have $\Rs \subseteq \Ds \subseteq \alg(\Rs)$. 

\item Compute the unitary change of basis $U$ that puts $\Ds$ into the Wedderburn-decomposed form \eqref{eq:distorted_Wedderburn} and $\Jc,\Rc$ as in Theorem \ref{thm:factorizations}. 

\item Define the reduced generator $\check \Lc \equiv \Jc\Lc\Rc$ on $\check\As$ and the initial condition $\check\rho_0\equiv \Rc(\rho_0).$ The output function for the reduced model in this case is simply $\Oc=\Jc.$
\end{enumerate}

Note that Step 2 can be completed by considering, for some $\epsilon>0$, the following operator:
$$V\equiv \sum_{\rho_0\in \Sf}
\int_{0}^\epsilon e^{{\cal L}t}(\rho_0)dt.$$
In fact, a continuous-time linear system spreads over its reachable set in arbitrarily small time \cite{marrobasile}. Integrating over time and summing over all possible initial condition yields a positive-semidefinite operator with maximal support. By Theorem \ref{thm:Lindblad reduction}, the obtained generator is still in Lindblad form, and by Theorem \ref{thm:linear_reachable_model_reduction}, it reproduced exactly the dynamics of interest. We have thus provided a solution to Problem \ref{prob:quantum_model_reduction_ct} for the case where the QSO has an identity output function.

\section{Observable-based model reduction}
\label{sec:obmr}

\subsection{Minimal observable linear models}
\label{sec:non-observable_ct}

In this section we tackle a MR problem that is dual to the one considered in Sec.\,\ref{sec:reachable}, namely, we assume no restriction on the initial conditions, $\Sf=\Df(\Hc)$. In this case, the reduction can only leverage the observables of interest:
intuitively, the key idea here is to discard from the model degrees of freedom that produce no output. 

More formally, let us first focus on the set of initial states that are \textit{indistinguishable} by having access only to the semigroup output for all times. Two states $\rho_0,\rho_1$ are said to be indistinguishable at the output (or output-indistinguishable for short) if $\Oc e^{\Lc t}(\rho_0) = \Oc e^{\Lc t}(\rho_1)$, for all $t\geq0$. By linearity, two states are output-indistinguishable if and only if $\rho_0-\rho_1\in\ker\Oc e^{\Lc t}$, for all $t\geq 0$. Generalizing this idea to operators yields the \textit{non-observable subspace}:

\begin{definition}[Non-observable subspace]
\label{def:nos}
    Given a QSO $(\Lc,\Oc,\Sf)$ as in Eq.\,\eqref{eqn:QHM_model_ct}, the {\em non-observable subspace  under $\Oc$} is the operator subspace 
    \begin{equation}
    \Ns \equiv \{X\in\As\,|\,\Oc e^{\Lc t}(X)=0, \;\forall t\geq0\} \subseteq \Bf(\Hc).
    \label{eqn:non_observable_space_ct}
    \end{equation}
\end{definition}

In analogy to Proposition \ref{prop:reachable_characterization}, an explicit characterization of the non-observable subspace is provided by the following: 

\begin{proposition}
    Given a 
    QSO $(\Lc,\Oc,\Sf)$ as in Eq.\,\eqref{eqn:QHM_model_ct}, defined on a sub-algebra of $\As\subseteq\Bc(\Hc)$, with $n=\dim(\Hc)$, the non-observable space is given by 
    \begin{equation}
     \Ns = \{X\in\As \,|\,\Oc\Lc^i(X)=0,\, \forall i=0,1,\dots,n^2-1\}.
    \end{equation}
    Moreover, $\Ns$ is the largest $\Lc$-invariant (and $e^{\Lc t}$-invariant) subspace contained in $\ker{\Oc}$. 
\end{proposition}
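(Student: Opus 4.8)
The plan is to mirror the structure of the proof of Proposition~\ref{prop:reachable_characterization}, exploiting the duality between reachability and observability but running the argument directly on $\Ns$. I write $\Ns' \equiv \{X\in\As \,|\,\Oc\Lc^i(X)=0,\ \forall i=0,\dots,n^2-1\}$ for the candidate characterization, and the two tasks are: (i) show $\Ns=\Ns'$, and (ii) show that $\Ns$ is the maximal $\Lc$-invariant subspace contained in $\ker\Oc$.

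First I would establish $\Ns\subseteq\Ns'$. Expanding the semigroup as the convergent series $e^{\Lc t}=\sum_{k\geq0}\tfrac{t^k}{k!}\Lc^k$ and using linearity and continuity of $\Oc$, the defining condition $\Oc e^{\Lc t}(X)=0$ for all $t\geq0$ becomes $\sum_{k\geq0}\tfrac{t^k}{k!}\,\Oc\Lc^k(X)=0$ identically in $t$. Since the left-hand side is a (vector-valued) analytic function of $t$, every Taylor coefficient must vanish, giving $\Oc\Lc^k(X)=0$ for all $k\geq0$, in particular for $k\leq n^2-1$; equivalently, one differentiates $k$ times at $t=0$. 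Hence $X\in\Ns'$.

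For the reverse inclusion $\Ns'\subseteq\Ns$, the key observation is that $\Lc$ is a linear operator on the $n^2$-dimensional space $\Bf(\Hc)$, so by the Cayley--Hamilton theorem every power $\Lc^k$ with $k\geq n^2$ lies in $\Span\{\Lc^0,\dots,\Lc^{n^2-1}\}$. Therefore $\Oc\Lc^i(X)=0$ for $i=0,\dots,n^2-1$ already forces $\Oc\Lc^k(X)=0$ for all $k\geq0$, and resumming the series yields $\Oc e^{\Lc t}(X)=0$ for all $t$, i.e. $X\in\Ns$. This settles the explicit characterization.

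It remains to prove maximality. Taking $t=0$ in the definition gives $\Oc(X)=0$, so $\Ns\subseteq\ker\Oc$; and the characterization $\Ns=\Ns'$ makes $\Lc$-invariance immediate, since $X\in\Ns$ implies $\Oc\Lc^i(\Lc X)=\Oc\Lc^{i+1}(X)=0$ for all $i$, whence $\Lc X\in\Ns$, while $e^{\Lc t}$-invariance follows from $e^{\Lc t}e^{\Lc s}=e^{\Lc(t+s)}$. For maximality, let $\Ws\subseteq\ker\Oc$ be any $\Lc$-invariant subspace and $X\in\Ws$; then $\Lc^i(X)\in\Ws\subseteq\ker\Oc$ for every $i$, so $\Oc\Lc^i(X)=0$ and $X\in\Ns'=\Ns$, giving $\Ws\subseteq\Ns$. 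I expect no serious obstacle, as this is the observability dual of the already-proven reachable case; the only points requiring care are the passage from ``$\Oc e^{\Lc t}(X)=0$ for all $t$'' to the vanishing of the individual coefficients $\Oc\Lc^k(X)$, which rests on analyticity in $t$ (or repeated differentiation at $t=0$), and the truncation to $i\leq n^2-1$, which is precisely where Cayley--Hamilton applied to $\Lc$ on $\Bf(\Hc)$ enters.
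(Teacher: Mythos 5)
Your proof is correct and follows essentially the same route the paper takes: it mirrors the exponential-series plus Cayley--Hamilton argument used for the reachable subspace in Proposition~\ref{prop:reachable_characterization}, which is exactly the standard linear-system-theory duality the paper invokes (the paper itself only cites this as a known result rather than writing it out). The two points you flag as needing care --- analyticity in $t$ to pass from $\Oc e^{\Lc t}(X)=0$ to vanishing of each $\Oc\Lc^k(X)$, and Cayley--Hamilton applied to $\Lc$ on the (at most) $n^2$-dimensional operator space to truncate at $i\leq n^2-1$ --- are handled correctly.
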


\noindent 
The proof of this proposition, up to vectorization of the matrices, is a standard linear system-theory result \cite{wonham,marrobasile,kalman1969topics}. 

An equivalent approach for characterizing the non-observable subspace is to consider its orthogonal complement in the Heisenberg picture. Consider a linear output map $\Oc$ of the general form given in \eqref{eq:omap}, $\Oc(\cdot) = \sum_i E_i \tr\left(O_i\cdot\right)$, where for convenience we now assume that $O_i$ represent physical observables, thus $\{O_i\} \subseteq\Hf(\Hc)$ (the procedure works also for general $O_i\in\Bf(\Hc)$, however). From Definition \ref{def:nos}, it follows that $X\in\Ns$ if $\Oc e^{t\Lc}(X) = \sum_i E_i \tr\left(O_i e^{t\Lc}[X]\right) =0 $, for all $t\geq 0$. Since the $E_i$ are linearly independent, this is equivalent to requiring that $\tr[O_i e^{t\mathcal{A}}(X)] = \tr[e^{t\Lc^\dag}(O_i)X]  =0$, for all $t\geq 0$ and all $i$. In other words, the observable $X$ is indistinguishable from the zero observable (it is unobservable) if and only if $X\perp \Ns^\perp$, with 
\begin{equation}
  \Ns^\perp\equiv\Span\{e^{t\Lc^\dag}(O_i), \; \forall t\geq0, \forall i\} \subseteq \Bf(\Hc),
  \label{eqn:non_obs_Heisenberg}
\end{equation}
or, equivalently, as presented in \cite{d2021introduction},
\begin{equation}
\Ns^\perp = \Span\{ \Lc^{\dag j}(O_i), \; \forall i, \forall j=0,\dots, n^2-1\}.
    \label{eqn:non_obs_Heisenberg_2}
\end{equation}
Similar to the reachable space $\Rs$, the observable space $\Ns^\perp$ is also a Krylov operator subspace.
The counterpart to Theorem \ref{thm:linear_reachable_model_reduction} can then be stated as follows:

\begin{theorem}
\label{thm:linear_observable_model_reduction}
Consider a QSO $(\Lc,\Oc, \Sf)$, defined on a sub-algebra $\As\subseteq\Bf(\Hc)$, and its non-observable subspace $\Ns$. Let $\Vs$ be an operator subspace such that $\Ns^\perp\subseteq\Vs$, with $\Pi_\Vs$ denoting a (non-necessarily orthogonal) projection superoperator on $\Vs$. Let $\rs$ and $\es$ be two factors of $\Pi_\Vs$ such that $\es\rs = \Pi_\Vs$ and $\rs\es = \Ic_{\Vs}$, and define $\Lc_L \equiv \rs\mathcal{L}\es$ and $\Oc_L \equiv \Oc\es$. We then have 
\begin{equation}
    \Oc e^{\Lc t}(\rho_0) = \Oc_L e^{\Lc_L t}\rs(\rho_0), \quad \forall t \geq 0, \forall \rho_0\in\Df(\Hc).
    \label{eq:non_obs_th}
\end{equation}  
Moreover, $\Vs = \Ns^\perp$ is an operator subspace of {\em minimal} dimension for which Eq.\,\eqref{eq:non_obs_th} holds.
\end{theorem}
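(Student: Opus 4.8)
The plan is to obtain the output identity \eqref{eq:non_obs_th} by transposing Theorem~\ref{thm:linear_reachable_model_reduction} into the Heisenberg picture, and to prove minimality by a rank argument on the observability map. The starting observation is that, by Eq.~\eqref{eqn:non_obs_Heisenberg_2}, $\Ns^\perp=\Span\{\Lc^{\dag j}(O_i)\}$ is \emph{exactly} the reachable subspace of the dual generator $\Lc^\dag$ when the role of the initial conditions is played by the output operators $\{O_i\}$. Since $\Ns^\perp\subseteq\Vs$, I would apply Theorem~\ref{thm:linear_reachable_model_reduction} to this fictitious reachable problem for $\Lc^\dag$: writing $\tilde\rs,\tilde\es$ for full-rank factors of a projector onto $\Vs$ with $\tilde\es\tilde\rs=\Pi_\Vs$ and $\tilde\rs\tilde\es=\Ic_\Vs$, the theorem gives $e^{\Lc^\dag t}(O_i)=\tilde\es\,e^{\tilde\rs\Lc^\dag\tilde\es\,t}\,\tilde\rs(O_i)$ for every $i$ and all $t\geq0$.

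Next I would transpose this back. Using $\Oc e^{\Lc t}(\rho_0)=\sum_i E_i\tr\!\big(O_i\,e^{\Lc t}\rho_0\big)=\sum_i E_i\tr\!\big(e^{\Lc^\dag t}(O_i)\,\rho_0\big)$ and substituting the dual reduction, each trace pairing $\tr\!\big(\tilde\es\,e^{\tilde\rs\Lc^\dag\tilde\es t}\tilde\rs(O_i)\,\rho_0\big)$ can be rewritten, by moving each superoperator onto its trace-pairing adjoint, as $\tr\!\big(O_i\,\tilde\rs^\dag e^{\tilde\es^\dag\Lc\tilde\rs^\dag t}\tilde\es^\dag(\rho_0)\big)$, where I use $(\Lc^\dag)^\dag=\Lc$ and $(AB)^\dag=B^\dag A^\dag$. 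Identifying $\es\equiv\tilde\rs^\dag$ and $\rs\equiv\tilde\es^\dag$, one checks $\rs\es=(\tilde\rs\tilde\es)^\dag=\Ic_\Vs$, $\Lc_L=\rs\Lc\es=\tilde\es^\dag\Lc\tilde\rs^\dag$, and $\Oc_L=\Oc\es$, so the right-hand side collapses to $\Oc_L e^{\Lc_L t}\rs(\rho_0)$, which is precisely Eq.~\eqref{eq:non_obs_th}. As a self-contained alternative I would give the direct verification: since $\Ns^\perp\subseteq\Vs$ forces $\ker\Pi_\Vs\subseteq\Ns$, and since $\Ns=\{X:\Oc\Lc^j(X)=0,\ \forall j\}$, one gets $\Oc\Lc^j\Pi_\Vs=\Oc\Lc^j$ for all $j$; an easy induction on $k$ then yields $\Oc\Lc^m(\Pi_\Vs\Lc)^k\Pi_\Vs=\Oc\Lc^{m+k}$, whose $m=0$ case combined with $\es(\rs\Lc\es)^k\rs=(\Pi_\Vs\Lc)^k\Pi_\Vs$ recovers \eqref{eq:non_obs_th} after summing the exponential series term by term.

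For minimality, I would introduce the observability map $\Theta(X)\equiv\big(\Oc\Lc^j(X)\big)_{j=0}^{n^2-1}$, whose kernel is $\Ns$ and whose rank is therefore $\dim\Bf(\Hc)-\dim\Ns=\dim\Ns^\perp$. Any subspace $\Vs$ supporting a reduction of the stated form reproduces the output for all $\rho_0$ and all $t$; differentiating \eqref{eq:non_obs_th} $j$ times at $t=0$ gives $\Oc\Lc^j=\Oc_L\Lc_L^j\rs$, i.e.\ $\Theta=\Theta_L\,\rs$ with $\Theta_L$ acting on the $\dim\Vs$-dimensional reduced space. Hence $\dim\Ns^\perp=\rank\Theta\leq\rank\Theta_L\leq\dim\Vs$, and equality is attained by $\Vs=\Ns^\perp$, establishing that $\Ns^\perp$ has minimal dimension.

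The step I expect to be the main obstacle is the adjoint bookkeeping and, relatedly, pinning down the precise hypothesis on $\Pi_\Vs$. The image condition $\Ns^\perp\subseteq\Vs$ suffices on its own for an \emph{orthogonal} projector, where $\ker\Pi_\Vs=\Vs^\perp\subseteq\Ns$; but for a general oblique projector the argument genuinely requires the kernel condition $\ker\Pi_\Vs\subseteq\Ns$, which is the exact dual of the ``image contains $\Rs$'' hypothesis of Theorem~\ref{thm:linear_reachable_model_reduction}. The duality route conveniently inherits the correct factorization from the reachable theorem, at the cost of carefully tracking which projector ($\Pi_\Vs$ versus its adjoint) appears after transposition; the direct route avoids this at the cost of making the kernel condition explicit. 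Either way, the finite-dimensionality guarantees convergence and term-by-term manipulation of the exponential series, so no analytic subtlety arises there.
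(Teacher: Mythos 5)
Your proposal is correct and follows essentially the route the paper itself takes: the paper's entire proof of this theorem is the remark that it ``follows the same lines as Theorem~\ref{thm:linear_reachable_model_reduction}, only switching to the dual (Heisenberg) picture,'' which is precisely your duality argument identifying $\Ns^\perp$ with the reachable subspace of $\Lc^\dag$ seeded by $\{O_i\}$ and transposing back through the trace pairing. Two points are worth noting. First, your minimality argument via the factorization $\Theta=\Theta_L\,\rs$ of the observability map, giving $\dim\Ns^\perp=\rank\Theta\leq\dim\Vs$, is cleaner and more direct than the contradiction-by-trajectories argument the paper uses for the reachable counterpart, and it is valid. Second, you have correctly isolated the one genuine subtlety — for an oblique $\Pi_\Vs$ the transposition produces $\es\rs=\Pi_\Vs^\dag$ rather than $\Pi_\Vs$, and the condition actually used in a direct verification is $\ker\Pi_\Vs\subseteq\Ns$ — but your ``self-contained alternative'' paragraph asserts that $\Ns^\perp\subseteq\Vs$ \emph{forces} $\ker\Pi_\Vs\subseteq\Ns$, which is false for a general oblique projector (an oblique projector onto $\Vs\supseteq\Ns^\perp$ can have its kernel tilted out of $\Ns$); your closing paragraph already retracts this, so you should simply delete the false implication and state the kernel condition (or orthogonality of $\Pi_\Vs$) as an explicit hypothesis of the direct route, exactly as you do in your final discussion.
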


\noindent 
The proof follows the same lines of the one of Theorem \ref{thm:linear_reachable_model_reduction}, only switching to the dual (Heisenberg) picture. Again, this provides a valid and minimal linear reduction, however there is no guarantee that the reduced generator $\Lc_L$ is a valid semigroup generator. We next aim to extend $\Ns^\perp$ so it can support CPTP dynamics.

\subsection{Extension to quantum models}
\label{sec:observablered}

In order to leverage Theorem \ref{thm:linear_observable_model_reduction} to construct a CPTP model, we seek the smallest distorted algebra that admits a CPTP projection and contains the complement of $\Ns$, so that we can apply Theorem \ref{thm:linear_reachable_model_reduction}.  A general result, formally proved in the discrete-time setting, is provided by the following theorem:

\begin{theorem}[\bf Reduction on the output algebra \cite{tit2023}]
\label{thm:non_observable-projection}
Consider a QSO $(\Lc,\Oc, \Sf)$, defined on a sub-algebra $\As\subseteq\Bf(\Hc)$, and its non-observable subspace $\Ns$. The {\em output algebra} $\Os \equiv \alg(\Ns^\perp)$ is a ({distorted}) algebra of minimal dimension that contains $\Ns^\perp$ and allows for a state extension, in this case $\SE_\Os=\CE_\Os.$  
\end{theorem}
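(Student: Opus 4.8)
The plan is to reduce the claim to the general minimal-distorted-algebra result, Theorem~\ref{thm:minimalreachablealgebra}, after isolating the one structural feature that distinguishes the observable case from the reachable one: the self-adjointness of $\Ns^\perp$. First I would verify that $\Ns^\perp$ is a $*$-closed subspace. Using the characterization \eqref{eqn:non_obs_Heisenberg_2}, $\Ns^\perp = \Span\{\Lc^{\dag j}(O_i)\}$, it suffices to recall that the output operators are Hermitian, $O_i=O_i^\dag$, and that the dual generator $\Lc^\dag$ maps Hermitian operators to Hermitian operators (each of the terms $i[H,\cdot]$, $L_u^\dag(\cdot)L_u$, and $\tfrac12\{L_u^\dag L_u,\cdot\}$ preserves self-adjointness). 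Hence every $\Lc^{\dag j}(O_i)$ is Hermitian and, taking complex-linear combinations, $X\in\Ns^\perp\Rightarrow X^\dag\in\Ns^\perp$. Consequently $\Os=\alg(\Ns^\perp)$ is a \emph{genuine} (undistorted) $*$-algebra, not merely a distorted one, and it contains $\Ns^\perp$ by construction.

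Next I would show that this self-adjointness forces $\SE_\Os=\CE_\Os$. Because in the observable setting no restriction is placed on the initial states ($\Sf=\Df(\Hc)$), the reduced model must reproduce $\tr(O_i\,e^{\Lc t}[\rho]) = \tr(e^{\Lc^\dag t}[O_i]\,\rho)$ for \emph{every} $\rho\in\Df(\Hc)$; the only admissible reference against which observables can be projected is the flat (trace) one, so the projection used to build the algebraic model must be trace-orthogonal. For a genuine $*$-algebra the trace-preserving conditional expectation is exactly the Hilbert--Schmidt orthogonal projection $\Pi_\Os$ onto $\Os$, which is idempotent, positive, and, by \cite[Thm.~II.6.10.2]{blackadar2006operator}, CP; being self-adjoint it coincides with its dual, so $\CE_\Os=\Pi_\Os=\SE_\Os$. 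This simultaneously exhibits the state extension and establishes the stated equality, and it identifies the distortion appearing in Theorem~\ref{thm:minimalreachablealgebra} as trivial, $\sigma\propto\one$ on the support of $\Os$.

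It then remains to argue minimality. Since $\Os=\alg(\Ns^\perp)$ is by construction the smallest $*$-algebra containing $\Ns^\perp$, it is minimal within the class of genuine $*$-algebras containing $\Ns^\perp$, and the content of the theorem is that this is the correct class: an algebraic model that reconstructs the outputs for \emph{arbitrary} initial states must admit a self-adjoint state extension (so that $\SE=\CE$), and such distorted algebras are precisely the undistorted $*$-algebras. I would make this precise by specializing Theorem~\ref{thm:minimalreachablealgebra} to $\Vs=\Ns^\perp$ with a positive-definite $V\in\Ns^\perp$, and checking that the self-adjointness established above collapses the minimal compatible distorted algebra $\alg_\sigma(\Ns^\perp)$ onto $\alg(\Ns^\perp)$ once the flat-reference (trace-preserving) constraint is imposed. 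The remaining step is essentially formal: the entire argument is algebraic and transfers from the discrete-time proof of \cite{tit2023}, since only $\Ns^\perp$ and the $*$-algebra it generates enter, not the discrete- versus continuous-time nature of the semigroup.

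The main obstacle I anticipate is precisely this minimality-together-with-reference point. One must rule out that a strictly smaller, \emph{properly distorted} algebra could also reproduce the outputs: as the elementary abelian example of tracking a single weighted observable (with $\Lc=0$) shows, a lower-dimensional distorted algebra can reproduce the output, but only at the cost of a non-flat reference state, i.e.\ $\SE\neq\CE$, which is inadmissible when the initial condition is unknown. Pinning down that the arbitrariness of $\rho$ genuinely forces the trace-orthogonal projection --- and hence that no smaller distortion is permissible --- is the delicate step; the self-adjointness of $\Ns^\perp$ obtained at the outset is exactly what makes this constraint consistent and yields $\Os=\alg(\Ns^\perp)$ with $\SE_\Os=\CE_\Os$.
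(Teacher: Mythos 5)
The paper does not actually prove this statement in situ: it is imported from \cite{tit2023} (Theorem 6 there), and the Remark following the observable-reduction procedure explicitly warns that ``formally proving that there is no advantage in closing $\Ns^\perp$ to a distorted algebra is rather involved.'' Your opening steps are fine: $\Ns^\perp$ is indeed $*$-closed because the $O_i$ are Hermitian and $\Lc^\dag$ preserves Hermiticity, and for a genuine unital $*$-algebra the Hilbert--Schmidt-orthogonal projection is simultaneously a conditional expectation and a state extension, which yields $\SE_\Os=\CE_\Os$. The gap is in the minimality argument, which is precisely the part the paper flags as delicate.

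The mechanism you propose --- that self-adjointness of $\Ns^\perp$ ``collapses'' the minimal compatible distorted algebra of Theorem~\ref{thm:minimalreachablealgebra} onto the undistorted one --- cannot be right: the reachable space $\Rs$ is also spanned by Hermitian (indeed positive) operators, hence equally $*$-closed, yet Example~\ref{example:distorted} shows its minimal compatible distorted algebra is properly distorted and strictly smaller than $\alg(\Rs)$. So $*$-closedness is not what distinguishes the observable case. Moreover, Theorem~\ref{thm:minimalreachablealgebra} cannot be specialized to $\Vs=\Ns^\perp$ as you suggest, since $\Ns^\perp$ need not contain a positive-definite element (take a single traceless observable). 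Finally, your own abelian example shows that the literal containment $\Ns^\perp\subseteq\As_\tau$ \emph{can} be achieved by a strictly smaller, properly distorted algebra; what excludes it is not that a ``non-flat reference is inadmissible,'' but that such a reduction fails to reproduce the output. The missing step is exactly this: for a reduction onto $\As_\tau=\mathrm{image}(\SE_\As)$ to satisfy $\Oc\, e^{\Lc t}\,\SE_\As=\Oc\, e^{\Lc t}$ on all of $\Df(\Hc)$, the dual map $\CE_\As$ must fix every $\Lc^{\dag j}(O_i)$; since the fixed-point set of the idempotent $\CE_\As$ is the genuine unital $*$-algebra $\As$ itself, this forces $\Ns^\perp\subseteq\As$, hence $\alg(\Ns^\perp)\subseteq\As$ and $\dim\alg(\Ns^\perp)\le\dim\As=\dim\As_\tau$. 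You gesture at this duality (``algebras are the natural spaces for observables'') but never supply the fixed-point argument, which is the actual content of the cited result.
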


It should be noted that the choice of $\Os$ as a distorted algebra is highly \emph{non-unique} in general. In fact, any distortion of $\Os$ with a full-rank state with compatible block structure yields the same results.  We now have all the ingredients to construct a QSO, whose output $Y(t)=\Oc (t)$ is identical to that of the original model, for arbitrary initial conditions in $\Sf$.  The key steps are outlined in the following:

\begin{enumerate}
    \item Compute the orthogonal to the non-observable subspace, $\Ns^\perp$, from $\{O_i\}$ as in Eq.\,\eqref{eqn:non_obs_Heisenberg_2}. 
    We assume $\Ns^\perp$ has full support: if this is not the case, we can immediately reduce the model to the supporting subspace.
    
    \item Compute\footnote{It is worth noting that, to the purpose of obtaining the relevant  Wedderburn decomposition, computing the commutant $(\Ns^\perp)'$ is equivalent, since the Wedderburn decomposition of the commutant automatically gives the one of the algebra, see e.g.,  Eqs.\,\eqref{eq:wedderburn} and \eqref{eq:wedderburn_commutant}. Interestingly, in implementing this algorithm for the examples of Sec.\,\ref{sec:examples}, we observed that the numerical calculations of the Wedderburn decomposition were more efficient and numerically stable through the calculation of the commutant instead of the algebra itself.} 
    the {output algebra} $\Os=\alg(\Ns^\perp).$ This can be done as a double commutant $\Os=(\Ns^\perp)''.$ Given $\Os$, find the unitary change of base $U$ that brings $\Ns^\perp$ and $(\Ns^\perp)'$ to their canonical Wedderburn decomposition.
    \item Consider the orthogonal projection $\SE_\Os=\CE_\Os.$ Compute $\Jc,\Rc$ as in Theorem \ref{thm:factorizations}.   
    \item Define the reduced generator $\check \Lc \equiv \Jc\Lc\Rc$ on $\check\As$ and the output function for the reduced model  $\check\Oc \equiv \Oc \Jc.$
\end{enumerate}

By Theorem \ref{thm:Lindblad reduction}, the generator $\check \Lc$ is still in Lindblad form, and by Theorem \ref{thm:linear_observable_model_reduction}, it reproduces exactly the output of the system of interest. We have thus provided a solution to Problem \ref{prob:quantum_model_reduction_ct} for QSOs with unrestricted initial condition. 

\begin{remark}
Notice that in the above procedure for observable-based MR, differently from the reachable-based case, we close $\Ns^\perp$ to a {\em standard} algebra, rather than a distorted one. One may wonder whether, by using a distorted algebra, one could obtain a smaller reduced model also in the observable MR; however, this is not the case. Intuitively, one can understand this as follows: algebras are images of conditional expectations, which are maps acting on observables; as such, they form the natural spaces to describe observables. Distorted algebras are images of state extensions, which act on states; hence they are the natural structure to describe states. As shown in Eq.\,\eqref{eqn:non_obs_Heisenberg}, $\Ns^\perp$ is a space of Heisenberg-evolved observables and it is thus natural to close this space to an algebra, not to a distorted one. Formally proving that there is no advantage in closing $\Ns^\perp$ to a distorted algebra is rather involved, and we do not further elaborate on that here (see \cite[Theorem 6]{tit2023} for technical details).
\end{remark}

\subsection{Joint reachable- and observable-based model reduction}
\label{sec:jointred}

So far we have considered the problem of constructing MR descriptions based on the knowledge of either a limited set of initial conditions or observables of interest. In scenarios where both are restricted, it is natural to ask whether the reductions can be combined. In this section, we illustrate how to do so in order to obtain linear or CPTP joint reductions.

If only {linear models} are sought (Problem \ref{prob:linear_model_red_ct}), it is possible to leverage existing results on minimal realizations of input-output linear systems to show that a {\em minimal realization} of the dynamics can be obtained by considering the so-called {\em effective subspace} \cite{ito1992identifiability,tac2023,tit2023}:
\begin{equation*}
    \mathscr{E} \equiv \Rs\ominus (\Ns\cap\Rs).
\end{equation*}
This subspace supports a description of the reachable operators that yield non-trivial output. The reduction can then be obtained by first using Theorem \ref{thm:linear_reachable_model_reduction} to reduce to the dynamics to $\Rs$, and next by applying Theorem \ref{thm:linear_observable_model_reduction} to the already reduced model to  obtain a reduction to $\mathscr{E}.$ Importantly, in the linear case, one can prove that exchanging the order of the two MRs leads to a different representation of the {\em same dimension} \cite{marrobasile}.

If the goal is to obtain CPTP models that solve Problem \ref{prob:quantum_model_reduction_ct}, however, our method requires that the appropriate reachable or, respectively, observable sets are {\em enlarged} to algebras. In doing so, we lose the guarantee of minimality. The best we can do is to iterate the two types of MR until no further reductions are obtained for either. The key steps to obtain a reduced Lindblad model are outlined in Algorithm \ref{algo:composed}. Notice that one could obtain an alternative MR algorithm by applying the reduction on the output algebra {\em before} the reduction on \(\Ds\) at each iteration. Unlike in the linear case, however, the results of the two reductions are potentially different, and need {\em not} have the same dimension in general.

\begin{algorithm}[t]
    \caption{Iterative reduction}
    \label{algo:composed}
    \SetAlgoLined
    \Input{A QSO model $(\Lc, \Oc , \Sf)$ defined on $\As \equiv \As_0.$}
    Assign $(\Lc_0, \Oc_0 , \Sf_0) \equiv (\Lc, \Oc , \Sf).$ 
    
    Use the reachable reduction based on 
    $\Ds$ on model $(\Lc_0, \Oc_0 , \Sf_0)$ to
    compute  $\SE_\Ds=\Jc_1\Rc_1;$ define the model $(\Lc_1=\Rc_1\Lc_0\Jc_1, \Oc_1=\Oc_0\Jc_1 , \check\Sf_1=\Rc_1\Sf_0) $.
    
    Use the observable reduction based on $\Os$ on model $(\Lc_{1}, \Oc_1,  \Sf_{1})$ to compute $\SE_\Os=\Jc_2\Rc_2;$ define the model $(\Lc_2=\Rc_2\Lc_1\Jc_2;, \Oc_2=\Oc_1\Jc_2 , \check\Sf_2=\Rc_2\Sf_1)$. 
    
    \If{$\dim(\As_0) \neq \dim(\Os)$}{
    Assign $\As_{0}\equiv \Os$ and $(\Lc_0, \Oc_0 , \Sf_0) \equiv (\Lc_2, \Oc_2 , \Sf_2).$ 
    Go back to step {\bf2}.}
    \Else{
    \Output{$(\Lc_2, \Oc_2 , \Sf_2)$ defined on $\Os.$}}
\end{algorithm}

\section{Symmetry-based model reduction}
\label{sec:symred}

Symmetries are arguably the most natural property one can leverage in seeking to reduce the ``complexity'' of a given model. Since the notion of symmetry is more nuanced for open quantum systems than it is for the limiting case of Hamiltonian dynamics, we first recall some basic notions as relevant to the class of Markovian semigroups we consider. We then discuss how known approaches that exploit symmetries for simplifying the dynamical problem may be subsumed by our approach, in an appropriate sense -- as long as exact solutions are sought for the dynamical quantities of interest.
Crucially, we further argue that, even in situations where symmetries are known, our system-theoretic framework is more powerful, and can in fact result in a larger degree of MR, as effectively induced by a new type of {\em observable-dependent symmetries}. 

\subsection{Symmetries for quantum dynamical semigroups}

For Hamiltonian dynamics, Wigner's theorem \cite{Ballentine} mandates that a symmetry transformation of a physical system is represented by a unitary or anti-unitary operator that commutes with the Hamiltonian. Similarly, a conserved quantity is an observable (a self-adjoint operator) that commutes with the Hamiltonian. The two notions are in direct correspondence with one another: On the one hand, for each continuous family of (necessarily) unitary symmetries, say, $S(\theta) = e^{i\theta Q}$, with $\theta\in\mathbb{R}$, the associated generator $Q$ is conserved. On the other hand, for each conserved quantity $Q=Q^\dagger$, $S(\theta) = e^{i\theta Q}$ yields a continuous family of symmetries. This may be taken as a realization of Noether's theorem in (non-relativistic) quantum mechanics. 

In extending the above picture to open quantum systems, two distinct notions of symmetry naturally arise \cite{Viktor, Buca_2012, VincentPRB}. Let us focus specifically on Markovian semigroup dynamics and {\em unitary} symmetries. We then have the following:

\begin{definition}[Symmetries of semigroup dynamics] 
\label{def:osym}
Let $\{\Tc_t\}$ be a CPTP continuous semigroup with Lindblad generator $\Lc$.

(i) A {\em weak symmetry} is a unitary operator $S$ that leaves the dynamics invariant, 
\begin{align}
\label{eq:sym}
\mathcal{T}_t(S\rho S^\dag) = S \mathcal{T}_t(\rho) S^\dag, \quad \forall t, \forall \rho\in \Df (\Hc),
\end{align}
or, equivalently, $[\Sc,\Tc_t]=0$, $\forall t$, in terms of the super-operator  $\mathcal{S}(\cdot) \equiv S\cdot S^\dag$.

(ii) A {\em strong symmetry} is a unitary operator $S$ that commutes with the Hamiltonian and all the noise operators in $\Lc$,
\begin{align}
[H,S] = 0,\quad [L_u,S]=0, \quad\forall u, 
\label{eqn:strong_symmetry}
\end{align}
where $\Lc \sim (H, \{L_u\})$ is an arbitrary representation of the semigroup generator as in Eq.\,\eqref{eqn:Lindblad_generator}. Equivalently, $\Sc(H) = H$ and $\Sc(L_u)=L_u$, $\forall u$. 
\end{definition}

Thanks to the continuity of $\mathcal{T}_t,$ it is immediate to see that property \eqref{eq:sym} is equivalent to $[\mathcal{S},\mathcal{L}] = 0,$ or also $[\Sc,\Lc^\dag] =0$.
It is also immediate to verify that every strong symmetry is a weak symmetry ({\em not} vice-versa) and that strong symmetries are independent of the chosen Lindblad representation, as assumed in the definition. 

Clearly, a Markovian system may have more than one weak (or strong) symmetry. In that case, we can construct a set $\Gs$ of unitary operators such that Eq.\,\eqref{eq:sym} (or, respectively, Eq.\,\eqref{eqn:strong_symmetry}) is obeyed for all $S\in\Gs$. In both cases, the set $\Gs$ is actually a group because, given $S_1,S_2\in\Gs$, it also follows that $S_1S_2\in\Gs.$ Such a group is often referred to as the {\em symmetry group} of the dynamical system. 

Assume that $\Gs$ is a continuous group of unitary operators -- say, a one-parameter group in the simplest instance, whereby we may write $\Gs \equiv \{S(\theta) \equiv e^{i\theta G}\}$, with $\theta \in {\mathbb R}$ and $G=G^\dag$. Then each $S(\theta)\in\Gs$ is a weak symmetry if and only if 
\begin{align*}
\mathcal{L}([G,\rho]) = [G,\mathcal{L}(\rho)],\quad \forall \rho \in \Df (\Hc).
\end{align*}
That is, $G$ generates a group $\Gs$ of weak symmetries (is a {\em symmetry generator}) if and only if the adjoint action of $G$, namely, $[G,\cdot]$, commutes with $\mathcal{L}$. The equivalent condition in the Heisenberg picture is 
\begin{align*}
\mathcal{L}^\dag([G,X]) = [G,\mathcal{L}^\dag(X)], \quad \forall X\in \Bf (\Hc).
\end{align*}
\noindent 
Similarly, $G$ is the generator of a continuous group of strong symmetries if and only if $[G,H] = [G,L_u] =0$, for all $u$. 

As in the case of Hamiltonian dynamics, a {\em conserved quantity} under Lindblad dynamics is a self-adjoint operator $Q$ that is invariant (is a constant of motion) in the Heisenberg picture, namely, it satisfies $\mathcal{L}^\dag(Q) = 0. $ This implies that arbitrary expectations are preserved,
$$\expect{Q}\!(t)  = \tr[Q e^{\Lc t}(\rho_0)] = \tr[e^{\Lc^\dag t}(Q) \rho_0] = \expect{Q}\!(0),$$ for any state. 
Despite these similarities, the connection between conserved quantities and symmetries is no longer straightforward, however \cite{Viktor}. In particular, while there is a one-to-one correspondence between generators of strong
symmetries and conserved quantities $Q$ whose moments are all conserved, there is in general no direct correspondence between generators of weak symmetries and conserved quantities. This is referred to as a {\em breakdown of Noether's theorem} in Markovian systems \cite{VincentPRB}.

\subsection{Symmetries and invariant subspaces}

In the context of MR, the relation between symmetries and invariant subspaces plays a key role. For any {\em weak} symmetry operator $S$, the eigendecomposition of the superoperator $\Sc$ provides a decomposition of the operator space, $\Bf(\Hc) = \bigoplus_\nu \Bs_\nu$, where $\Bs_\nu$ are operator-eigenspaces associated to distinct eigenvalues $\nu$, i.e., $\Sc(X) = \nu X$ for all $X\in\Bs_\nu$. Since $\Sc$ and $\Lc$ commute, it follows that each subspace $\Bs_\nu$ is $\Lc$-invariant \cite{Buca_2012,McDonald}: 
if $X$ is a $\nu$-eigenoperator of $\Sc$, with $\Lc(X) \equiv Y$, we have \(\Sc(Y)=\Sc\Lc(X)=\Lc\Sc(X) = \nu\Lc(X) = \nu Y \), hence $Y$ is also a $\nu$-eigenoperator of $\Sc$.

If we further assume that $S$ is a {\em strong} symmetry for $\Lc$, it is also known (Theorem A.1 in \cite{Buca_2012}) that the Lindbladian $\Lc$ can be block-diagonalized: formally, each operator-eigenspace $\Bs_\nu$ can be further decomposed into 
\[\Bs_\nu = \bigoplus_{
{j,k \,|\, u_j\bar{u}_k=\nu}} \Bs_{j,k}, \] 
with $\Bs_{j,k} \equiv \Span\{\ketbra{\phi}{\psi}, U\ket{\phi}=u_j\ket{\phi},  U\ket{\psi}=u_k\ket{\psi}\}$ and {\em each} \(\Bs_{j,k}\) $\Lc$-invariant. 

The identification of operator subspaces that are invariant under the dynamics generated by $\Lc$ immediately provides a mechanism for MR: if an operator subspace, say $\Vs$, is known to be $\Lc$-invariant, we can restrict our attention to evolution inside of $\Vs$ for arbitrary initial conditions $\rho_0\in\Vs$. 
Within our MR framework, we can then construct two factors $\Jc$ and $\Rc$, with $\Pi_\Vs = \Jc\Rc$, such that $\xi_0 = \Rc(\rho_0)$, $\dot{\xi}(t)= \Fc_L[\xi(t)]$, $\Fc_L = \Jc\Lc\Rc$ and $\rho(t)=\Jc[\xi(t)]$. 

While symmetries thus naturally lead to MR and allow, in fact, for considerable flexibility in constructing invariant subspaces, such subspaces constructed need {\em not} be minimal whenever initial conditions (compatible with the symmetry constraint) is specified from the outset. In contrast, the approach proposed in Theorem \ref{thm:linear_reachable_model_reduction} provably returns the smallest invariant subspace possible, for a fixed set of initializations. Likewise, for a fixed set of observables of interest, the approach in Theorem \ref{thm:linear_observable_model_reduction} provably yields the smallest model that is capable of reproducing the output for all possible initial conditions.
As we stressed, however, restricting the dynamics to operator subspaces does not ensure that the reduced dynamics is consistent with quantum CPTP constraints in general.  In order to exploit symmetries to obtain reduced {\em quantum} models, it is additionally necessary to relate symmetries to operator algebras. A natural connection that has been extensively studied \cite{Dygen,PhysRevA.63.012301,Kabernik} may be established as follows.

Let $\Gs$ be a group of weak symmetries for the dynamics generated by $\Lc$. The {\em group algebra} of $\Gs$ and its commutant in $\Bf(\Hc)$ are then, respectively, defined by
\begin{align*}
    \Cb \Gs & \equiv\Span_{\Cb} \{\Gs\}, \quad \text{dim}(\Cb \Gs) \leq |\Gs|, \\
    \Cb\Gs'& \equiv\{X\in\Bf(\Hc)|\,[S,X]=0, \forall S\in\Gs \}.   
\end{align*}
Both $\Cb \Gs$ and $\Cb \Gs'$ are unital, *-algebras and, in particular, $\Cb \Gs'$
is an $\Lc$-invariant algebra\footnote{ The fact that $\Cb \Gs'$ is $\Lc$-invariant follows from the fact that $\Cb\Gs'$ can be seen as the intersection of all the $1$-eigenspaces of $\Sc$ for all $S\in\Gs$. The invariance property is even more evident in the case of strong symmetries: since $\{H,L_u\}\subseteq \Cb\Gs'$ and, by definition of algebra, products and sums of operators in $\Cb\Gs'$ remain in $\Cb\Gs'$, it follows that $\Lc(\Cb\Gs')\subseteq\Cb\Gs'$.}.   
Assuming that the initial condition respects the symmetry of the dynamics, $\rho_0\in\Cb\Gs'$, we can then find a reduced quantum model that correctly reproduce the output state at all times. In particular, we can compute two CPTP maps $\Jc$ and $\Rc$ as in Theorem \ref{thm:factorizations} that are factors of the self-adjoint state extension $\SE_{\Cb\Gs'}=\Jc\Rc$. A reduced model may be constructed by letting 
$$\check{\rho}_0 = \Rc(\rho_0), \;\check{\Lc}=\Rc\Lc\Jc, \;\dot{\check{\rho}}(t) = \check{\Lc}[\check{\rho}(t)], \;\rho(t) = \Jc[\check{\rho}(t)].$$ 
By Theorem \ref{thm:Lindblad reduction}, we know that $\check{\Lc}$ and $\check{\rho}(t)$ are a valid Markovian semigroup and density operator, respectively. While the above is a legitimate symmetry-induced MR, a drawback is that it need not be optimal; in particular, the use of distorted algebras may in principle allow for a larger reduction (a {\em smaller}-dimensional model) to be obtained. 

In the next section, we argue that our general quantum MR approach genuinely extends symmetry-based MR. Specifically, by focusing on {\em observable-based} MR, we show that not only can our approach consistently recover all the conclusions that may be reached on the basis of both weak and strong symmetries, but it can lead to a more substantial reduction than what is possible using standard symmetry notions alone. 

\begin{remark}
As we mentioned in the Introduction, symmetries also provide the appropriate mathematical framework for characterizing Hilbert space fragmentation \cite{fragmentation}. The main idea is to consider a parametrized {\em family} of Hamiltonians, $H = \sum_k h_k H_k$, with $h_k \in {\mathbb R}$ and $H_k$ a few-body Hamiltonian, and construct the associative ``bond algebra'' $\Xs\equiv\alg(\{H_k\})$, generated by arbitrary linear combinations of
arbitrary products of the $H_k$'s. Fragmentation may then be defined in terms of a volume scaling of the dimension of the commutant algebra (equivalently, the number of dynamically disconnected Krylov subspaces) with system size, log[dim$(\Xs')]\sim L^d$, where $d$ is the spatial dimension. Several other features of fragmentation (e.g., its ``classical'' vs.\,``quantum'' nature) are further informed by the Wedderburn decomposition of $\Xs$ on $\Hc$ \cite{fragmentation}. Lindblad dynamics of fragmented systems have also been considered \cite{li2023hilbert}, the relevant ``open bond algebra'' being generalized to $\Xs\equiv\alg(\{H_k\})\cup \{L_j\})$, with $\{L_j\}$ being the noise operators in $\Lc$. 
Because $\Xs$ is an algebra, quantum MR is possible in principle, as we have focused on. 
In their current form, however, these analyses do not provide explicit reduced descriptions of the dynamics (for which the use of distorted $*$-algebras is needed in general), and only applies to strong symmetries -- in contrast with the MR procedures we propose\footnote{Interestingly, an approximate form of quantum MR for a purely dissipative class of Lindbladians exhibiting ``operator-space fragmentation'' has been reported in \cite{Essler2020}. In this case, $\Bf (\Hc)$ splits into exponentially many operator-Krylov subspaces, and the projected evolution on each subspace is governed by an integrable Hamiltonian.}. 
\end{remark}

\subsection{Symmetries and observable-based model reduction}
\label{SymmRed}

Let us now consider a Lindblad generator $\Lc$, with associated symmetry group $\Gs$, and let $\{O_i\}$ be the set of output observables of interest. By standard group-representation theory, the action of $\Gs$ (and $\Cb \Gs$) on $\Hc$ can be described in terms of its direct-sum decomposition into irreducible representations (irreps henceforth) \cite{Fulton}, $\mathcal{H} = \bigoplus_{k} \mathcal{H}_k$, where $k$ labels inequivalent irreps of $\Gs$. $\Cb \Gs$ acts irreducibly on $\Hc$ if (and only if) its commutant is trivial, in the sense that $\Cb \Gs ' = \Cb \one = {\{\lambda \one\}},$ $\lambda \in \Cb$. In the case where symmetries are present, the action of $\Cb \Gs$ is reducible, and the $k$-th irrep, with dimension, say, $d_k$, appears in general with a multiplicity $n_k$. By an appropriate unitary change of basis, $U\in \Bf(\Hc)$, $\Cb \Gs$ and $\Cb \Gs'$ can then be brought to their Wedderburn block-diagonal form \cite{KLV,Dygen,ZanardiVS},
\begin{align}
    \label{groupalg}
    \mathbb{C}\Gs &=  U\bigg(\bigoplus_{k}  \one_{F,k}\otimes \Bf(\Hc_{G,k})\bigg)  U^\dag, \\
    \label{commutant}
    {{\mathbb C}\Gs'}&= U\bigg(\bigoplus_{k} \Bf(\Hc_{F,k})\otimes\one_{G,k}\bigg)U^\dag, 
\end{align}
with an associated virtual-subsystems decomposition as in Eq.\,\eqref{vs}, subject to $\dim(\Hc_{F,k})= n_k$,  $\dim(\Hc_{G,k})= d_k$,  and $\sum_k n_k d_k = \dim(\Hc)=n$.

The following proposition establishes a direct connection between the symmetry group of the generator $\Lc$ and the observable space $\Ns^\perp$:
 
\begin{proposition}
\label{strongcond} 
Let $\{O_i\} \subset \Hf (\Hc)$ be a set of observables, and let $\Lc$ be a Lindblad generator with an associated group $\Gs$ of (weak) symmetries, that is, $[\Lc,\Sc]=0$, for all $S\in\Gs$. If $\{O_i\}\subseteq \mathbb{C}\Gs'$, we have \(\Ns^\perp \subseteq \mathbb{C}\Gs'.\)
\end{proposition}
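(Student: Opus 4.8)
The plan is to reduce the claim to a single invariance statement: that the commutant algebra $\Cb\Gs'$ is invariant under the \emph{dual} generator $\Lc^\dag$. Once this is established, the result will be immediate from the Krylov characterization \eqref{eqn:non_obs_Heisenberg_2}, since $\Ns^\perp = \Span\{\Lc^{\dag j}(O_i)\}$ with each $O_i\in\Cb\Gs'$ by hypothesis: an $\Lc^\dag$-invariant subspace containing the generators $\{O_i\}$ must contain the entire span of their iterated images.

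First I would record the superoperator form of membership in the commutant. Since each $S\in\Gs$ is unitary, $X\in\Cb\Gs'$ iff $[S,X]=0$ iff $SXS^\dag=X$, i.e.\ $\Sc(X)=X$; thus $\Cb\Gs' = \bigcap_{S\in\Gs}\fix(\Sc)$ is exactly the intersection of the $1$-eigenspaces of the superoperators $\Sc$. Next, the weak-symmetry hypothesis $[\Sc,\Lc]=0$ for all $S\in\Gs$ is, as noted just below \eqref{eq:sym}, equivalent to $[\Sc,\Lc^\dag]=0$ (one checks $\Sc^\dag=\Sc^{-1}$ and takes Hilbert--Schmidt adjoints of the commutator relation).

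With these two facts in hand the invariance is a one-line computation: for any $X\in\Cb\Gs'$ and any $S\in\Gs$,
\[ \Sc\big(\Lc^\dag(X)\big) = \Lc^\dag\big(\Sc(X)\big) = \Lc^\dag(X), \]
so $\Lc^\dag(X)\in\fix(\Sc)$ for every $S$, i.e.\ $\Lc^\dag(X)\in\Cb\Gs'$. Iterating (using that $\Cb\Gs'$ is a subspace), I obtain $\Lc^{\dag j}(O_i)\in\Cb\Gs'$ for all $i$ and all $j$, whence taking the span yields $\Ns^\perp\subseteq\Cb\Gs'$, as claimed.

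I do not anticipate any serious obstacle; the only point requiring care is the \emph{duality bookkeeping}. The observable space $\Ns^\perp$ is generated in the Heisenberg picture by $\Lc^\dag$, so one must invoke $\Lc^\dag$-invariance of $\Cb\Gs'$ rather than the $\Lc$-invariance already recorded in the footnote to Eqs.\,\eqref{groupalg}--\eqref{commutant}. Because the weak-symmetry condition is symmetric under passing to adjoints, this distinction costs nothing, and the same intersection-of-$1$-eigenspaces argument used there for $\Lc$ applies verbatim to $\Lc^\dag$.
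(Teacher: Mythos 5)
Your proof is correct and follows essentially the same route as the paper's: both commute $\Sc$ past $\Lc^{\dag}$ using the weak-symmetry condition, use $\Sc(O_i)=O_i$ to conclude each Krylov generator $\Lc^{\dag j}(O_i)$ is fixed by every $\Sc$ and hence lies in $\Cb\Gs'$, and then take the span. The paper merely packages the fixed-point condition via the explicit Wedderburn block form of $O_i$, which is equivalent to your intersection-of-$1$-eigenspaces formulation.
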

\begin{proof}
Observe that $\Sc\Lc^{\dag k}(O_i) = \Lc^{\dag k}\Sc(O_i) = \Lc^{\dag k}(O_i)$, $\forall k\geq 0, \forall i$ where, in the second equality, we have used the fact that, since $O_i \in \mathbb{C}\Gs'$, we may write
\begin{equation}
O_i=U\bigg(\bigoplus_{k} C_{ik}\otimes \one_{G,k}\bigg)U^\dag, \quad C_{ik}\in\Bf(\Hc_{F,k}).
\label{obs}
\end{equation}
It follows that, for all the generators of $\Ns^\perp = \Span\{\Lc^{k \dag}(O_i),\, \forall k\geq 0,\, \forall i \}$, we have that $\Lc^{\dag k}(O_i)\in\Cb\Gs'$. Since all the generators of $\Ns^\perp$ are contained in $\Cb\Gs'$, any of their linear combination $X\in\Ns^\perp$ is also in $\Cb\Gs'$.
\end{proof}

Several remarks are worth making. First, notice that the above proposition implies the following chain of inclusions: 
\begin{equation}
\Ns^\perp\subseteq \alg\{\Ns^\perp\} =\Os  \subseteq \Cb\Gs' , 
\label{inclusions}
\end{equation}
since $\alg\{\Ns^\perp\}$ is the smallest $*$-algebra that contains $\Ns^\perp$ by definition. This allows us to conclude that: 
(i) The algebra $\Cb\Gs'$, via Theorem \ref{thm:non_observable-projection} and the procedure described in Sec.\,\ref{sec:observablered}, may be used to obtain a valid quantum MR, with $\Os \equiv \Cb\Gs'$. (ii) The resulting observable-based MR is (at least) as good as a reduction that directly exploits the knowledge of symmetries, with the associated irrep block-decomposition of Eqs.\,\eqref{groupalg}-\eqref{commutant} being computed through known algorithms \cite{de2011numerical,Wang,Kribs,NumFragmentation}. Note that knowing that a model exhibits some symmetries may help in computing $\Ns^\perp$, as one could perform the symmetry-based MR first, then compute $\Ns^\perp$ on the resulting reduced model more efficiently and verify whether further reduction is possible or not.

Second, the assumption $O_i\in\Cb\Gs'$ implies that the observables of interest have the block-structure given in Eq.\,\eqref{obs}, leading to \(\check{O}_i=\bigoplus_{k} C_{ik}\). Thus, we also have $\check{\rho}=\Rc(\rho) \equiv \bigoplus_k \rho_k$. In other words, the only parts of the reduced state that are necessary to reproduce the expectation values of the observables $O_i$ are those in $\Cb\Gs'$, with the rest of the information contained in $\rho$ being irrelevant for this purpose. 

Third, since the reduced generator $\check{\Lc}$ leaves $\check{\As}$ invariant by construction, the reduced Hamiltonian and noise operators need to obey appropriate conditions for this to holds. A general characterization for a semigroup generator to leave a $*$-subalgebra invariant have been recently developed in \cite{hasenohrl2023generators}. If $\Gs$ is a group of strong symmetries, the commutation of each symmetry operator in $\Gs$ with $H, L_u$ further leads to a MR of a particularly simple form. Since we have 
\[ \left \{ \begin{array}{l}
H=U\bigg(\bigoplus_{k} H_k\otimes \one_{G,k}\bigg)U^\dag, \\
L_u=U\bigg(\bigoplus_{k} L_{u,k}\otimes \one_{G,k}\bigg)U^\dag, \quad \forall u, 
\end{array} \right . \]
the corresponding reduced operators read 
\begin{align}
\label{reducedObHamLind}
\check{H}=\bigoplus_{k} H_k, \quad 
\check{L}_{u}=\bigoplus_{k}
 L_{u,k},
\end{align}
all belonging to the reduced operator algebra $\check{\As} \equiv \bigoplus_k \Bf(\Hc_{F,k})$. One can easily verify that having Hamiltonian and noise operators of the form given in Eq.\,\eqref{reducedObHamLind} suffices for $\check{\As}$ to be $\check{\Lc}$-invariant.

\subsection{Observable-dependent symmetries}
\label{sec:ods}

While the above discussion shows how our approach recovers standard symmetry-based approaches, we now show how a generalization of the notion of weak symmetry may lead to more general reductions than achievable by symmetry alone. In doing so, we also highlight the origin of the advantage that our systematic observable-based MR technique affords.

\begin{definition}[Observable-dependent symmetries]
\label{property1}
    Given a Lindblad generator $\Lc$, a set of observables $\{O_i\}$ and a unitary operator $S\in \Bf(\Hc)$, with associated 
    super-operator $\Sc(\cdot) = S \cdot S^\dag$, we say that $S$ is a $\{O_i\}$-{\em (observable-)dependent symmetry} (ODS) if
    \begin{equation}
    \label{defODS}
    \Sc \Lc^{\dag \,n}(O_i) = \Lc^{\dag \,n}(O_i), \quad \forall n \in {\mathbb N}, \forall i.
    \end{equation}
\end{definition}

Essentially, in contrast with the definition of a weak symmetry in Eq.\,\eqref{eq:sym}, the commutation between the unitary superoperator and the dynamics need \emph{not} hold in general, but only when the action of the superoperators is restricted to a designated set of observables, which is also required to be invariant under the symmetry transformations. In fact, Eq.\,\eqref{defODS} is a compact way to require both of these conditions:
   \[ \left \{ \begin{array}{l}
    \Sc\Lc^{\dag n}(O_i) = \Lc^{\dag n}\Sc(O_i), \quad \forall n\geq 0, \, \forall i, \\  
    \Sc(O_i)=O_i,\quad  \forall i.
	\end{array} \right .    \]
Equivalently, the ODS property in Eq.\,\eqref{defODS} may be restated as
\[\Sc e^{\Lc^\dag t}(O_i) = e^{\Lc^\dag t}(O_i), \quad\forall t, \, \forall i.\] 
\noindent 
It is immediate to verify that the set of ODS operators for a fixed set of observables also forms a group. 

Given the above definitions, a generalization of Proposition \ref{strongcond} follows:
\begin{corollary}
\label{generalcon}
Let $\{O_i\} \subset \Hf (\Hc)$ be a set of observables, and let $\Lc$ be a Lindblad generator with an associated group $\Gs$ of ODS operators, that is, Eq.\,\eqref{defODS} holds for all $S\in\Gs$.  
Then $\Ns^\perp\subseteq \Cb\Gs'$.
\end{corollary}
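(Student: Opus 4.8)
The plan is to observe that the ODS condition in Eq.~\eqref{defODS} was designed precisely to encode membership of the generators of $\Ns^\perp$ in the commutant $\Cb\Gs'$, so that the argument reduces to a direct verification combined with the linearity of $\Cb\Gs'$. The structure will closely parallel the proof of Proposition~\ref{strongcond}, but with the intertwining step rendered trivial by the ODS hypothesis.

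First I would record the elementary fact that, for a \emph{unitary} $S$ with associated superoperator $\Sc(\cdot) = S \cdot S^\dag$, the fixed-point condition $\Sc(X) = X$ is equivalent to $SX = XS$, i.e.\ $[S,X] = 0$. Indeed, right-multiplying $SXS^\dag = X$ by $S$ and using $S^\dag S = \one$ gives $SX = XS$, and the converse is immediate. Consequently, $X \in \Cb\Gs'$ if and only if $\Sc(X) = X$ for every $S \in \Gs$. Next, I would invoke the explicit characterization of the observable space in Eq.~\eqref{eqn:non_obs_Heisenberg_2}, namely $\Ns^\perp = \Span\{\Lc^{\dag j}(O_i),\, \forall i,\, \forall j = 0,\dots,n^2-1\}$, and note that the defining ODS property $\Sc\Lc^{\dag n}(O_i) = \Lc^{\dag n}(O_i)$ holds for every $S \in \Gs$ and every $n, i$. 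By the equivalence just established, this states exactly that each generator $\Lc^{\dag j}(O_i)$ lies in $\Cb\Gs'$.

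Finally, since $\Cb\Gs'$ is a (linear) algebra and $\Ns^\perp$ is by definition the span of the generators $\Lc^{\dag j}(O_i)$, any $X \in \Ns^\perp$ is a linear combination of operators in $\Cb\Gs'$ and hence itself lies in $\Cb\Gs'$. This yields $\Ns^\perp \subseteq \Cb\Gs'$, completing the argument.

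I do not anticipate a genuine obstacle here: in contrast with Proposition~\ref{strongcond}, where one first had to derive the intertwining $\Sc\Lc^{\dag k}(O_i) = \Lc^{\dag k}\Sc(O_i)$ from the weak-symmetry commutation $[\Sc,\Lc]=0$ and only afterward invoke $\Sc(O_i)=O_i$, the ODS hypothesis directly asserts the combined equality $\Sc\Lc^{\dag n}(O_i) = \Lc^{\dag n}(O_i)$. The only point requiring any care is the translation between the superoperator fixed-point condition and operator commutation with $S$, which is immediate for unitaries; everything else is bookkeeping. This is precisely why the ODS notion genuinely generalizes the weak-symmetry setting, since $\Gs$ need not commute with $\Lc$ on all of $\Bf(\Hc)$ for the inclusion to hold.
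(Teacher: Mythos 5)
Your proof is correct and follows essentially the same route as the paper, which simply states that the corollary's proof repeats the steps of Proposition~\ref{strongcond}: the ODS hypothesis hands you $\Sc\Lc^{\dag n}(O_i)=\Lc^{\dag n}(O_i)$ directly, so each generator of $\Ns^\perp$ lies in $\Cb\Gs'$, and the inclusion follows by linearity. Your explicit remark that $\Sc(X)=X$ for unitary $S$ is equivalent to $[S,X]=0$ is the same (implicit) translation the paper relies on, so there is nothing to add.
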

The proof follows identical steps to those in Proposition \ref{strongcond}. Remarkably, as we formally prove in Appendix \ref{appendix:supplementary_results}, reducing the dynamics to the commutant of the ODS group of interest turns out to be equivalent to constructing the reduction to the output algebra $\Os$:

\begin{theorem}
\label{maxsymm}
Let $\{O_i\} \subset \Hf (\Hc)$ be a set of observables, and let $\Lc$ be a Lindblad generator. Then $\alg\{\Ns^\perp\}\subsetneq\Bf(\Hc)$ if and only if there exists a non-trivial $\{O_i\}$-ODS for $\Lc$. Furthermore, we have 
$$\Os=\alg\{\Ns^\perp\}=\Cb\bar\Gs',$$ 
where $\bar\Gs$ is the largest group of ODS for the system.
\end{theorem}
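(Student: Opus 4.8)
The plan is to reduce both assertions to a single structural observation: a unitary $S$ is an $\{O_i\}$-ODS if and only if $S$ lies in the commutant of the output algebra, $S\in\Os'$. Granting this, the ``furthermore'' equality follows from the bicommutant theorem, and the iff follows from the fact that $\Os=\Bf(\Hc)$ is equivalent to $\Os'$ being trivial. First I would establish the characterization. Since $\Ns^\perp=\Span\{\Lc^{\dag n}(O_i)\}$ and the $O_i$ are Hermitian while $\Lc^\dag$ preserves Hermiticity, $\Ns^\perp$ is a self-adjoint operator space, and the defining condition $\Sc\Lc^{\dag n}(O_i)=\Lc^{\dag n}(O_i)$ is, by linearity, equivalent to $\Sc(Y)=Y$ for every $Y\in\Ns^\perp$. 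Writing $\Sc(Y)=SYS^\dag$ and using unitarity, this reads $[S,Y]=0$ for all $Y\in\Ns^\perp$, i.e. $S\in(\Ns^\perp)'$. Because the commutant of a set coincides with the commutant of the $*$-algebra it generates, $(\Ns^\perp)'=\alg\{\Ns^\perp\}'=\Os'$; hence $\bar\Gs$ is precisely the group of all unitaries contained in $\Os'$.

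Next I would derive the equality $\Os=\Cb\bar\Gs'$. The commutant of a group equals the commutant of its complex span, so $\Cb\bar\Gs'=\bar\Gs'$. Since $\Os'$ is a unital $*$-algebra (here I invoke the standing assumption that $\Ns^\perp$ has full support, so that $\one\in\Os$ and $\Os'$ is unital), I would use the standard fact that a unital $*$-algebra is linearly spanned by its unitary elements: any self-adjoint contraction $A\in\Os'$ is $\tfrac12(U+U^\dag)$ with $U=A+i\sqrt{\one-A^2}\in\Os'$ unitary, and every element is a combination of two self-adjoint ones. Therefore $\Cb\bar\Gs=\Os'$, and consequently $\Cb\bar\Gs'=\Os''=\Os$ by the bicommutant theorem in finite dimensions. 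Combined with the definition $\Os=\alg\{\Ns^\perp\}$, this yields the claimed chain of equalities.

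Finally I would prove the iff. By the bicommutant theorem, $\Os=\Bf(\Hc)$ holds exactly when $\Os'=\Cb\one$. If $\alg\{\Ns^\perp\}\subsetneq\Bf(\Hc)$, then $\Os'\supsetneq\Cb\one$ contains a non-scalar self-adjoint $A$ having two distinct eigenvalues; for small $\theta\neq0$ the unitary $e^{i\theta A}\in\Os'$ is non-scalar, so it is a non-trivial ODS by the characterization above. Conversely, a non-trivial (non-scalar) ODS $S$ lies in $\Os'$ and witnesses $\Os'\supsetneq\Cb\one$, forcing $\Os\subsetneq\Bf(\Hc)$.

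The main obstacle I anticipate is not any single deduction but the careful handling of the support/unitality hypothesis: both the bicommutant identity $\Os''=\Os$ and the ``unitaries span the algebra'' fact presuppose that $\Os$ is unital, which is only guaranteed after restricting to the support of $\Ns^\perp$ as in the reduction procedure of Sec.\,\ref{sec:observablered}. I would therefore state and prove the result on that support, and separately verify that this restriction leaves the ODS group unchanged, so that the equalities transfer back faithfully to the original space.
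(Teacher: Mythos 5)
Your proof is correct, and it reaches the theorem by a more coordinate-free route than the paper's. Both arguments rest on the same pivotal observation --- that $S$ is an $\{O_i\}$-ODS precisely when $S$ is a unitary lying in $(\Ns^\perp)'=\Os'$ --- but they diverge in how this is converted into $\Cb\bar\Gs'=\Os$. The paper works through the Wedderburn decomposition of $\alg(\Ns^\perp)$: it writes the generators $\Lc^{\dag n}(O_i)$ in block form, notes that the blocks $O_{n,i,k}$ generate all of $\Bf(\Hc_{F,k})$, exhibits the maximal ODS group explicitly as $U\big(\bigoplus_k\one_{F,k}\otimes\Uf(\Hc_{G,k})\big)U^\dag$, and only then invokes the double commutant via Proposition \ref{prop:double_commutant}. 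You bypass the decomposition entirely by using the abstract fact that a unital $*$-algebra is spanned by its unitaries (via $A=\tfrac12(U+U^\dag)$ with $U=A+i\sqrt{\one-A^2}$), so that $\Cb\bar\Gs=\Os'$ and hence $\Cb\bar\Gs'=\Os''=\Os$; maximality of $\bar\Gs$ and the existence of a non-scalar ODS whenever $\Os\subsetneq\Bf(\Hc)$ then come essentially for free. What the paper's concrete construction buys is an explicit block description of $\bar\Gs$ that is reused in the examples; what your argument buys is brevity and the realization that the ``if and only if'' is just the statement that a proper unital $*$-subalgebra of $\Bf(\Hc)$ has a non-scalar unitary in its commutant. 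Your caveat about unitality is well placed and is the one point where precision matters: $\Os''=\Os$ genuinely requires $\one\in\Os$, an assumption the paper makes implicitly by writing the Wedderburn decomposition without a $\zero_R$ summand; without full support the right-hand side acquires an extra $\Cb\one_R$ term (and the ODS group itself enlarges by arbitrary unitaries on the complement of the support), so the restriction to the support that you propose is necessary rather than cosmetic.
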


Altogether, we have shown that knowing that a system admits a set of weak or observable-dependent symmetries allows us to directly identify a valid algebra for the reduction, via the commutant $\Cb\Gs'$. In what follows, we will illustrate the applicability and usefulness of the general framework and the concepts we have developed in paradigmatic examples motivated by quantum many-body physics and quantum computation.

\section{Illustrative applications}
\label{sec:examples}

In all the settings under consideration, we shall work with a finite number $N$ of spin-1/2 (qubits), with associated Hilbert and operator space
$$\Hc = \otimes_{i=1}^{N} \Hc_i, \quad \Hc_i\simeq\Cb^2,\quad \Bf(\Hc) \simeq (\Cb^4)^{\otimes N}. $$ 
As usual, $\sigma_q,$ with $q\in\{0,x,y,z\}$ and $\sigma_0\equiv \one_2$, denote Pauli spin-1/2 matrices, while we will use the shorthand notation $\sigma_q^{(k)} \equiv \otimes_{i=1}^{k-1} \sigma_0 \otimes \sigma_q \otimes_{i=k+1}^{N}\sigma_0$ to denote their multi-spin extension, with $\sigma_q^{(k)}\in\Bf(\Hc)$, acting non-trivially only on the $k$-th spin when $q\equiv u \in \{x,y,z\}$. Similarly, we will define local ladder (raising and lowering) operators acting non-trivially on spin $k$ as $\sigma_\pm ^{(k)} \equiv \frac{1}{2}(\sigma_x^{(k)}\pm i\sigma_y^{(k)})$.

\subsection{Dissipative central spin model}
\label{sec:examples_central_spin}

We consider a central spin system, $S$, coupled to a ``structured quantum environment,'' namely, an interacting spin bath, $B$, responsible for generally non-Markovian dynamics on $S$, and also sometimes referred to as a ``non-Markovian core'' \cite{Tamascelli}, along with a bath inducing {Markovian dissipation on $B$} alone. Models of this form have long been studied from the point of view of their exact solvability \cite{Gaudin1976,Gerardo}, as well as their relevance to both condensed-matter physics and quantum information. In particular, central spin models have been used to describe the decoherence experienced by a spin qubit coupled to a spin bath, e.g., an electron in a semiconductor quantum dot or a nitrogen-vacancy center in diamond \cite{Glazman,Witzel,Slava,Coish,onizhuk2023understanding,Lili,Hollenberg}. Likewise, they have been employed to represent networks of qubits connected in a spin-star topology in the context of state transfer and entanglement generation \cite{Burgarth2004,Hutton2004,Yung2011} and for studies of dissipative phase transitions \cite{kessler2012dissipative,Carollo}. 

Explicitly, in what follows we label the central spin by $1$, while the remaining $N-1\equiv N_B$ spins correspond to bath spins, whereby $\Hc_B\simeq\Cb^{2^{N_B}}$. The full dynamics for the joint system-bath state $\rho(t) \in \Df(\Hc_S \otimes \Hc_B)$ is determined by a Lindblad master equation of the following form: 
\begin{eqnarray}
\dot{\rho}= {\cal L}(\rho)= -i \big [H_{SB} , \rho \big]  
+\sum_{L_B} \Dc_{L_B}(\rho), \;\;
\label{CSL}
\end{eqnarray}
where each of the Markovian dissipative generators ${\cal D}_{L_B}$ acts non-trivially only on the bath spins, and the joint system-bath Hamiltonian reads
\begin{equation}
H_{SB} = H_S + H_{B} + H_{\text{int}}.
\label{HSB}
\end{equation}
Physically, $H_S$ and $H_B$ account for the bare evolution of the central spin and the bath, respectively, whereas $H_{\text{int}}$ is responsible for coupling $S$ to $B$, hence for the ensuing non-Markovian spin-bath decoherence. For the purpose of our discussion, it is convenient to write $H_B\equiv H_{B,0}+H_{B, \text{int}}$, with $H_{B, \text{int}}$ specifically describing intra-bath spin couplings. In applications, the goal is to reproduce the trajectory of the reduced state on the central spin, that is, the relevant output $Y(t)\equiv \rho_S(t) =\tr_B(\rho(t))$. Equivalently, we can choose the set of Pauli observables for the central spin, 
$$\{O_u\}\equiv\{\sigma_u^{(1)}, u \in\{x, y,z \} \},$$ 
and identify $Y(t)$ with the Bloch vector of their time-dependent expectation values, or with any functional of interest determined by these expectations, for instance, the Von Neumann entropy of the central spin, given by $ \Hs(\rho_S(t)) = - \tr[\rho_S(t) \ln \rho_S(t) ].$

In the remaining of this subsection, we examine dynamical generators of increasing complexity. A fixed choice for the central-spin and interaction Hamiltonians, $H_S$ and $H_{\text{int}}$ in $H_{SB}$, will serve as a backbone for all these generators, while different choices of $H_B$ and $L_B$ are used to assess the achievable observable-based MR. Specifically, in Eq.\,\eqref{HSB} we assume that 
\begin{equation}
\label{csbare}
H_S= \frac{1}{2} \Big( \omega_1 \sigma_z^{(1)} + \eta  \sigma_x^{(1)} \Big), 
\quad \omega_1, \eta \in \Rb. 
\end{equation}
Furthermore, we assume that the central spin also couples to the bath spin via a possibly anisotropic but {\em collective} interaction, as described by an XYZ Hamiltonian of the form 
\begin{align}
H_{\text{int}} \equiv {\frac{1}{2}} \Big( A_x\sigma_x^{(1)} J_x +  A_y\sigma_y^{(1)} J_y + A_z\sigma_z^{(1)}J_z \Big),
\label{Hint}
\end{align}
with $J_u\equiv \tfrac{1}{2} \sum_{k=2}^N \sigma_u^{(k)}$ denoting total bath-spin angular momentum operators, and $A_u \in {\mathbb R}$ being uniform strength parameters, $A_u^{(i)} \equiv A_u, \forall i>1$.
Physically, this corresponds to requiring that no inhomogeneity or local disorder is present in the couplings. In the limit of spatially isotropic couplings ($A_u\equiv A, \forall u$), the above further reduces to the Heisenberg (or XXX) Hamiltonian, $H_{\text{int}}^{\text{Heis}} \equiv \frac{1}{2}A\, \vec{\sigma}^{(1)} \cdot \vec{J}$; in this case, the interaction has full rotational symmetry, $[H_{\text{int}}^{\text{Heis}}, J_u]=0$, and the resulting central-spin dynamics is known to be exactly solvable for arbitrary \emph{factorized} initial conditions \cite{Boris,Slava}. 
Another notable limiting case arises when the system-bath coupling takes a single-axis Ising form (e.g., $A_x=A_y=0$), for which, in addition, $[H_\text{int}^{\text{Ising}}, H_B]=0$; a purely dephasing, analytically solvable spin-bath model is then obtained which, despite its simplicity, has provided useful insight into the emergence of pointer states \cite{zurek1982environment} and the role of intra-bath entanglement \cite{Dawson} (see also Appendix\,\ref{appendix:zurek}).

Even in their most general form involving non-collective and anisotropic couplings, central-spin model Hamiltonians remain {\em formally} exactly solvable in the sense of admitting an exact algebraic diagonalization via Bethe Ansatz techniques, as long as appropriate equations are obeyed \cite{Gaudin1976,Gerardo} and factorized initial conditions are assumed. In practice, however, extracting the desired output dynamics from the Bethe Ansatz solution remains very hard, and various (e.g., mean-field) approximations and (or) numerical techniques must be employed \cite{Anders,Uhrig}.  Although the disorder-free case of uniform couplings we consider is highly idealized, it provides a useful playground for showing how observable-based MR may be carried out by using the general approach of Sec.\,\ref{SymmRed}. Having achieved that, in Sec.\,\ref{sec:examples_central_spin_reachable} we further demonstrate how additional MR reduction can arise, in the sense of Sec.\,\ref{sec:reachable}, depending on the choice of the (reduced) initial condition.

\subsubsection{Non-dissipative central spin with non-dynamical bath}
\label{section:basecase}

Consider a non-dissipative setting where the dynamics of the bath spins may be ignored on the time scales of interest -- a so-called ``non-dynamical'' bath, as it is can be the case, for instance, for nuclear spin baths \cite{Slava}. That is, we set $H_B\equiv 0$, in addition to ${\cal D}_{L_B}\equiv 0$, in Eq.\,\eqref{CSL}.

Thanks to the assumed collective nature of the couplings $A_u$, the permutation group $\mathcal{S}_{N_B}$ acting on the $N_B$ bath spins furnishes a (non-Abelian) unitary symmetry of $H_{SB}$:
\begin{equation*}
[G,H_{SB}]=0,\qquad\forall G\in\Gs_N \equiv \one_S\otimes\mathcal{S}_{N_B}.
\end{equation*}
Furthermore, since the observables of interest act non-trivially only on $S$, it also holds that
\begin{equation*}
[G,O_u]=0,\qquad \forall G\in\Gs_N, \quad \forall u=x,y,z.
\end{equation*}
Thus, the two conditions in Proposition \ref{strongcond} are satisfied and we can carry out an observable-MR procedure based on $\Gs_N$. To do so, we first need to find the unitary change of basis $U$ that decomposes $\Hc$ according to irreps of $\Gs_N$, and brings $\mathbb{C}\Gs_N'$ into its block-diagonal structure. The result follows from a well-known consequence of the Schur-Weyl duality \cite{Fulton}, which establishes a correspondence between irreps of the permutation group ($\Sc_{N_B}$ in our case) and those of the unitary group ({SU($2^{N_B})$} in our case). Let us first decompose ${\cal H}$ into invariant subspaces corresponding to a fixed angular momentum eigenvalue, $\Hc=\bigoplus_j\Hc_j,$ where $j(j+1)$ is the eigenvalue of the total angular momentum operator $J^2 \equiv J_x^2+J_y^2+J_z^2$ and $j=0,1,\ldots,\frac{N_B}{2}$ ($j=\frac{1}{2},\frac{3}{2},\ldots,\frac{N_B}{2}$) depending on whether $N_B$ is even (odd). Each invariant subspace $\Hc_j$ can then be factorized into two virtual subsystems, $\Hc_j= \Hc_{F,j}\otimes\Hc_{G,j}$ \cite{ViolaJPA,ZanardiVS,kempe2001theory}, namely,
\begin{align}
\nonumber
\Hc_{F,j}&\equiv\Span\{\ket{a}\otimes\ket{j,m}|a=0,1,\,m=-j,\ldots,j\}, \\
\Hc_{G,j}&\equiv\Span\{\ket{j;\alpha}|\alpha=1, \ldots,d_j\}.
\label{angbasis}
\end{align}
Here, $\ket{j,m; \alpha}$ are simultaneous eigenstates of $J^2$ and $J_z$, and $\alpha$ labels the multiplicity of the $j$-th angular momentum irrep. By Shur-Weyl duality, the latter also determines the dimension of the corresponding $\Sc_{N_B}$-irrep, and is given by \cite{kempe2001theory,Arenz_2014}
\begin{align}
\label{multiplicity}
d_j = \frac{(2j+1)N_B!}{ \big( \tfrac{N_B}{2} -j\big)! \big(\tfrac{N_B}{2}+j+1 \big)!    }. 
\end{align}
Thus, we may write $\Hc_S = \Span\{\ket{a}|\, a=0,1\}$ and 
$\Hc_B = \Span\{\ket{j,m;\alpha}|\, j,m,\alpha\}$, meaning that the desired unitary change of basis $U$ effects a transformation from the local to the total angular-momentum basis (the so-called ``Shur transform'' \cite{Fulton,EffcientQAlg}) on $\Hc_B$.

In the new basis, both $\mathbb{C}\Gs_N$ and $\mathbb{C}\Gs_N'$ have a block-diagonal structure. In particular, by using Eq.\,\eqref{commutant}, it follows that 
\begin{align}
\dim(\mathbb{C}\Gs_N')&=\sum_{j} \dim(\Hc_{F,j})^2 =4\sum_{j} (2j+1)^2 \notag \\  
&=\tfrac{2}{3} N(N+1)(N+2) .
\label{eq:DimScaling}
\end{align}
Thanks to Proposition \ref{strongcond} and Eq.\eqref{inclusions}, we can obtain a valid reduced quantum model by projecting onto $\mathbb{C}\Gs_N'$, with the guarantee that the resulting model will correctly reproduce the evolution of $\rho_S(t)$ at all times $t\geq0$, for arbitrary initial conditions $\rho_0$.
Numerical calculations of the operators spaces $\Ns^\perp$ and $\Os$ show that, for generic values of the coefficients, $\Os=\mathbb{C}\Gs_N'$ (see Fig.\,\ref{fig:central_spin_dim}), thus implying that the observable-based MR onto $\Cb\Gs_N'$ is generically {\em optimal}. The scaling of the dimension of the relevant operator space is thus reduced {\em from exponential to polynomial} (cubical) in system size $N$, after leveraging the permutation symmetry on the bath spins. 
As Fig.\,\ref{fig:central_spin_dim} also shows, further reduction may be achieved if only a minimal linear (not necessarily quantum) model is sought, in the sense of Theorem \ref{thm:linear_observable_model_reduction}, in which case projecting onto the non-observable subspace $\Ns^\perp$ suffices. The dimension of the latter subspace is upper-bounded by the dimension of the largest block in the Wedderburn decomposition of $\mathbb{C}\Gs_N'$ in Eq.\,\eqref{eq:DimScaling},
$$ \dim(\Ns^\perp) \leq \max_j \dim(\Bf(\Hc_{F,j})) = 4N^2,$$
which implies a {\em quadratic} scaling\footnote{As one may show, a {\em linear} scaling of both $\dim(\Ns^\perp)$ {\em and} $\dim \Os$ is found for the class of analytically solvable central-spin models with single-axis coupling; see Appendix \ref{appendix:zurek}.}.  
As we noted, projecting onto $\Ns^\perp$ may be fully appropriate for the purposes of simulating the dynamics on a classical computer and numerically computing the output quantities of interest. 

\begin{figure}[t]
\centering
\includegraphics[width=0.95\columnwidth]{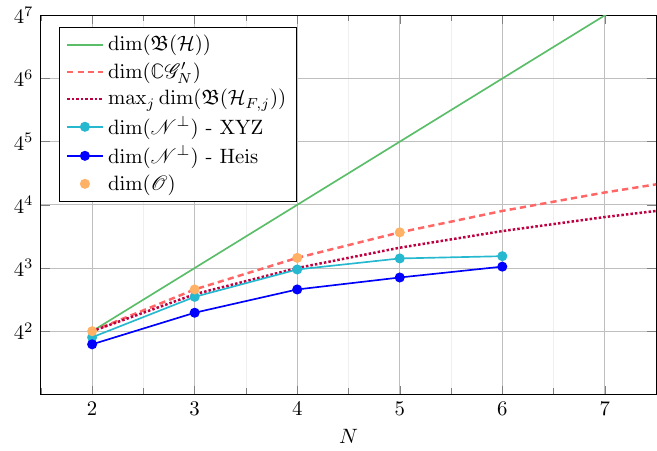}
\vspace*{-2mm}
\caption{\normalfont Dimensions of the observable space $\Ns^\perp$ and output algebra $\Os$, computed numerically for a non-dissipative central-spin model with collective couplings, Eq.\,\eqref{Hint}, compared to the dimensions of the full operator space $\Bf(\Hc)$ and group algebra $\Cb\Gs_N'$. We consider both a generic XYZ form for $H_{\text{int}}$, with arbitrarily chosen coefficients $A_x, A_y, A_z$, and $H_\text{int}=H_{\text{int}}^{\text{Heis}}$, with $A_x = A_y = A_z$. We set  $\omega_1=1$ (arbitrary units), and $\eta =H_B=0$ in both cases. Note that $\dim(\Ns^\perp)$ is smaller in the Heisenberg case, whereas $\dim(\Os)$ is identical for both models. Thus, while the extra symmetry in $H_{\text{int}}^{\text{Heis}}$ allows for a more efficient {\em linear} MR, the complexity is the same in terms of {\em quantum} MR.}
\label{fig:central_spin_dim}
\end{figure}

In order to explicitly carry out the observable-based quantum MR procedure presented in Sec.\,\ref{SymmRed}, the starting point is to define $\Hc_{\check{B},j}\equiv\Span\{\ket{j,m}$, with $m=-j,\dots,j\}$, so that $\Hc_S\otimes\Hc_{\check{B},j}=\Hc_{F,j}$. We then proceed to derive the reduced state, Hamiltonian, and observables accordingly. While the details are included in Appendix \ref{appendix:bipartition}, the final result may be summarized as follows. In the virtual bipartition above, let $W_j^\dag$ be the (non-square) isometry that maps the spin-$j$ invariant subspace $\Hc_j=\Hc_{F,j}\otimes\Hc_{G,j}$ onto $\Hc$. Then, the reduced state on $\Dc(\check{\Hc})$ can be evaluated as
\begin{equation}
\label{RedStateCentral}
\check{\rho}= \bigoplus_{j} \tr_{{\Hc}_{G,j}} \!\big( W_j\rho W_j^\dag \big)  \equiv \bigoplus_j \rho_{j} .
\end{equation}
Correspondingly, the reduced Hamiltonian is obtained by replacing the physical, collective spin-$1/2$ bath operators in 
Eq.\,\eqref{Hint} with new bath operators $\check{J}_{u,j}\in\Bf(\Hc_{\check{B},j})$, which are SU$(2j+1)$ generators, described by generalized Gell-Mann matrices \cite{Bertlmann} (see also Appendix\,\ref{appendix:bipartition}). That is, we have $\check{H}_{SB}\equiv \bigoplus_{j}\check{H}_{j}$, where the $j$-th block Hamiltonian 
\begin{eqnarray}
\label{RedObHamCentral}
   \check{H}_j &\equiv & \check{H}_{S,j} +\check{H}_{\text{int}} 
=  \frac{1}{2} \Big( \omega_1 \sigma_z + \eta \sigma_x \Big) \otimes\one_{\check{B},j}   \\
&+ &\frac{1}{2} \Big( A_x\sigma_x\otimes \check{J}_{x,j} +  A_y\sigma_y\otimes \check{J}_{y,j} + A_z\sigma_z\otimes \check{J}_{z,j} \Big), \nonumber
\end{eqnarray}
Similarly, the observables of interest are reduced according to the following:
\begin{equation}
\check{O}_u = 
\bigoplus_{j} \check{O}_{u,j},
\quad \check{O}_{u,j}\equiv \sigma_{u}\otimes \one_{\check{B},j}, 
\quad u\in \{x,y,z\}.
\label{modal1}
\end{equation}

Thanks to the fact that the reduced state and Hamiltonian share the same block-diagonal structure, the reduced model based on the symmetry $\Gs_N$ alone thus finally reads
\begin{align}
    \label{BaseRedModelCentral}
    \begin{cases}
        \dot{\check{\rho}}_j(t)&=-i[\check{H}_{j},\check{\rho}_{j}(t)] ,\\
        \check{\rho}_j(0)&=\rho_{0j} , 
    \end{cases} 
\end{align}
with the reduced state on the central spin given by
\begin{equation}
\rho_S(t) = \sum_j \tr_{\Hc_{\check{B},j}}[\check{\rho}_j(t)], \quad \forall t \geq 0.
\label{eq:reduced_output_map_central}
\end{equation} 
Since {\em no} assumption is made on the initial system-bath state $\rho(0)=\rho_0$ for Eq.\,\eqref{RedStateCentral} to hold, we stress that the above reduced model correctly reproduces the evolution of the desired observables even when starting from a general, {\em non-factorized} initial condition, without the need of specialized tools such as a $B^+$ decomposition \cite{PhysRevA.100.042120}. In situations where the initial joint state is factorized, say, $\rho(0)\equiv \rho_S(0)\otimes \rho_B$, we note that the map $\rho_S(0) \mapsto \rho_S(t)$ is not divisible in general, hence the reduced system dynamics is not Markovian.

Another noteworthy aspect of the reduced model in Eq.\,\eqref{BaseRedModelCentral} is the fact that each block $\check{\rho}_j$ of the reduced state evolves independently of the others. Accordingly, the expectation value of an observable of interest may be computed as 
\begin{equation}
\langle \sigma_u^{(1)}(t) \rangle = \sum_j\tr[ \check{O}_{u,j}\,\check{\rho}_j(t)] .
\label{modal2}
\end{equation}
This property can be useful both to gain qualitative insight into time evolution and to simplify the simulation of the model, as it is possible to analyze and simulate each block {\em in parallel}, and then sum the different contributions to obtain the full trajectory; see also Remark \ref{remark:computation_Advantage} in Sec.\,\,\ref{sec:examples_central_spin_dissipative}.

As anticipated before, we will now add more terms to this basic Hamiltonian model, while preserving $\Gs_N$, in the most general case, as an ODS.

\begin{remark}
To the best of our knowledge, the existence of an efficient classical algorithm for constructing the required Shur transform is yet to be determined (interestingly, an efficient quantum algorithm exists \cite{EffcientQAlg}). Therefore, although we managed to reduce the scaling of the operator-space dimension to only cubical in the number of degrees of freedom through symmetry considerations, the computational cost needed to arrive at this reduction should be accounted for separately.  
\end{remark}

\subsubsection{Effect of unitary bath dynamics}
\label{sec:examples_intrabath couplings}

Understanding the way in which intra-bath interactions alter the decoherence behavior of the central spin is both of fundamental interest and directly relevant to qubit implementations, especially in the solid state. Here, we show how the MR we carried out for the $H_B=0$ case can be used as a starting point for such a study in settings of physical relevance, where we let $H_B\equiv H_{B,0}+H_{B, \text{int}}$ while still keeping ${\cal D}_{L_B}\equiv 0$, and the bare bath Hamiltonian is also taken to have the following collective form:
\begin{equation}
\label{hbbare}
H_{B,0}=  \frac{1}{2} \Big ( \overline{\omega} J_z + \mu J_x \Big) , \quad \overline{\omega}, \mu \in {\mathbb R}.
\end{equation}

\begin{figure*}[t]
\includegraphics[width=0.99\textwidth]{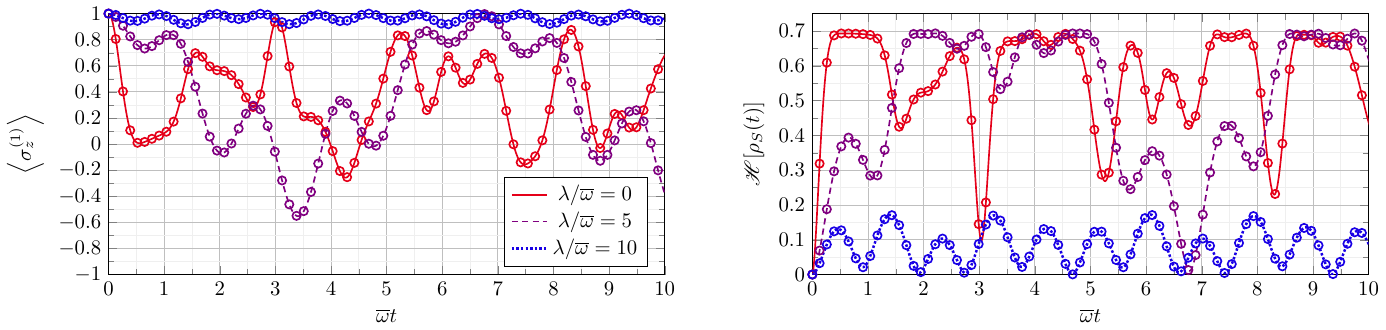}
\caption{\normalfont
Comparison between full vs.\,reduced dynamics for a central-spin model that couples via a general XYZ collective Hamiltonian, Eq.\,\eqref{Hint}, to a bath with Ising intra-bath interactions, Eqs.\,\eqref{hbbare} and \eqref{IsingInt}, for $N=7$. Left: Time-dependent central-spin polarization, $\langle \sigma_z^{(1)}(t) \rangle$. Right: Time-dependent central-spin entropy, $\Hs[\rho_S(t)]$. In both cases, with $\hbar =k_B=1$, time is in units of inverse temperature, and the parameters are: $\omega_1=0.83$, $\eta = 0.01$, $\overline{\omega}=1$, $\mu=0.02$, $A_x=4$, $A_y=0.1$, $A_z=-0.5$. Three different bath-interaction strengths $\lambda$ are used in both the full (dots) and the reduced (solid and dashed lines) models. The initial joint state is $\rho_0=\ket{0}\bra{0}\otimes \rho_B$, with $\rho_B$ being the thermal state at inverse temperature $\beta=50$.} 
\label{fig:Ising}
\end{figure*}

$\bullet$ {\em Rotationally invariant bath interactions.--} First, our symmetry analysis makes it clear that there are non-trivial interaction Hamiltonians $H_{B, \text{int}}$, which have {\em no} effect on the central spin dynamics, $\rho_S(t)$, or on central-spin expectation values, $\langle O_u(t)\rangle$. This is the case if $H_{B, \text{int}} \in \mathbb{C}\Gs_N$, in which case
$$[H_{SB},H_{B, \text{int}}]=[O_u,H_{B, \text{int}}]=0,\quad \forall O_u. $$ 
A prominent example is given by a bath whose spins interact via an Heisenberg Hamiltonian:
\[H_{B, \text{int}}^{\text{Heis}} = \sum_{2\leq i<k} B_{i k} \, \vec{\sigma}^{(i)}\cdot  \vec{\sigma}^{(k)}, \quad B_{i k}  \in {\mathbb R}. \]
It is well-known that the Heisenberg coupling is related to the exchange interaction, since $\vec{\sigma}^{(i)}\cdot  \vec{\sigma}^{(k)} = 2S_{ik}-\one,$ where $S_{ik}$ is the unitary operator that swaps the states of the $i$-th and $k$-th bath spins. Thus, $H_B^{\text{Heis}}\in \mathbb{C}\Gs_N$ and the above commutation property follows. Explicitly, it is immediate to see that 
\begin{eqnarray*}
   \langle \sigma^{(1)}_u(t)\rangle &= &
   \tr(\sigma^{(1)}_u e^{-i(H_{SB}+H_{B,\text{int}}^{\text{Heis}}) t}\rho_0 e^{i(H_{SB}+ H_{B,\text{int}}^{\text{Heis}} t)} \\
 & =&\tr(\sigma^{(1)}_u e^{-iH_{B,\text{int}}^{\text{Heis}}  t}e^{-iH_{SB}t} \rho_0 e^{iH_{SB}t}e^{iH_{B,\text{int}}^{\text{Heis}} t})\\
& =&\tr(\sigma^{(1)}_u e^{-iH_{SB}t} \rho_0e^{iH_{SB}t}), \quad \forall t.
\end{eqnarray*}
However, since the permutation group is non-Abelian, we have $[\Gs_N, H^{\text{Heis}}_B] \ne 0$, except in the uniform-coupling limit where $B_{ik}\equiv B$. Therefore, $\Gs_N$ remains an ODS despite {\em not} being a symmetry, and we can use its commutant algebra to carry out MR. Since 
$\check{H}_{B,\text{int}}=0$ for ${H}_{B,\text{int}}\in \mathbb{C}\Gs_N$,
the resulting reduced model is then still given by Eqs.\,\eqref{BaseRedModelCentral}-\eqref{eq:reduced_output_map_central}, upon redefining $\check{H}_{j}$: if we now denote with $\check{H}_{0,j}$ the Hamiltonian in Eq.\,\eqref{RedObHamCentral}, we have 
\begin{eqnarray}
\check{H}_{j}^{\text{Heis}} & \equiv &\check{H}_{0,j} + \check{H}_{B,0;j} \notag \\
& = & \check{H}_{0,j} + \one_{S} \otimes \frac{1}{2} \Big(  \overline{\omega} \check{J}_{z,j}  + \mu \check{J}_{x,j}  \Big). 
\label{redHj}
\end{eqnarray}

$\bullet$ {\em Permutationally invariant bath interactions.--} Another relevant scenario arises in the situation where $H_{B, {\text{int}}}\in \mathbb{C}\Gs_N'$. As a representative example, consider a uniform Ising interaction, of the form 
\begin{equation}
\label{IsingInt}
H_{B, \text{int}}^{\text{Ising}}=\frac{\lambda}{4}\sum_{2\leq i<k}\sigma_x^{(i)}\sigma_x^{(k)}, \quad \lambda \in {\mathbb R},
\end{equation}
where $\lambda$ is an overall strength parameter and the form of the coupling is inspired by a spin-bath model employed in Ref.\,\cite{Tessieri_2003} as a simplified model to study the effect of anharmonic phonon-phonon interaction on the decoherence of an impurity spin. As long as the couplings are uniform, $H_{B,\text{int}}^{\text{Ising}}$ preserves the permutation invariance, as one may see explicitly by rewriting 
\begin{equation}
H_{B,\text{int}}^{\text{Ising}}=\frac{\lambda}{4}\Big( 2 J_x^2- \frac{N_B}{2}\one_{2^{N}}\Big) \in {\mathbb{C}}\Gs_N' . 
\label{IsingInt2}
\end{equation}
Since both terms are manifestly permutation-invariant, we may write down the intra-bath Hamiltonian that enters the reduced model compactly:
\begin{equation*}
\check{H}_{B,\text{int}}^{\text{Ising}}=\bigoplus_{j} \frac{\lambda}{4}\Big(2\check{J}_{x,j}^2- \frac{N_B}{2}\one_{\Hc_{F,j}}\Big)
\equiv \bigoplus_{j} \check{H}_{B, \text{int}; j}^{\text{Ising}} .
\end{equation*} 
Accordingly, the reduced model is still governed by Eqs.\,\eqref{BaseRedModelCentral}-\eqref{eq:reduced_output_map_central}, where now 
$\check{H}_{j}^{\text{Ising}} =\check{H}_{j}^{\text{Heis}} 
+\check{H}_{B,j}^{\text{Ising}}, $
with $\check{H}_{j}^{\text{Heis}}$ explicitly given in Eq.\,\eqref{redHj}. 

\smallskip

\begin{figure*}[th]
    \centering
     \begin{subfigure}[c]{0.47\textwidth}
     \centering
         \includegraphics[height=\textwidth]{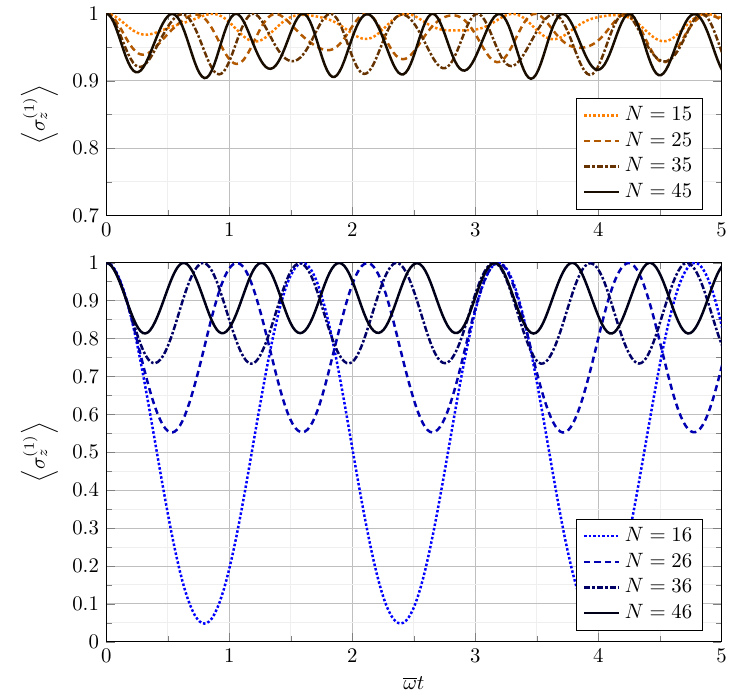}
     \end{subfigure}
     \hspace{20pt}
     \begin{subfigure}[c]{0.47\textwidth}
     \centering
         \includegraphics[height=\textwidth]{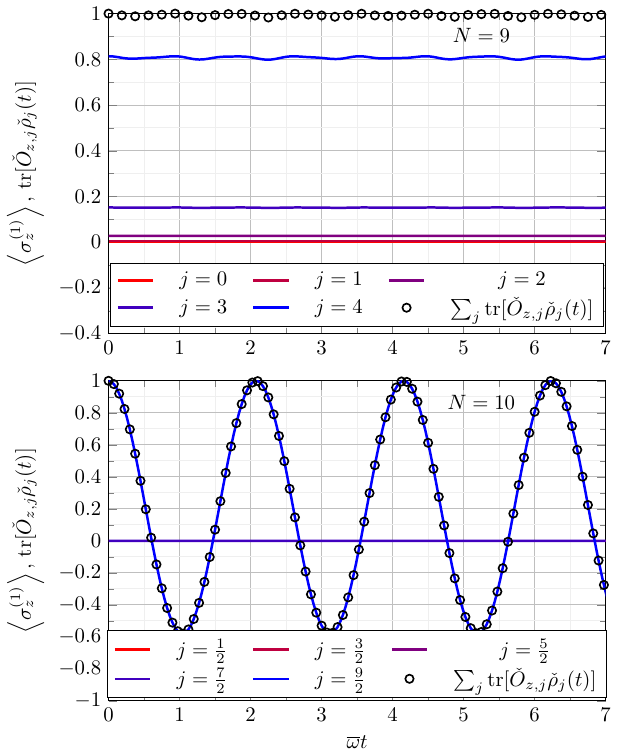}
     \end{subfigure}
    \caption{ \normalfont
    Left panels:
    Time-dependent central spin polarization, $\langle \sigma_z^{(1)}(t) \rangle$ for a central-spin model that couples via an Ising collective Hamiltonian, Eq.\,\eqref{Hint} with $A_y=A_z=0$, to a bath with Ising intra-bath interactions, Eqs.\,\eqref{hbbare} and \eqref{IsingInt}. Increasing even (bottom) vs. odd (top) are considered, for fixed intra-bath coupling strength $\lambda/\overline{\omega} =20$, $\overline{\omega}=0.75$ and $\mu=\eta=0.01$. The remaining parameters are as in Fig.\,\ref{fig:Ising}.
    Right panels: Comparison between the total central-spin polarization, $\langle \sigma_z^{(1)}(t)\rangle$, and the contribution given by each fixed bath-angular momentum block for $N=9,10$. 
    }    
    \label{fig:tessieri}
\end{figure*}

To quantitatively validate our analysis, we compare numerical simulations of dynamical quantities of interest for the full vs. the reduced model in a setting with a general XYZ collective system-bath interaction Hamiltonian and Ising intra-bath interaction Hamiltonian. While, as we mentioned, our MR procedure is applicable to arbitrary system-bath initial conditions, the choice of an initial product state is natural in this setting, namely, $\rho_0 = \ket{0}\bra{0}\otimes \rho_B$, with $\one_S\otimes\rho_B= e^{-\beta H_B} /\tr ( e^{-\beta H_B})$ 
being the thermal state at inverse temperature $\beta \equiv 1/k_B T$. Importantly, this initial state can easily be written in its reduced form, thanks to the fact that $H_{B, {\text{int}}}\in \mathbb{C}\Gs_N'$, Eq.\,\eqref{IsingInt2}; namely, we have 
\begin{equation}
\Rc (e^{-\beta H_B} ) = \bigoplus_j   e^{ -\beta ( \check{H}_{B,0;j}  + \check{H}_{B, \text{int};j}^{\text{Ising}} ) } \,  d_j, 
\label{RedIC}
\end{equation}
with $\check{H}_{B,0;j}$ defined in Eq.\,\eqref{redHj} and the multiplicity factors $d_j$ given in Eq.\,\eqref{multiplicity}. This property allows us to {\em directly} implement the reduced version of the initial state, $\check{\rho}_0$, bypassing the computational effort of reducing the initial state $\rho_0$. Thanks to that, it becomes possible to simulate the (collective) central spin model for a relatively large number of spins, up to $N_B\approx 45$ in only a few tens of seconds on a laptop. 

\begin{figure*}[ht]
    \centering
    \includegraphics[width=0.99\textwidth]{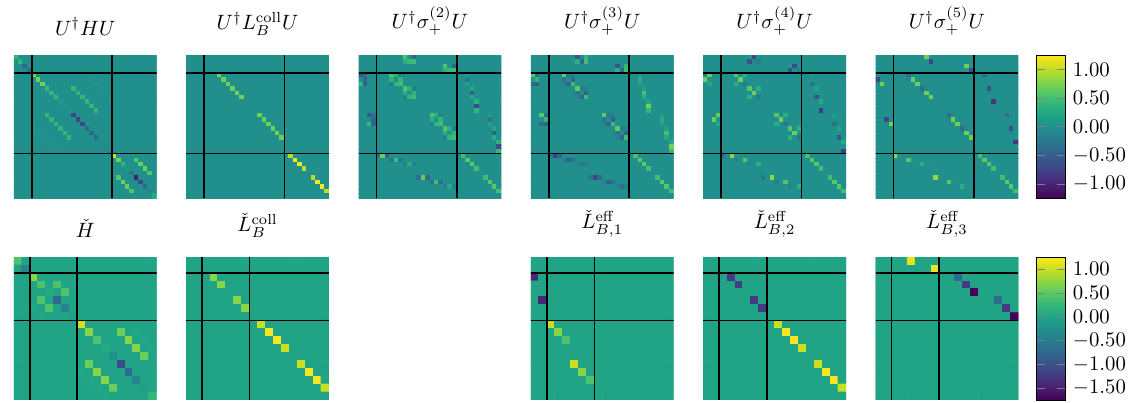}
    \vspace*{2mm}
    \caption{\normalfont Graphical representation of the Hamiltonian and noise operators in  both the original (top row) and reduced (bottom row) models for a central-spin system with Markovian dissipation on $N_B=4$ bath spins. The first and second columns corresponds to the Hamiltonian and the collective pumping dynamics, with $\Lambda=0.5$ in Eq.\,\eqref{RedCollective}, whereas the remaining columns corresponds to local dissipation, with strength $\delta=1$ and effective Lindblad operators given by Eq.\,\eqref{LBeff}. In both cases, we set $A_x=0.6$, $A_y=0.55$, $A_z=0.65$, $\omega_1=0.8$, $H_B=0$. 
    The operators of the original models belong to $\Cb^{32\times 32}$ and are represented in the base that block-diagonalizes the output algebra, while the operators for the reduced models belong to $\Cb^{18\times 18}$ and are represented in the standard basis. The black lines highlight the block structure of the output algebra.}
    \label{fig:central_spin_blocks}
\end{figure*}

Representative results of these simulations are shown in Fig.\,\ref{fig:Ising}, for both the time-dependent expectation value of the central-spin $z$-polarization, $\langle\sigma_z^{(1)} (t)\rangle$ (left column), and the time-dependent central-spin von Neumann entropy, $\Hs[\rho_S(t)]$ (right column). These results demonstrate how the reduced model quantitatively reproduces the full dynamics for the target observable and possibly non-linear functional thereof, as expected. Therefore, the reduced model may be reliably used to explore features related, for example, to the non-trivial interplay between the bath free dynamics and the intra-bath couplings with $H_{\text{int}}$. 

As a concrete illustrative setting, in order to make some contact with the problem investigated in \cite{Tessieri_2003}, we choose similar parameter values\footnote{In the model of \cite{Tessieri_2003}, a non-uniform on-site Zeeman splitting is assumed for the bath spins, $\sum_{i=2}^{N} \omega_i \sigma_z^{(i)}$, rather than our permutationally-invariant $\overline{\omega} J_z$. In our simulations, we have chosen the value of $\overline{\omega}$ to match the expectation value of the frequencies $\omega_i$, which are distributed according to a Debye probability density.} for $\overline{\omega}, \mu$ in $H_{B,0}$ and $\omega_1,\eta$ in $H_S$, and consider an Ising system-bath coupling, i.e., $A_y=A_z=0$ in Eq.\,\eqref{Hint}. One may then assess the extent to which ``self-decoupling'' and decoherence suppression occur for strong intra-bath interactions, as reported by \cite{Tessieri_2003} based on numerical results with up to $N_B=14$ spins in the less symmetric case with randomly distributed $\omega_i$.  Representative results are shown in Fig.\,\ref{fig:tessieri} (left), where the expectation values $\langle\sigma_z^{(1)}\rangle$ for odd and even number of spins are separately plotted against time. Our results suggest that a self-decoupling effect still occurs for $\lambda \gg \overline{\omega}$, as manifested by the fact that the spin polarization approximately oscillates periodically or freezes out for even or odd $N$, respectively. For larger number of spins, however, the difference between odd and even $N$ tends to diminish. A qualitative explanation of this behavior may be obtained by analyzing the contribution that each block $\check{\rho}_j$ gives to the target expectation value, i.e., by using Eq.\,\eqref{modal2}, for $u=z$. As illustrated in Fig.\,\ref{fig:tessieri} (right), it turns out that, for sufficiently large values of the bath frequency $\overline{\omega}$, the probability distribution between blocks, tr$[\check{\rho}_j]$ is heavily skewed towards the blocks associated to higher $j$. Thus, to obtain a quantitative understanding of the evolution of the central spin in the case depicted in Fig.\,\ref{fig:tessieri}, it suffices to consider the blocks associated to the highest or second-highest value of $j$, depending on whether $N$ is even or odd, respectively. As $N\gtrsim 30,$ only the highest weight block is in fact contributing.

\subsubsection{Effect of Markovian dissipation on bath spins}
\label{sec:examples_central_spin_dissipative}

We next illustrate how the approach may be extended to a central spin model as in Eq.\,\eqref{CSL}, where both coherent and dissipative dynamics are present, i.e., ${\cal D}_{L_B} \ne 0$. Specifically, we still assume a collective system-bath Hamiltonian of the form given in Eq.\,\eqref{HSB}-\eqref{Hint} and, to isolate the effect of the dissipative bath interactions, we return to the case of a non-dynamical bath, $H_B\equiv 0.$ We consider two types of Markovian dissipation, leading to models that can be reduced to the \emph{same output algebra}; however, in the first case the reduction will emerge from a strong symmetry of the Lindbladian, whereas in the second case it will correspond to a (strictly) weak one. 

\smallskip

$\bullet$ {\em Collective pumping.--} A dissipative pumping mechanics that acts collectively on the bath spins may be modeled by introducing a Lindblad operator
\begin{equation*}
L_B^{\text{coll}} \equiv \Lambda J_+, \quad \Lambda >0, \qquad \Dc_{L_B} \equiv \Dc_{L_B^\text{coll}}, 
\end{equation*}
with $J_+= J_x + iJ_y \in\Bf(\Hc)$ being the raising operator associated to the total-spin bath angular momentum. In this case, one may verify that the collective property of both $H$ and ${L_B^{\text{coll}}}$ makes $\Gs_N = \mathcal{S}_{N_B}$ a {\em strong symmetry} for the overall dynamics. Therefore, the result in Eq.\,\eqref{reducedObHamLind} applies and the reduced Lindblad operator is found to be 
$$\check{L}_B^{\text{coll}}=\Lambda \bigoplus_{j} \check{J}_{+,j},$$
with $\check{J}_{\pm,j} \equiv \check{J}_{x,j}\pm i\check{J}_{y,j} \in\Bf(\Hc_{\check{B},j})$ being the raising operator constructed out of SU$(2j+1)$ spin angular momentum generators (see also Appendix \ref{appendix:bipartition}). Comparing to Eq.\,\eqref{BaseRedModelCentral}, in each $j$-block the reduced dynamic has now an extra term:
\begin{equation}
\label{RedCollective}
\dot{\check{\rho}}_j \equiv \check{\Lc_j}(\check{\rho_j}) = 
\Lc_j^{\text{coll}}(\check{\rho}_j)=-i[\check{H}_{j},\rho_{j}]+\Lambda \Dc_{\check{J}_{+,j}}(\rho_j),
\end{equation}
where $\Lc_j^{\text{coll}}$ can be interpreted as the reduced Lindbladian in the $j$-th subspace, and $\check{H}_j$ has the expression in Eq.\,\eqref{RedObHamCentral}. A graphical representation of the Lindbladian operators' structure for both the original and reduced model is given in Fig.\,\ref{fig:central_spin_blocks}.

\begin{figure*}[t]
    \hfill
    \begin{subfigure}[c]{0.49\textwidth}
    \centering
        \includegraphics[width=\textwidth]{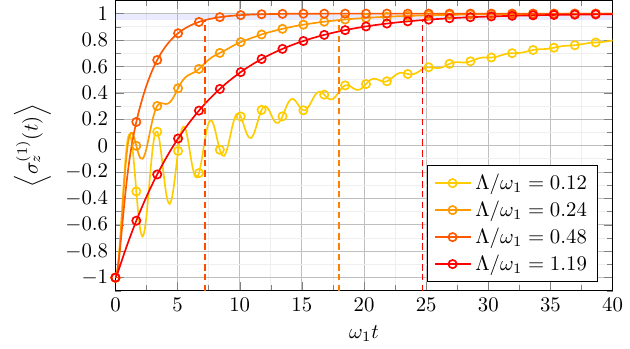}
    \end{subfigure}
    \hfill
    \begin{subfigure}[c]{0.46\textwidth}
    \centering
        \includegraphics[width=\textwidth]{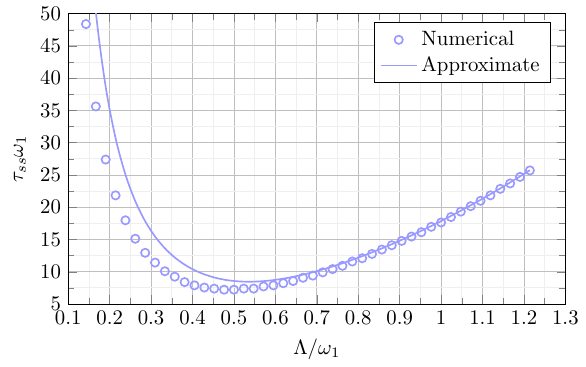}
    \end{subfigure}
    \hfill
    \vspace*{-3mm}
\caption{ \normalfont
Left: Time-dependent central spin polarization, $\langle{\sigma_z^{(1)}(t)}\rangle$, under collective bath dissipation of variable strength $\Lambda$, in both the full (dots) and reduced (solid curve) models, for $N=6$. As in \cite{Coish}, we choose an XXZ system-bath coupling Hamiltonian, with $A_x=A_y=4,A_z=2$, $\omega_1=4.2$, $\eta=0$, $H_B=0$, while the bath is at zero temperature, hence $\rho_0=\ketbra{1}{1}\otimes\ketbra{00\dots0}{00\dots0}$. 
The vertical dashed lines indicate the time-to-SS, $\tau_{\text{ss}}$, for the considered values of $\Lambda$, with the shaded blue region showing a $5\%$ deviation from the asymptotic value. Right: Time-to-SS vs. dissipation strength. The dots represent the values obtained from the numerical solution of the reduced model, while the solid curve represents the approximated estimate obtained in Sec.\,\ref{sec:examples_central_spin_reachable}.}
\label{fig:collectivepumping}
\end{figure*}

To validate our procedure, we again compare the numerical solution of the full vs. reduced models. As a representative example, in Fig.\,\ref{fig:collectivepumping} (left) we compare the central spin's polarization for the full (dotted line) and reduced model (solid line), again resulting in exact agreement, as expected. Interestingly, as the strength $\Lambda$ of the dissipation increases, we observe a transition from a regime where the trajectory reaches equilibrium slowly, with oscillation, to one where the equilibrium is reached more rapidly, and with no oscillation. 
Furthermore, Fig.\,\ref{fig:collectivepumping} (right) shows a {\em non-monotonic} behavior of the convergence time, which we quantity in terms of the time taken for $\langle{\sigma_z^{(1)}(t)}\rangle$ to remain confined within $5\%$ of its asymptotic value. Starting from its infinite value at $\Lambda=0$ (when no dissipation is present), such a time-to-SS, say $\tau_{\text{ss}}$, decreases and reaches an {\em optimal} value, then diverges to infinity afterwards, see Fig.\,\ref{fig:collectivepumping} (right). Similar non-monotonicity has been reported in recent works \cite{Coish,onizhuk2023understanding}, {with the ``optimal'' value being interpreted in terms of a synchronization effect between the dissipation rate and the characteristic time scales of the coherent dynamics, and the increase for strong-dissipation bearing similarities with a quantum-Zeno limit.}
In particular, \cite{Coish} studies a model which is, in a way, dual to the one considered here: the pumping noise acts only on the central spin, and the $J_z$-observable is computed for the bath spins. Treating the bath as a collective spin, it is natural, albeit still interesting, to recover the non-monotonicity upon swapping the roles of the central and the bath spins.

\smallskip

$\bullet$ {\em Local pumping.--} In a scenario where each bath spin undergoes local pumping with uniform strength, the dissipator may be described by a set of $N_B$ Lindblad operators
\begin{eqnarray*}
L_{B,k}^{\text{loc}} \equiv \delta \sigma_+^{(k)}, \quad \delta >0, 
\qquad \sum_{L_B} \Dc_{L_B}= \delta^2 \sum_{k=2}^N\Dc_{\sigma_+^{(k)}}.    
\end{eqnarray*}
Because $\sigma_+^{(k)}$ breaks permutation symmetry, $\Gs_N$ is no longer a strong symmetry; however, the form of the dissipator makes it clear that it remains a {\em weak} symmetry of the dynamics. This implies that the output algebra onto which we reduce the model remains the same, that is,  $\Os=\Cb\Gs'_N$. While the Hamiltonian retains the block-diagonal structure of   Eq.\,\eqref{RedObHamCentral}, the reduced noise operators, say, $\{ \check{L}_{B,h}^{\text{eff}}\}$ are no longer block-diagonal in the basis defined by $U$, hence Eq.\,\eqref{reducedObHamLind} (right) does not hold in this case. The resulting structure is illustrated in Fig.\,\ref{fig:central_spin_blocks}, top row. Albeit no closed formula is available to compute the reduced operators when off-diagonal terms are present, the reduced noise operators can still be computed numerically by first computing the super-operators $\Rc$ and $\Jc$ as in Eq.\,\eqref{eqn:reduction} and \eqref{eqn:injection}, and then by determining effective noise operators $\check{L}^{\text{eff}}_{B,h}\in\Bf(\check{\Hc})$, such that 
\begin{equation}
\sum_{h=1}^{\check{N}} \Dc_{\check{L}^{\text{eff}}_{B,h}} (\rho) = \sum_{k=2}^N \Rc \,\Dc_{L_{B,k}^{\text{loc}}} \Jc(\rho).
\label{LBeff}
\end{equation} 
The reduced noise operators $\check{L}_h$ obtained in this manner are shown in the bottom row of Fig.\,\ref{fig:central_spin_blocks}. This makes it clear that the effects induced by local pumping noise, as opposed to collective one, is to introduce off-diagonal terms that create cross-talk between adjacent $j$-blocks. Notice that, in general, when reducing a single dissipative term $\Dc_L$, multiple reduced operators $\sum_k \Dc_{\check{L}_k}$ may arise in principle. This is due to the fact that the composition of CP maps may, in general, increase the Kraus rank. In this particular case, however, we numerically verified (up to $N=7$) that $\check{N}=3$  operators are sufficient to define the reduced dissipative term, independently of $N$. The matrix representations of such noise operators differ from zero only in the main, the first lower, and the first higher diagonal blocks, respectively.

\begin{remark}
\label{remark:computation_Advantage} 
From the above reduction analysis and the block structure of the resulting generators, it is clear that having a strong symmetry leads to a computational advantage with respect to the weak-symmetry case. Since, in the strong-symmetric case, both the reduced Hamiltonian and noise operators are block-diagonal, {\em each block of $\Cb\Gs'$ is an invariant subalgebra} in itself. Thus, one can project the initial state on the algebra and simulate {\em independently} each block's dynamics, allowing for sequential or parallel simulations. Practically, the capabilities of a given hardware only limit the size of the largest simulatable block, rather than that of the full algebra. In contrast, in the weak-symmetric case, the dynamics of different block are linked, thus all the blocks of the reduced state need to be computed at the same time. With reference to  Fig.\,\ref{fig:central_spin_dim}, for the collective central-spin model the dependence on $N$ of the dimension of algebra is cubic, while it is quadratic for the largest block size.
\end{remark}

\subsubsection{Reachable analysis: W-state initialization}
\label{sec:examples_central_spin_reachable}

In the analysis carried out so far, we have only exploited the possibility to reduce the model of interest based on specified observables. In this section, we exemplify the potential of the reachable-based MR presented in Sec.\,\ref{sec:reachable}. 

For the sake of illustrating the procedure through an analytical example, let us first assume that the dynamics exhibit a strong symmetry: specifically, we consider a central-spin model subject to collective dissipation as in Sec.\,\ref{sec:examples_central_spin_dissipative}, with 
$$H_S=\tfrac{1}{2} \omega_1 \sigma_z^{(1)}, \,\,\, H_B=0, \,\,\, H_{\text{int}}= H_\text{XXZ}, \,\,\, L_B^{\text{coll}} = \Lambda J_+.$$
Suppose we are interested in studying the evolution when the initial state is $\rho_0 = \ketbra{0}{0} \otimes \ketbra{W}{W}$, where $\ket{W}$ is the single-excitation, many-body entangled W-state on the bath \cite{durW}, 
$$\ket{W} \equiv \frac{1}{\sqrt{N_B}}\big(\ket{10\dots0}+\ket{01\dots0}+\dots+\ket{00\dots1} \big).$$ 
The state $\ket{W}$ is manifestly permutation-invariant, and is an eigenvector of $J^2$ and $J_z$ with eigenvalues $j=N_B/2 \equiv j^*$ and $m=j^*-1 \equiv m^*$, respectively. Thus, in the basis given by Eq.\,\eqref{angbasis}, the initial condition reads  $\rho_0 = \ketbra{0}{0}\otimes\ketbra{j^*,m^*}{j^*,m^*}$, since the $N_B/2$-angular momentum irrep has multiplicity $1$. Thanks to the presence of a strong symmetry, the evolution of $\rho_0$ will remain confined in the block associated with $j^*$, $\Hc_{j^*} = \Hc_{F,j^*} \otimes \Hc_{G,j^*}$.
Accordingly, we can restrict our attention to the subspace $\Hc_{F,j^*}$ and to the evolution of $\check{\rho}_{j^*}(t)$, which is generated by the Lindblad master equation in Eq.\,\eqref{RedCollective}, subject to the initial condition $\check{\rho}_{0j^*}= \ketbra{0}{0}\otimes\ketbra{m^*}{m^*}$. This alreadys allow for a reduction of the dimension of the supporting algebra to a quadratic scaling, $\dim(\Bf(\Hc_{F,j^*}))=4N^2,$ in contrast to the cubic scaling of the output algebra dimension, Eq.\,\eqref{eq:DimScaling}. As we show next, further MR can be achieved, since the  evolution originating from $\rho_0$ does not explore the entire block $\Bf(\Hc_{F,j^*})$. 

Explicitly, under the above assumptions the reduced Hamiltonian reads 
\begin{align*}
\check{H}_{0,j^*}&=\frac{\omega_1}{2}\sigma_z\otimes\one_{\Hc_{\check{B},j^*}}+A\left(\sigma_-\otimes\check{J}_{+,j^*}+\sigma_+\otimes\check{J}_{-,j^*}\right)\\
&+\frac{A_z}{2}\sigma_z\otimes\check{J}_{z,j^*}, \quad A\equiv A_x=A_y, 
\end{align*}
while the reduced noise operator $\check{L}_B = \Lambda\check{J}_{+,j^*}$. For generic values $\omega_1\neq0$ and $\Lambda\neq0$, direct computation of the reachable space yields 
\begin{equation}
\begin{split}
    \Rs = \Span&\big\{ \ketbra{0}{0}\otimes\ketbra{j^*}{j^*}, \\   
    &\hspace*{2mm}\ketbra{0}{0}\otimes\ketbra{m^*}{m^*}, \ketbra{1}{1}\otimes\ketbra{j^*}{j^*}, \\
    &\hspace*{1.7mm} \ketbra{0}{1}\otimes\ketbra{m^*}{j^*}, \ketbra{1}{0}\otimes\ketbra{j^*}{m^*} \big\},
\end{split}
\label{reach}
\end{equation}
with $\dim(\Rs)=5$. One can also verify analytically that $\Rs$ is an algebra of dimension $5$ (irrespective of the bath's size), has only partial support in the block \(\Bf(\Hc_{F,j^*})\), and has a Wedderburn decomposition equal to alg$(\Rs)\simeq\Cb\oplus\Cb^{2\times 2}\oplus\zero_R$, where $\dim(\Hc_R)=2N-3$. Following the discussion given in Sec.\,\ref{sec:algebraic_model_reduction_1}, we can restrict our attention to the support of the algebra, so that the reduced model obtained after observable-based and reachable-based MR is going to be defined over $\Cb\oplus\Cb^{2\times 2}\subset\Cb^{3\times3}$. Since, in addition, we have that $\Rs=\alg(\Rs)$, we can take $\Ds=\Rs$ in the procedure described in Sec.\,\ref{sec:lindblad_reduction} and proceed to compute the maps $\Rc$ and $\Jc$ over the support of $\Ds$ to obtain the reduced model.

The reduced state on $\Ds$, $\check{\check{\rho}}(t) \equiv \Rc[\check{\rho}(t)]$ can then be expressed in the form
\[\check{\check{\rho}}(t) = p(t) \oplus \tau(t),  \]
where $p(t) \equiv \tr[\ketbra{0}{0}\otimes\ketbra{j^*}{j^*} \check{\rho}(t)]\in[0,1]$ is a scalar and 
$$ \tau(t) \equiv \sum_{k,l=0}^1\ketbra{k}{l} \tr\left[\ketbra{k}{l}\otimes\ketbra{j^*-f(k)}{j^*-f(l)} \check{\rho}(t)\right]$$
is a positive-semidefinite operator, with $f(k) \equiv \tfrac{1}{2}[1-(-1)^k]$ and $\check{\rho}(t) =\check{\rho}_{j^*}(t)$. Because $\check{\check{\rho}}(t)$ is a valid state, we have $p(t)+\tr[\tau(t)]=1$ for all $t$. Furthermore, the system evolves according to the following set of equations:
\begin{equation}
    \begin{cases}
        \dot{p}(t) = L_R \tau(t) L_R^\dag , \\
        \dot{\tau}(t) = -K\tau(t)-\tau(t)K^\dag,
    \end{cases}
    \label{eq:reduced_reachable_model}
\end{equation}
where $L_R=\begin{bmatrix}
    \Lambda\sqrt{N_B} & 0
\end{bmatrix}$ 
and the effective non-Hermitian Hamiltonian $K \equiv i\check{\check{H}} + L_R^\dag L_R/2$, with 
$$\check{\check{H}} = \frac{A\sqrt{N_B}}{2}\sigma_x + \Big[\frac{A_z}{4}(1-N_B) - \frac{\omega_1}{2}\Big]\sigma_z. $$ 
Although it need not be obvious that Eq.\,\eqref{eq:reduced_reachable_model} represents a Lindblad generator, one can verify that this is in fact the case by representing the state $\check{\check{\rho}}(t)$ as a matrix in $\Cb^{3\times3}$, and rewriting the generator in this form. Interestingly, Eq.\,\eqref{eq:reduced_reachable_model} represents a non-trivial dynamical connection between $p(t)$ and $\tau(t)$, effectively pumping probability towards $p(t)$ -- similar to what was observed in the previous section in the case of local noise and a weak symmetry.
Finally, the reduced central-spin observables of interest, $\check{\check{O}}_q = \Jc^\dag(\check{O}_q)$, take the form $$\check{\check{O}}_0 = 1\oplus\one_2,\quad \check{\check{O}}_x = \check{\check{O}}_y = \zero_3, \quad \check{\check{O}}_z = 1\oplus \sigma_z,$$ 
showing that, when starting from the assumed initial condition, coherences on the central spin can never be observed.

As a concrete application of the reduced model we just derived, we are in a position to derive an approximate description of the non-monotonic behavior observed for the time-to-steady state (SS) in Fig.\,\ref{fig:collectivepumping} (right). Since the initial condition $\ketbra{1}{1}\otimes\ketbra{00\dots0}{00\dots0}$ used in the simulation therein belongs to $\Rs$, the reduced model in Eq.\,\eqref{eq:reduced_reachable_model} applies. One can easily observe that such a model converges to the SS $\check{\check{\rho}} = 1\oplus \zero_2$, which is associated to the convergence of the 
full model to the pure state $\ketbra{0}{0}\otimes\ketbra{00\dots0}{00\dots0}$, as expected. The asymptotic rate of convergence to the SS can be characterized in terms of the spectral gap of the dynamical generator, that is, the eigenvalue of $-K\tau - \tau K^\dag$ with the smallest (strictly non-zero) absolute value, say, $\lambda_\text{dom} (\Lambda)$. By letting $\tau_{\text{ss}} \approx c / \lambda_{\text{dom}}(\Lambda)$, and computing numerically the eigenvalues as a function of $\Lambda$, we find $c\approx \ln(0.05)$ and $\lambda_{\text{dom}}(\Lambda) \approx - 2.0 \Lambda^{2}$. 
From Fig.\,\ref{fig:collectivepumping}, we see that the above estimate for $\tau_{\text{ss}}$ correctly captures the observed non-monotonic behavior, to first-order approximation. We stress that this approximation does not come from the MR procedure (which is exact), but only from the approximate computation of $\tau_{\text{ss}}$.

\begin{figure}[t]
\centering
\includegraphics[width=\columnwidth]{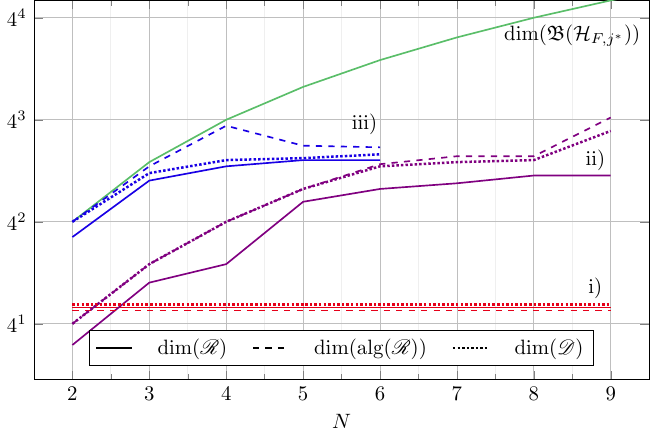}
\caption{\normalfont Comparison between the dimension of the operators spaces $\Rs$, $\alg(\Rs)$, the distorted algebra $\Ds$, and the size of the entire block associated with $j^*$, $\Bf(\Hc_{F,j^*})$, for different central-spin models undergoing collective dissipation of strength $\Lambda$ (here, $\Lambda= 1$). 
Specifically, we consider: 
i) XXZ coupling in $H_{\text{int}}$ (red curves), $A_x=A_y=1$, $A_z=2$, $\omega_1 = 1$, and $H_B=0$ (note that the represented dimensions have been shifted by 0.2 to improve readability).
ii) XXZ coupling (blue curves), $A_x = A_y = 1$, $A_z=2$, $\omega_1=1$, $H_{B,0}=0$ and $H_{B,\text{int}}$ an Ising interaction, with $\lambda=2$.
iii) XYZ coupling (purple curves), $A_x=1, A_y=3, A_z=2, \omega_1=1$, $H_{B,0} \ne 0$ with $\mu=1$, and $H_{B,\text{int}}=0$. We also set $\eta=0$ in all cases. }  
\label{fig:reachable_dim}
\end{figure}

While in the discussion above we have chosen a setting amenable to analytic treatment, for more general dissipative central-spin models the reachable-based MR can still be computed numerically. In Fig.\,\ref{fig:reachable_dim} we compare the dimensions of the operator spaces of interest with the dimension of the full block $\Hc_{F,j^*}$, for the same initial condition but different choices of system-bath Hamiltonian, under which a strong symmetry is still maintained. We can observe that even relaxing the assumptions made before (e.g., $A_x = A_y$), the entire block is not explored when starting from the considered initial condition $\rho_0,$ and MR beyond the one afforded by only considering the relevant observables is possible\footnote{Due to  numerical errors, it is possible that the algorithms we used to construct the algebras might consider some operators as new linearly independent generators when they should not. In this uncommon occurrence, the algebra dimensions shown in Fig.\,\ref{fig:reachable_dim} might be slightly overestimated, with the optimal reduction actually offering even better performance.}. The figure also demonstrates the advantage of using modified products in the construction of the reachable distorted algebra $\Ds$: in cases ii) and iii), the curves displaying the dimension of the computed $\Ds$ are lower than the corresponding ones for standard, undistorted algebras. 

\begin{remark}
The choice of the initial state of the bath leading to significant reduction is not limited to $\ket{W}.$ In fact, the reduced model in Eq.\,\eqref{eq:reduced_reachable_model} also holds for any initial condition inside $\Rs$ in Eq.\,\eqref{reach}. This includes, for instance, the W-state on both system and bath, i.e. $\ket{W'}=\frac{1}{N} \ketbra{1}{1}\otimes\ketbra{j^*}{j^*} + \frac{NB}{N}\ketbra{0}{0}\otimes\ketbra{m^*}{m^*}$. More generally, we may consider states in $\Df(\Hc_S\otimes\Hc_W)$, where $\Hc_W\equiv\Span\{\ket{W_v}\}$ and $\ket{W_v}$ are qubit Dicke states \cite{Dicke}, that is, uniform superposition of states with precisely $v$ excited spins, e.g., $\ket{W_0} = \ket{00\dots0}$, $\ket{W_1}=\ket{W}$, $\ket{W_2} = \ket{110\dots0}+\ket{101\dots0}+\dots+\ket{00\dots11}$, and $\ket{W_{N_B}}=\ket{11\dots1}$. Clearly,  the W-states we have considered belong to such a set. 
\end{remark}

\subsection{Boundary-driven XXZ model}
\label{sec:examples_XXZ}

In this section we consider a paradigmatic example of many-body dissipative dynamics, provided by a boundary-driven XXZ spin-$1/2$ chain. This model has been extensively investigated in the context of non-equilibrium statistical mechanics, with a focus on understanding magnetization and heat transport \cite{Mahler2003,Buca_2012,mendoza-arenas_heat_2013, popkov2013manipulating, bertini_finite-temperature_2021, landi_nonequilibrium_2022}. In these works, primary emphasis has been given to characterizing the ``non-equilibrium SS'' of the dynamics, and the resulting SS spin current and magnetization profiles. Here, we are going to address the full time-dependent transient behavior and show how we can reduce the description of the model to one that exactly reproduces the expectation values of the target physical observables at any time, not just under SS conditions. 

We consider a finite, open-boundary chain of $N$ spin-$1/2$ on a line, interacting with each other via a nearest-neighbor XXZ coupling Hamiltonian, of the form
\begin{equation}
H = \Gamma \sum_{j=1}^{N-1} \big(\sigma_{x}^{(j)}\sigma_{x}^{(j+1)} + \sigma_{y}^{(j)}\sigma_{y}^{(j+1)} + \Delta\sigma_{z}^{(j)}\sigma_{z}^{(j+1)}\big) ,
\label{eq:XXZ_Hamiltonian}
\end{equation}
where $\Gamma >0$ and $\Delta >0$ are the exchange coupling constant and the anisotropy parameter, respectively. In addition, the boundary spins, $1$ and $N$, couple to a Markovian environment, via Lindblad operators $L_u$ that are assumed to be site-local. Specifically, as in \cite{Buca_2012}, we consider a set of Lindblad operators that include both loss and gain of spin-excitation at the boundary \footnote{If only loss processes are present, the  Lindbladian generator is exactly solvable in the sense that the full spectrum and the resulting dynamics may be obtained by a dissipative Bethe Ansatz approach \cite{Buca_2020}.}: 
\begin{align*}
    &L^{(1)}_+ = \sqrt{\alpha}\, \sigma_{+}^{(1)}, &L^{(N)}_+ = \sqrt{\beta}\, \sigma_{+}^{(N)}, \\
    &L^{(1)}_- = \sqrt{\beta} \,\sigma_{-}^{(1)}, &L^{(N)}_- = \sqrt{\alpha} \,\sigma_{-}^{(N)},
\end{align*}
where we defined $\alpha \equiv \kappa (1-\mu)$ and $\beta \equiv \kappa (1+\mu)$, in terms of a fixed coupling strength $\kappa >0$ and the parameter $0\leq \mu\leq 1$ quantifying the boundary driving .

As observables of interest, we consider both the local spin magnetizations, $\{ \sigma_z^{(j)} \}_{j=1}^{N}$, and the spin currents, described by two-body operators 
\[J_j\equiv \sigma_x^{(j)}\sigma_y^{(j+1)}-\sigma_y^{(j)}\sigma_x^{(j+1)}, \quad 
j=1,\dots, N-1. \]
Accordingly, we are interested in finding a reduced model that reproduces the time-dependent expectation values as outputs,  
$$Y_j(t) = \tr(O_j \rho(t)), \quad O_j \in\{\sigma_z^{(j)}\}\cup\{J_j\},$$ 
starting from an arbitrary initial condition $\rho_0\in\Df(\Hc)$. We thus perform the observable-based MR reduction, which means that the output algebra $\Os=\alg(\Ns^\perp)$ is the main object we need to determine. 

\subsubsection{Symmetries of the model}

Consider a permutation matrix $P$ that exchanges the spin in location $i$ with the spin in location $N-i$, that is, for a sequence of $N$ bits $b_1\dots b_{N-1}b_{N}$, we have $P\ket{b_1\dots b_{N-1}b_{N}} = \ket{b_{N}b_{N-1}\dots b_{2}b_{1}}$. Then, as noted in \cite{Buca_2012}, the unitary matrix $S \equiv P\prod_{j=1}^{N}\sigma_x^{(j)}$, is a weak ${\mathbb Z}_2$ symmetry for the dynamics. Since $S$ does not commute with the observables of interest, however, Proposition \ref{strongcond} cannot be used to relate this symmetry and the observable space $\Ns^\perp$.

Although not observed in \cite{Buca_2012}, the unitary group generated by the global magnetization operator, $\Gs_\varphi \equiv \{e^{-i\varphi M}\}$, with $\varphi\in\Rb$ and $M = J_z= \tfrac{1}{2}\sum_{j=1}^N  \sigma_z^{(j)}$, can be seen to provide a weak symmetry for the model (see Appendix \ref{sec_appendix_symm} for an explicit proof and further comments). Notably, this unitary group commutes with the observables of interest, whereby it follows, by applying Proposition \ref{strongcond}, that $\Os=\alg(\Ns^{\perp}) \subseteq \Cb\Gs_\varphi'$.

\begin{figure}[t]
\centering
\includegraphics[width=0.9\columnwidth]{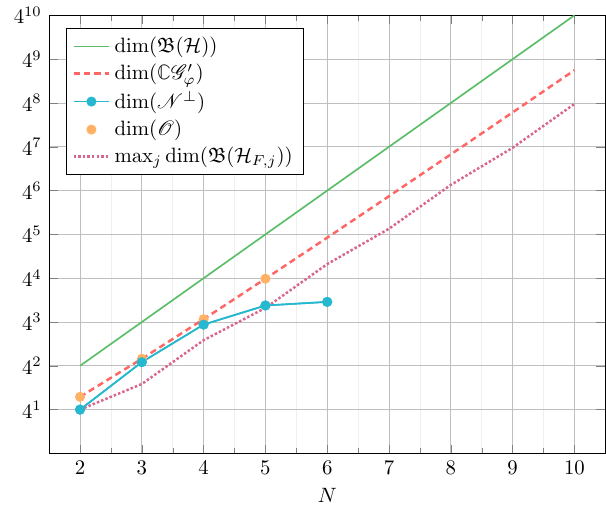}
\vspace*{-2mm}
\caption{\normalfont 
Dimensions of the observable space $\Ns^\perp$ and output algebra $\Os$ compared to the dimensions of the full operator space $\Bf(\Hc)$ for the boundary-driven XXZ model.} While the dimension of $\Cb\Gs_\varphi'$ is given by the analytical expression in Eq.\,\eqref{dimphi}, the dimensions of $\Ns^\perp$ and $\Os$ have been computed numerically using the procedure described in Sec.\,\ref{sec:observablered}. The model parameters are $\Gamma =1$, $\Delta=0.5$, $\kappa=1.2$, and $\mu=0.1$. 
\label{fig:xxz_dim}
\end{figure}

By construction, the elements of $\Cb\Gs_\varphi'$ must share the same eigenspaces as $M$. The number of distinct eigenvalues of $M$ is $N+1$ and the dimensions of the eigenspaces of $M$ follow the Pascal/Tartaglia triangle rule, i.e., $\binom{N}{k}$ for $k = 1,\dots,N$. With this, and using the Chu-Vandermonde identity and Stirling's approximation, one finds the dimension of $\Cb\Gs_\varphi'$ as
\begin{equation}
\dim(\Cb\Gs_\varphi') = \sum_{k=0}^{N} \binom{N}{k}^{\!2} = \binom{2N}{N} \approx \frac{4^N}{\sqrt{\pi N}} ,
\label{dimphi}
\end{equation} 
where the last approximate equality holds for large $N$. The above provides an upper bound for the dimension of the output algebra $\Os$. For later discussion, it is also worth noting that the dimension of the largest block in the Wedderburn decomposition of $\Cb\Gs_\varphi'$ scales as 
\[ \max_k \dim(\Bf(\Hc_{F,k}))= \binom{N}{\lfloor{{N}/{2}}\rfloor}^2\approx 2 \,\frac{4^N}{\pi N}  ,  \]
that is, the size of this block is reduced by a factor $\sqrt{N}$, which may be significant for large $N$.

To verify whether one may achieve a larger reduction than the one provided by the weak symmetry group $\Cb\Gs_\varphi'$, the observable algebra $\Os=\alg(\Ns^\perp)$ was numerically determined by implementing the algorithm described in Sec.\,\ref{sec:observablered}. Numerical calculations of the observable space $\Ns^\perp$, the output algebra $\Os$, and its Wedderburn decomposition were carried out for chains with up to $N=5$ spins; the results  are summarized in Table \ref{tab:xxz_dim} and Fig.\,\ref{fig:xxz_dim}. As one may observe from Fig.\,\ref{fig:xxz_dim}, the dimension $\Os$ coincides with the one of $\Cb\Gs_\varphi'$ for up to $N=5$. Moreover, from Table \ref{tab:xxz_dim}, one can observe that the Wedderburn decomposition of $\Os$ has precisely $N+1$ blocks of the same dimension as the eigenspaces of $M$. Further numerical testing confirmed that, up to $N=5$ qubits, $\Os = \Cb\Gs_\varphi'$. We conjecture this to be true for \emph{arbitrary} $N$. 

\newcolumntype{C}[1]{>{\centering\let\newline\\\arraybackslash\hspace{0pt}}m{#1}}
\begin{table}[t!]
    \centering
    \begin{tabular}{|C{8pt}|C{18pt}|C{18pt}|c|}
    \hline
        & \multicolumn{2}{c|}{$\dim(\cdot)$} & Output algebra structure \\[2pt]
         $N$& $\Ns^\perp$ & $\Os$ & $\Os\simeq$ \\[2pt]\hline\hline
         &&&\\[-5pt]
         2& 4 & 6  & $2\Cb\oplus\Cb^{2\times 2}$\\[5pt]
         3& 18 & 20 &   $2\Cb \oplus 2\Cb^{3\times 3}$\\[5pt]
         4& 59 & 70 &  $2\Cb \oplus 2\Cb^{4\times 4} \oplus \Cb^{6\times 6}$ \\[5pt]
        5 & 108 & 252 & $2\Cb \oplus 2\Cb^{5\times 5} \oplus  2\Cb^{10\times 10} $ \\ [5pt]
         6 & 121 & 924 & 
         $2\Cb \oplus 2\Cb^{6\times 6} \oplus  2\Cb^{15\times 15} 
         \oplus \Cb^{20\times 20} $ \\ 
         [2pt]\hline
    \end{tabular}
    \caption{\normalfont Numerically determined dimensions of the observable space $\Ns^\perp$ and output algebra $\Os$ for the boundary-driven XXZ model with the same parameters used in Fig.\,\ref{fig:xxz_dim}.}
    \label{tab:xxz_dim}
\end{table}

\begin{figure*}[ht]
\centering
\includegraphics[width=0.98\textwidth]{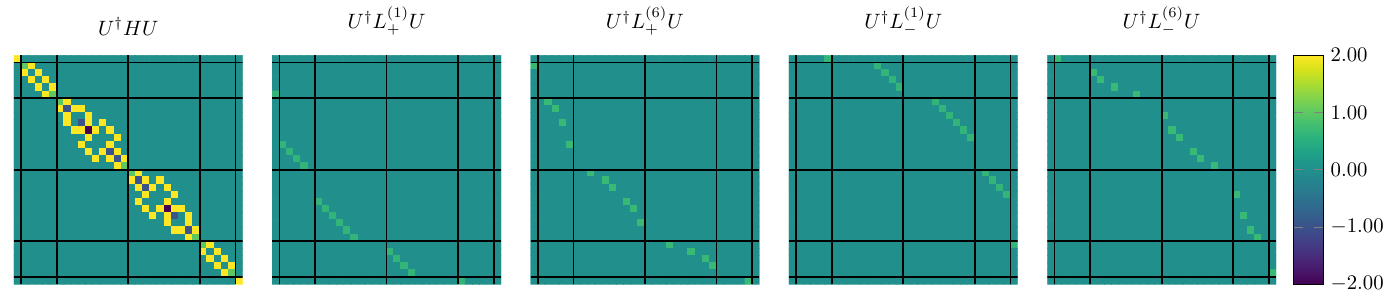}
\caption{\normalfont Graphical representation of the Hamiltonian and noise operators for the boundary-driven XXZ chain with $N=5$ spins and parameters as in Fig.\,\ref{fig:xxz_dim}. All matrices are represented in the base that block-diagonalizes the output algebra, namely, $U^\dag \Os U=\Cb \oplus \Cb^{5\times 5} \oplus \Cb^{10\times 10} \oplus \Cb^{10\times 10} \oplus \Cb^{5\times 5} \oplus  \Cb$, with the black lines highlighting the block structure.} 
    \label{fig:xxz_blocks}
\end{figure*}

\begin{figure*}
    \begin{subfigure}[t]{0.46\textwidth}
        \centering
        \includegraphics[width=\textwidth]{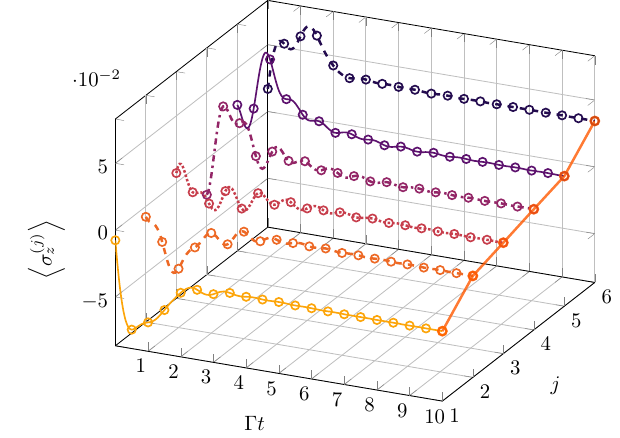}
    \end{subfigure}
    \hfill
    \begin{subfigure}[t]{0.49\textwidth}
        \centering
        \includegraphics[width=\textwidth]{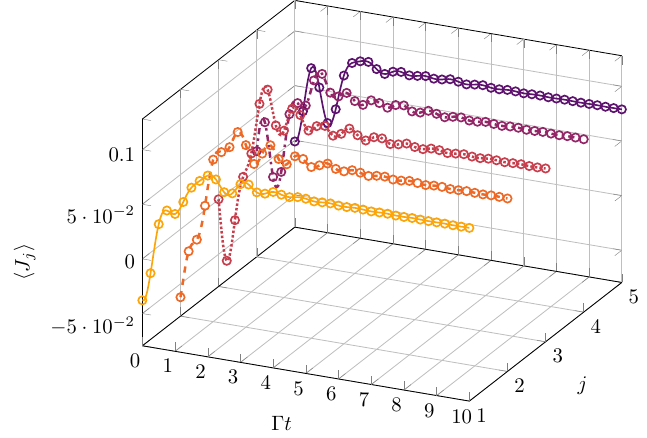}
    \end{subfigure}
    \vspace*{-2mm}
\caption{\normalfont Numerical results for the time-dependent expectation values of the local magnetization, $\langle\sigma_z^{(j)}(t)\rangle$ (left), and the spin currents, $\expect{J_j(t)}$ (right), for the boundary-driven XXZ chain with $N=6$ and parameters as in Fig.\,\ref{fig:xxz_dim}, 
starting from a random initial state $\rho_0$. In both graphs, the continuous line (resp. circles) represents the evolution computed from the reduced (resp. full) model. The dark orange curve, which interpolates between the final values of the local magnetization, closely resembles the behavior reported in Figure 2.a of \cite{Buca_2012}, which was obtained at SS  from similar parameters.} 
\label{fig:xxz_magnetization}
\end{figure*}

\subsubsection{Numerical model reduction and simulations}

Having obtained the relevant output algebra $\Os$ , let us denote with $U$ the unitary change of basis that decomposes it into its Wedderburn decomposition. In particular, $U$ can be taken as the permutation matrix that orders the eigenvalues of $M$ and groups them together. Thanks to their commutation with $\Gs_\varphi$, both the system's Hamiltonian and the observables of interest assume a block-diagonal structure in this new basis, with the same block dimensions of the Wedderburn decomposition of $\Os$. The four noise operators, instead, assume an upper or lower block-diagonal structure in this basis, as depicted in Fig.\,\ref{fig:xxz_blocks}. As in the case of the central spin model under local dissipation we discussed in Sec.\,\ref{sec:examples_central_spin_dissipative}, the off-diagonal terms connect different blocks of $\Os$, effectively creating a probability pumping from one block to the other. 

A comparison between results obtained for the full vs. the reduced model is provided in Fig.\,\ref{fig:xxz_magnetization}, where a simulation from a random initial condition is performed and the expectation values of the local magnetization, $\sigma_z^{(j)}$, and spin currents, $J_j$, are compared as a function of time. As one may see, the outputs of the full and reduced models exactly coincide, again demonstrating how the proposed algorithm correctly reduces the open quantum dynamics of interest. 

\begin{remark}
By removing the dissipation effects, or by considering Lindblad operators that commute with the total magnetization $M$, the weak symmetry becomes a strong one. This, in turn, means that each block of the state $\check{\rho}_k$ evolves independently of the others and thus, as we have also pointed out in Remark \ref{remark:computation_Advantage}, we are no longer limited in size by the dimension of the full operator algebra, but only by the dimension of the largest block in the Wedderburn decomposition. This difference can be seen in Fig.\,\ref{fig:xxz_dim} when comparing the scaling between $\dim(\Cb\Gs')$ (orange curve) and $\max_k \dim(\Bf(\Hc_{F,j}))$ (purple curve). 
\end{remark}

\subsection{Model reduction to study encoded dynamics under non-ideal error models}
\label{sec:examples_encoding}

While the MR scenarios we have discussed so far are concerned with Lindblad dynamics on {\em physical} degrees of freedom, our approach is also applicable to studying the dynamics of quantum information encoded in \emph{logical} degrees of freedom \cite{knill-encoding}. For continuous-time dynamics as we consider here, a relevant application arises in the context of obtaining simpler descriptions of the noisy dynamics that an ``infinite-distance'' quantum code, corresponding to a {\em decoherence-free subspace} or a {\em noiseless subsystem} \cite{PhysRevA.63.012301}, \cite{KLV}, \cite{Lidar_Brun_2013}, \cite{ticozzi2010quantum}, undergoes when the physical generator deviates from the ideal error model the code is designed to protect against: both unwanted unitary evolution within the code space or leakage out of the code space may then occur, resulting in logical and leakage errors, respectively.

To present the idea in its simplest setting, we focus on a scenario where a {\em single} logical qubit is encoded in $N$ physical qubits. The information is then stored in the expectation values of logical qubit observables, which can be chosen as the output operators of interest for our observable-based MR approach. In terms of the observable spaces we previously introduced, one can see that the dynamics preserves the encoded information {\em only if} the associated output algebra {\em coincides} with the logical algebra generated by the observables in which the information was initially encoded. When this is not the case, the observable subspace ${\cal N}^\perp$ and the output algebra $\Os$ allow us to determine the (operator) subspace on which the information spreads and, in turns, our observable-based MR procedure allows us to compute the reduced model needed to simulate the resulting faulty dynamics more efficiently than in the physical $4^N$-dimensional operator space. 

Let $\Hc_\Cc = \Cb^2 \equiv \text{span}\{|0\rangle_L, |1\rangle_L \}$ denote the code space, with associated logical basis states, and assume that, as long as the system's state is properly initialized in ${\cal D}({\cal H}_\Oc)$, the code is invariant under the evolution generated by the ideal Lindbladian $\Lc^{\text{id}}$. Unwanted dynamical behavior is introduced by an ``error generator'' $\Lc^{\text{err}}$, which may include both a Hamiltonian and a dissipative component, corresponding to $H^{\text{err}}$ and Lindblad operators, $\{ L^{\text{err}}_j\}$. The task is to reproduce the encoded information dynamics, as determined by the time-dependent expectation values of the logical qubit observables, say, $\{C_q , q=0,x,y,z\}$, which obey the appropriate qubit commutation and anti-commutation relationships \cite{ViolaJPA}. Thus, formally, the output $Y(t)=\tau(t)$, with $\tau(t) \in \Dc( \Hc_C)$ being the logical qubit state, and the dynamical model we wish to reduce may be written as 
\begin{equation}
\hspace*{-3mm}    \begin{cases}
        \dot{\rho}(t) = \Lc^{\text{id}}[\rho(t)] + \Lc^{\text{err}}[\rho(t)] \\
        \tau(t) = \tfrac{1}{2}\sum_{q=0,x,y,z} \sigma_q \tr[C_q\rho(t)] 
    \end{cases}\hspace*{-3mm}, 
    \;\rho(0)\in{\cal D}({\cal H}_\Oc).
    \label{encodedFull}
\end{equation}

\subsubsection{Error dynamics on a three-bit subsystem code} 

As a concrete illustrative example, we consider the smallest infinite-distance noiseless subsystem code proposed in \cite{KLV,DeFilippo,fortunato2003exploring} for protecting a qubit against arbitrary collective (permutation-invariant) noise using $N=3$ physical qubits. Let $s_{jk}\equiv \vec{\sigma}^{(j)}\cdot\vec{\sigma}^{(k)}= \sigma_x^{(j)}\sigma_x^{(k)} + \sigma_y^{(j)}\sigma_y^{(k)} + \sigma_z^{(j)}\sigma_z^{(k)}$, $j,k\in \{1,2,3\}$, denote the two-spin operators that are invariant under collective spin rotations, generated by collective angular momentum operators $\{J_u, u=x,y,z\}$. One may show \cite{fortunato2003exploring} that $P_{1/2}\equiv \frac{1}{2}\left[\one_8-\frac{1}{3}\left(s_{12}+s_{23}+s_{31}\right)\right]$ is the projector onto the 4-dimensional subspace $\Hc_{1/2} \subset \Hc=(\Cb^2)^{\otimes 3}$ of states with total spin angular momentum eigenvalue $j=1/2$. If, as in Sec.\,\ref{sec:examples_central_spin}, $\Sc_3$ denotes the permutation group on three objects, we have $s_{jk} \in \Cb\Sc_3$, while $J_u \in \Cb\Sc_3'$. As one may verify, the following four observables in $\Cb\Sc_3$ then obey the algebraic properties of valid qubit observables \cite{ViolaJPA}: 
\begin{align}
      O_0 &= P_{1/2}, \notag \\
      O_x &= \frac{1}{2}(\one+s_{12})P_{1/2}, \notag \\
      O_y &= \frac{\sqrt{3}}{6}(s_{23}-s_{31})P_{1/2}, \notag \\
      O_z &= -i C_x C_y.
      \label{nsobs}
\end{align}
These observables form a 4-dimensional algebra,
$$ \Es_\Oc\equiv\alg(\{O_q\})=\Cb^{2\times 2}\otimes \one_2 \oplus \zero_4 = \Bf(\Hc_\Cc) \otimes \one_2 \oplus \zero_4,$$ 
on which we can encode a logical qubit: that is, the code space $\Hc_\Cc$ is associated with a tensor-product factor in a virtual-subsystem decomposition $\Hc\equiv  (\Hc_\Cc\otimes \Hc_Z ) \oplus \Hc_{3/2}$ of $\Hc$. Explicitly, the logical basis states may be given by the following identification:
\begin{align}
    \ket{0}_L\otimes\ket{+1/2}_Z &= \frac{1}{\sqrt{3}} \left(\ket{001}+\omega \ket{010}+\omega^2\ket{100}\right), \notag \\ 
    \ket{1}_L\otimes\ket{+1/2}_Z &= \frac{1}{\sqrt{3}}\left(\ket{001}+\omega^2\ket{010}+\omega\ket{100}\right),
    \label{ns}
\end{align}
with the two additional basis states $\{\ket{i}_L\otimes \ket{-1/2}_Z, i=0,1\}$ corresponding to total $J_z$-angular momentum eigenvalue $m=-1/2$ being obtained by flipping every spin, and  $\omega\equiv e^{i 2\pi/3}$.

By construction, if the ideal Lindbladian $\Lc^{\text{id}}$ comprises permutationally-invariant Hamiltonian and Linblad operators, $\Es_C$ is left invariant and information encoded in $\Hc_\Cc$ is unaffected, since $A=\one_2 \otimes \Cb^{2\times 2} \oplus \zero_4 $, for all $A \in \Cb\Sc_3'$: in such a case, $\Oc$ is a {\em fixed}, noiseless code \cite{blume2010information,ticozzi2010quantum}. 
In what follows, let us assume that $\Lc^{\text{id}}$ comprises Hamiltonian dynamics described by a Heisenberg exchange Hamiltonian, $H^{\text{id}}=\Gamma(s_{12}+s_{23}+s_{13}) \in \Cb\Sc_3 \cap \Cb\Sc_3'$, $\Gamma >0$, along with collective noise, described by Lindblad operators $L^{\text{id}}_u \in \{\kappa_u J_u, \kappa_u >0, u=x,y,z \}$\footnote{In the terminology of \cite{kempe2001theory}, the case where noise couples along arbitrary axes, $\kappa_u >0$, for all $u$, is also referred to as ``strong collective decoherence'' -- as opposed to ``weak collective decoherence'', in which case only collective dephasing is present, $\kappa_x=\kappa_y=0$.}. Imagine now that unwanted error behavior results from both breaking permutation symmetry ($\Gamma_{jk} \ne \Gamma$ for some pair) or (and) rotational symmetry ($\Delta \ne 1$) in the Hamiltonian, which becomes an XXZ Hamiltonian of the form 
$$ H = \sum_{j<k} \Gamma_{jk} [s_{jk}  
+ (\Delta-1)\sigma_z^{(j)}\sigma_z^{(k)}\big], \quad \Gamma_{jk} \geq 0, \Delta \geq 0,$$
and {\em local} Markovian dephasing, via Lindblad operators 
$$\{ L^{\text{err}}_j \equiv L_{z}^j = \gamma_{j}\sigma_z^{(j)}, j=1,2,3 \} , \quad \gamma_{j}>0.$$
By implementing the algorithm described in Sec.\,\ref{sec:observablered}, we may numerically determine the output algebra $\Os$ that results from the combined dynamics in Eq.\,\eqref{encodedFull}. Different possibilities arise, depending on the interplay between the nominal and error components (for illustration, see also Fig.\,\ref{fig:encoding_evolution}, where we consider evolution starting from the initial state $\rho_0 = \ketbra{\psi_0}{\psi_0}$, with $\ket{\psi_0} \equiv \ket{0}_L \otimes \ket{+1/2}_Z$ in Eq.\,\eqref{ns}):

\smallskip

(i) $\Delta=1$, $\Gamma_{13}=0$, in the presence of general collective noise and no local dephasing ($\gamma_j=0, \forall j$, and $\kappa_u >0, \forall u$). In this case, $H$ no longer belongs to $\Cb\Sc_3'$. However, since $H$ is still a linear combination of scalars under collective rotations, $\Es_\Oc$ is preserved and information encoded in $\Oc$ evolves unitarily, with the output algebra $\Os=\Es_\Oc$. Since unitary dynamics within the code space are reversible, $\Oc$ is {\em unitarily noiseless} \cite{blume2010information,ticozzi2010quantum} 
[Fig.\,\ref{fig:encoding_evolution} (a)].

(ii) $\Delta \ne 1$, $\Gamma_{13}=0$, in the presence of collective dephasing only  ($\kappa_x=\kappa_y = 0$) and again no local dephasing. In this case, $H$ also breaks collective rotational symmetry, $H\not \in \Cb \Sc_3$, implying that $\Es_\Oc$ is no longer preserved and information encoded in $\Oc$ no longer evolves unitarily. The resulting output algebra is now $\Os=\alg(C_q,\sigma_z^{(1)}\sigma_z^{(2)}P_{1/2})$, of dimension $\dim(\Os_D) = 9$, corresponding to a ``qutrit'' operator algebra [Fig.\,\ref{fig:encoding_evolution} (b)].

(iii) $\Delta \ne 1$, $\Gamma_{13}=0$, in the presence of both collective and local dephasing ($\kappa_x=\kappa_y = 0$, $\gamma_j \ne0$ for some $j$). Once again, in this case $\Es_C$ is not preserved; the output algebra generated by this evolution is found to be the same as above, $\Os=\alg(C_q,\sigma_z^{(1)}\sigma_z^{(2)}P_{1/2})$. The fact that, despite also breaking permutation symmetry, local dephasing causes no further growth of $\Os$ may be explained by noticing that the Eq.\,\eqref{ns} span a decoherence-free subspace in $\Hc_{1/2}$ and, as such, are invariant under collective $z$ errors \cite{fortunato2003exploring}; further to that, we have $[L_z^j, H]=0$ [Fig.\,\ref{fig:encoding_evolution} (c)-(d)].

(iv) $\Delta \ne 1$, $\Gamma_{13}=0$, in the presence of general collective noise and local dephasing ($\gamma_j>0$ for some $j$, and $\kappa_u >0, \forall u$). In this case, the output algebra becomes the full 16-dimensional algebra, $\Os=\Bf(\Hc_\Cc)$. Interestingly, however, similar to the previous example in Sec.\,\eqref{sec:examples_XXZ}, the observable space remains significantly smaller: here, we find $\dim(\Ns^\perp)=16$. Thus, although no quantum MR is possible in this case, linear MR remains viable if desired.

\subsubsection{Reduced encoded dynamics}

In order to compute the reduced dynamics onto the output algebra $\Os$ that is relevant to both case (i) and (ii) above, it is necessary to compute its Wedderburn decomposition. Using the numerical methods presented in \cite{de2011numerical} it is possible to find a unitary matrix $U$ that block-diagonalizes $\Os$. 
Specifically, the transformation we use here may described as effecting a change of basis relative to which the Hilbert space decomposes as $\Hc = (\Hc_{Q} \otimes \Hc_{F}) \oplus \Hc_{R}$, where $\Hc_{Q} \equiv \Span\{\ket{0}_Q, \ket{1}_Q, \ket{2}_Q\}$ is a 3-dimensional qutrit space, the co-subsystem  $\Hc_{F} \equiv \Span\{\ket{-1}_F, \ket{+1}_F\}$, and $\Hc_{R} \equiv \Span\{\ket{000},\ket{111}\}$. In this way, we have  
\begin{align}
    \ket{0}_Q\otimes\ket{+1}_F &= \frac{1}{2} \left(\ket{001}-\ket{010}-\ket{101}+\ket{110}\right),\notag\\ 
    \ket{0}_Q\otimes\ket{-1}_F &= \frac{1}{2}\left(-\ket{001}+\ket{010}-\ket{101}+\ket{110}\right),\notag \\
    \ket{1}_Q\otimes\ket{+1}_F &= \frac{1}{2}\left(-\ket{001}-\ket{010}-\ket{101}-\ket{110}\right),\notag \\
    \ket{1}_Q\otimes\ket{-1}_F &= \frac{1}{2}\left(\ket{001}+\ket{010}-\ket{101}-\ket{110}\right),\notag 
            \end{align}
    \begin{align}
    \ket{2}_Q\otimes\ket{\pm 1}_F &= \frac{1}{\sqrt{2}}\left(\ket{011}\pm \ket{100}\right), 
    \label{Qbasis}
\end{align}
where basis states in $\Hc_F$ have been labeled according to the eigenvalue of $\sigma_x^{(1)}\sigma_x^{(2)}\sigma_x^{(3)}$. With this transformation, the desired Wedderburn decomposition of $\Os$ reads 
\[U^\dag \Os U = (\Cb^{3\times 3} \otimes \one_2) \oplus \zero_2.\] 

We can then proceed to obtain the CP reduction and injection maps that yield MR onto the algebra $\check{\Os} = \Cb^{3\times 3}$, namely, 
\begin{align*}
    \Rc(X) = \tr_{\Hc_F}(W U^\dag X U W^\dag), \\
    \Jc\left(\check{X}\right) = U W^\dag \bigg(\check{X}\otimes \frac{\one_2}{2}\bigg) W U^\dag ,
\end{align*}
where $W\equiv \left[\begin{array}{c|c}
    \one_6 & \zero_{6\times 2}
\end{array}\right]\in\Cb^{6\times 8}$. 
Accordingly, the reduced state is given by 
\[\check{\rho} \equiv \Rc[\rho] \in {\cal D}(\Hc_Q) .\]
Note that the reduction map $\Rc$ is CP; however, it is TP only over the support of $\Os$, which is not full.

To obtain a representation of the reduced Lindblad generator that describes the error dynamics of Eq.\,\eqref{encodedFull}, we need to find a reduced version of the observables, as well as the Hamiltonian and noise operator entering the reduced dynamics. Representing the qubit observables of Eq.\,\eqref{nsobs} in the basis of Eq.\,\eqref{Qbasis}, the block-diagonal structure is manifest, as expected from the fact that $C_q\in\Os$ for all $q$:
\begin{align*}
    U^\dag C_q U &= ({\check{O}}_q\otimes \one_2)\oplus \zero_2,
\end{align*}
where the explicit form of ${\check{O}}_q$ is obtained as ${\check{O}}_q \equiv\Jc^\dag(C_q)$:
\begin{align*}
    {\check{O}}_0 &\equiv \frac{1}{3}\left[\!\!\begin{array}{ccc}
    3&0&0\\
    0&1&\sqrt{2}\\
    0&\sqrt{2}&2
    \end{array}\!\right], \\
    {\check{O}}_x &\equiv - \frac{1}{2}\left[\!\!\begin{array}{ccc}
    -1& 1& \sqrt{2}\\
    1& {1}/{3}& {\sqrt{2}}/{3}\\
    \sqrt{2}& {\sqrt{2}}/{3}&-{2}/{3}
    \end{array}\!\right],
\end{align*}
\begin{align*}
    {\check{O}}_y &\equiv \frac{\sqrt{3}}{6}\left[ \!\!\begin{array}{ccc}
    -3&-1&-\sqrt{2}\\
    -1&1&\sqrt{2}\\
    -\sqrt{2}&\sqrt{2}&2
    \end{array}\!\right],\\
    {\check{O}}_z &\equiv \frac{i\sqrt{3}}{3}\left[\!\!\begin{array}{ccc}
    0&1&\sqrt{2}\\
    -1&0&0\\
    -\sqrt{2}&0&0
    \end{array}\!\right]\!\!.
\end{align*}
These four observables $\{\check{O}_q\}$ obey su(2) commutation relationships and, in addition, $\check{O}_q^2 = \check{O}_0$, for all $q$, as demanded for valid logical-qubit observables \cite{ViolaJPA}. 

\begin{figure*}[ht]
\centering
\hspace*{-3mm}
     \begin{subfigure}{0.24\textwidth}
         \includegraphics[width=0.85\textwidth]{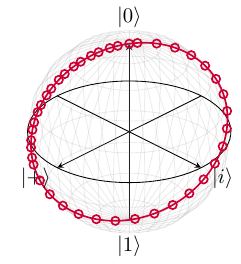}
         \caption{\centering$\Delta=1$, $\gamma_{j}=0$, $\forall j$}
     \end{subfigure}
    \begin{subfigure}{0.24\textwidth}
        \includegraphics[width=0.85\textwidth]{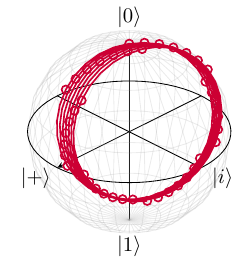}
        \caption{\centering$\Delta=10$, $\gamma_{j}=0$, $\forall j$}
    \end{subfigure}
    \begin{subfigure}{0.24\textwidth}
         \includegraphics[width=0.85\textwidth]{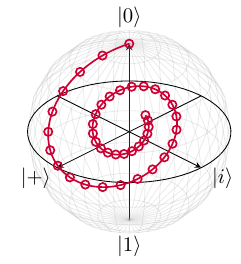}
         \caption{\centering$\Delta=2$, $\gamma_{1}=1$, $\gamma_2=\gamma_3=0$}
     \end{subfigure}
    \begin{subfigure}{0.24\textwidth}
        \includegraphics[width=0.85\textwidth]{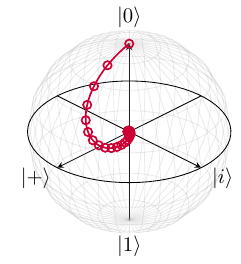}
        \caption{\centering$\Delta=2$, $\gamma_{1}= \gamma_2=\gamma_3=1$}
     \end{subfigure}
    \caption{\normalfont Time development of the logical qubit corresponding to encoded observables $\{C_q\}$ [Eq.\,\eqref{nsobs}], under error evolution due to permutation-symmetry-breaking and anisotropy in the Hamiltonian, plus local dephasing, $\gamma_j >0$, for the initial code state $\rho_0 = (\ketbra{0}{0})_L \otimes (\ketbra{+1/2}{+1/2})_Z$ [Eq.\,\eqref{ns}].
    In (a), only permutation symmetry is broken, and the code remains unitarily noiseless under general collective noise.
    In (b), anisotropy is added with only collective dephasing, while in (c) and (d) local dephasing is also present. The relevant output algebra grows to the one of a qutrit in the last three cases. The solid lines indicate the trajectory of the reduced model, Eq.\eqref{encodedFull}, while the empty dots denote the trajectory of the original one, Eq.\,\eqref{nsred}. The model parameters are $\Gamma_{12}=\Gamma_{23}=1$, $\Gamma_{13}=0$, and $\kappa_x=\kappa_y=\kappa_z=1$ in (a), while $\kappa_x=\kappa_y=0$, $\kappa_z=1$ in (b), (c) and (d).}
\label{fig:encoding_evolution}
\end{figure*}

Similarly, one can write down the reduced Hamiltonian in the basis defined by $U$, 
\begin{align}
    U^\dag H U &= ({\check{H}}\otimes\one_2)\oplus \bar{\Gamma}\Delta\one_2 ,
    \label{HamQF}
\end{align}
where $\bar{\Gamma} \equiv \sum_{j<k}\Gamma_{j,k}$
and $\check{H}$ may again be obtained as 
\begin{widetext}
\begin{eqnarray*}
{\check{H}} = \Jc^\dag(H) = 
 \left[\! \begin{array}{ccc}
(-2-\Delta)\Gamma_{23}&(\Gamma_{13}-\Gamma_{12})\Delta&(\Gamma_{13}-\Gamma_{12})\sqrt{2}\\
(\Gamma_{13}-\Gamma_{12})\Delta&(2-\Delta)\Gamma_{23}&-(\Gamma_{13}+\Gamma_{12})\sqrt{2}\\
(\Gamma_{13}-\Gamma_{12})\sqrt{2}&-(\Gamma_{13}+\Gamma_{12})\sqrt{2}&(\Gamma_{12}-\Gamma_{13}-\Gamma_{23})\Delta
\end{array}\! \right]\!. 
\end{eqnarray*}
\end{widetext}

Finally, in this basis the ideal (collective) and error (local) dephasing noise operators read:  
\begin{align}
    U^\dag L_{z}^{\text{id}} U &= (\check{J}_z \otimes\sigma_x) \oplus \frac{3\kappa_z}{2}\sigma_z,\notag\\
    U^\dag L_{z,j} U &= ({\check{L}}_{z,j}\otimes \sigma_x) \oplus \gamma_j\sigma_z, \label{NoiseQ}
\end{align}
with reduced noise operators given explicitly by 
\begin{align*}
    {\check{L}}_{z,1} &=\gamma_1\left[\begin{array}{ccc}
    -1 & 0 & 0 \\0 & -1 & 0\\0 & 0 & 1
    \end{array}\right],\\
    {\check{L}}_{z,2} &= \gamma_2\left[\begin{array}{ccc}
    0 & 1 & 0\\1 & 0 & 0\\0 & 0 & - 1
    \end{array}\right], \\
    {\check{L}}_{z,3} &= \gamma_3\left[\begin{array}{ccc}
    0 & -1 & 0\\ -1 & 0 & 0\\0 & 0 & -1
    \end{array}\right], \\
    \check{L}_{\text{id}} &= -\frac{\kappa_z}{2}\one_3.
\end{align*}
Notice that, in this case, the Hamiltonian and noise operators expressed in the basis defined by $U$ are all block-diagonal with respect to the algebra block structure. This implies that there is no flow of information from and into the block $\Hc_{F}$, which can then be deleted in the reduction. Thus finally, the reduced model is given by 
\begin{align}
\hspace*{-2mm}\begin{cases}
    \dot{\check{\rho}}(t) &= -i[\check{H},\check{\rho}(t)] + \sum_{\check{L}} \Dc_{\check{L}}[\check{\rho}(t)]\\
    \tau(t) &= \tfrac{1}{2}\sum_{q=0,x,y,z} \sigma_q\, \tr[{\check{O}}_q \check{\rho}(t)]
\end{cases},\;\,\check{\rho}_0 = \Rc(\rho_0).
\label{nsred}
\end{align}
In Fig.\,\ref{fig:encoding_evolution}, representative evolutions of the reduced and the original models are simulated and compared.
It can be seen that the trajectories of the two models perfectly overlap in all cases, as desired.

We conclude the analysis of this model by showing that it exhibits ODS operators that do {\em not} correspond to (weak) symmetries. From the form of the Hamiltonian in Eq.\,\eqref{HamQF}, we have that $[H,S]=0$ for any unitary matrix $S$ of the form $S = U [(\one_3\otimes\check{S})\oplus\one_2] U^\dag$, where $\check{S}\in\Cb^{2\times2}$ is also unitary. Hence, any such unitary is a symmetry for $H$. Nonetheless, due to the form of the noise operators in Eq.\,\eqref{NoiseQ}, unitaries $\check{S}$ that do not commute with $\sigma_x$ are {\em not} symmetries for the dissipative part of the dynamics: in fact, we have $S \Dc^\dag(X) S^\dag \neq \Dc^\dag( S X S^\dag),$ where $\Dc^\dag (X)\equiv \Dc_{L_z^{\text{id}}}^\dag(X)  +\sum_{j=1}^3 \Dc_{L_{z,j}}^\dag(X)$, unless $X\in\Os$. This shows that the operator $U[(\one_3\otimes\sigma_z)\oplus\zero_2]U^\dag$ is an ODS, but not a symmetry. 

\section{Conclusions and outlook}
\label{sec:end}

We have presented a comprehensive framework for {\em exact dimensional MR} of many-body open quantum dynamics described by a time-independent Lindblad master equation. Our approach leverages the fact that, in practical applications, the focus is on simulating only the evolution of a subset of observables of interest, starting from a known (linear) set of admissible initial conditions, and may be used to derive either 

(i) a {\em linear} model of provably minimal dimension, 

\noindent 
or 

(ii) a reduced {\em quantum} model of provably Lindblad form,

\noindent
when such reduced models exist. Mathematically, a key step is to lift the use of Krylov subspaces, widely employed in system theory to study reachability and observability properties of dynamical systems, to operator subspaces or associative algebras -- as demanded for reduction of general quantum statistical dynamics. 

For both the linear and the quantum settings, the achievable amount of reduction  
depends on many factors, in particular, the existence of strong or weak symmetries in the dynamics and the choice of the initial states. In addition to providing constructive procedures for numerically determining the different types of reductions that are possible, we have analyzed in depth a number of analytically tractable examples, in order to both gain some concrete insight on the implementation of the proposed methods and showcase their applicability and performance. These examples include both many-body systems (spin chains) subject to Markovian dissipation, and systems coupled to a structured environment (a central spin in contact with an interacting spin bath, subject in turn to Markovian dissipation). In this case, while by construction the MR at the joint system-bath level yield a Lindblad master equation, the reduced dynamics of the system (central spin) alone may include both Markovian and non-Markovian components in general. Importantly, unlike in standard approaches, non-factorized system-bath initial conditions may be easily accommodated if desired. As expected, linear reduced models may offer a larger degree of reduction, at the cost of sacrificing CPTP constraints. Again, however, we stress that preservation of these quantum constraints is a key and, to the best of our knowledge, {\em unique} advantage that our approach affords in comparison with existing methods for reducing the complexity of Lindbladian dynamics. Altogether, our examples convincingly demonstrate how the MR tools we provide may be used to investigate phenomena of interest in non-equilibrium many-body physics, ranging from decoherence behavior to quantum transport and encoded information dynamics in the presence of unintended error components. 

Special emphasis in our analysis has been given to elucidate the role and usefulness of symmetries in enabling quantum MR. On the one hand, we have made precise the sense in which a strong symmetry of the Lindbladian results in a larger computational advantage than a weak one, with the improvement in system-size scaling being exponentially larger in principle (as we explicitly saw in a boundary-driven vs.\,an isolated or purely-dephasing XXZ chain). On the other hand and most importantly, the introduction of a new class of {\em observable-dependent symmetries} has allowed us to go beyond standard symmetry-based reductions, and show that exact MR is possible even in the absence of weak symmetries, as long as observable-dependent symmetries emerge for the observables of interest.
It is natural to ask whether, in a dual manner, one could also introduce \textit{state-dependent symmetries}, as unitary conjugations that only commute with the generator $\Lc$ when applied to certain initial conditions. This cannot be enough, however, to account for the reachable-based MR introduced in Sec.\,\ref{sec:reachable}, as even state-dependent symmetries would lead to projections onto standard unital algebras corresponding to their commutators. It would then remain to explain, from a symmetry-based viewpoint, how distorted algebras emerge. This is a first problem we leave for future work. 

Several other related research questions are naturally prompted by the present investigation, some of which we highlight below.

(i) While we have considered the time-dependent expectations of the observable of interest as the quantities of interest for the simulation, our methods can be adapted to include the effect of the conditioning (state collapse) due to direct or indirect measurements, following the ideas presented for the discrete-time case in \cite{conditional}. This would be especially interesting for obtaining reduced models to describe {\em continuously monitored} many-body quantum systems, which are being intensively investigated in the context of measurement-induced entanglement phase transitions \cite{Fuji,Schiro}. Likewise, since general time-local TP master equations have recently been shown to admit an unraveling in terms of a Markov process \cite{Donvil},  this may also pave the way for extending MR to continuous-time systems described by genuinely {\em non-Markovian, time-local} master equations. 

(ii) Extensions to {\em parametrized families} of Lindbladian generators may be contemplated within our operator-algebraic framework, with appropriate modifications. Characterizing the level of MR robustly achievable for families of generators may enable a more complete understanding of Hilbert-space and operator-space fragmentation in open quantum systems, beyond the partial results available to date \cite{Essler2020,li2023hilbert}. Similarly, extensions to incorporate {\em time-dependent control} would be desirable, for instance to assess the effectiveness of methods such as dynamical decoupling in reducing decoherence. While combining Lindblad generators (e.g., adding an explicitly time-dependent Hamiltonian) is only possible under restrictive conditions \cite{Brask}, our approach would remain viable for structured environments, with Hamiltonian control applied on the system side at the level of the joint unitary dynamics, prior to MR. Such an extension to constructing exact reduced controlled models would complement, for instance, the (approximate) approach of \cite{goldschmidt2021bilinear}, which allows for quantum control but, as noted, does not retain the quantum structure in general, or approaches based on cluster-expansion techniques as in \cite{onizhuk2023understanding}, which are also approximate and purely numerical. 

(iii)  We have focused here on finite-dimensional systems, as they allow direct numerical implementation and validation of the proposed MR protocols using linear-algebraic tools. Nonetheless, the underlying ideas and methods would be well-worth extending to the {\em infinite-dimensional setting}, at the cost of added mathematical complexity \cite{Remi}. This would open up applications to the large body of open quantum systems involving bosonic modes, which are of central relevance, for instance, to circuit quantum electrodynamics \cite{Blais}.

(iv) While our reduced models are guaranteed to be in Lindblad form, they need not in general exhibit well-defined {\em locality} properties, as desirable on physical grounds. Adapting our framework so that locality constraints on observables or states (e.g., decay of correlations) are imposed in the MR procedure is a challenging problem well-worth investigating. In connection to the extension to state-dependent symmetries mentioned above, this may be possible to establish connections with operator-algebraic approaches to non-equilibrium many-body dynamics \cite{BucaPRX}, whereby exact solutions for the behavior of arbitrary local quantities of interest at all times are shown to be determined by knowledge of a smaller set of special ``pseudo-local'' quantities. Reminiscent of our state-dependent symmetries, such operators need {\em not} obey the requisite symmetry requirements for all initial states.

(v) While exact MR as we considered is particularly appealing for both rigorous analysis it may be too stringent for applications in many realistic problems. This calls for our proposed framework to be adapted to include different kinds of {\em approximate} MRs. In this venue, a first way of relaxing the exact setting would be to require that properties of interest are exactly reproduced only {\em asymptotically} in time, rather than along the full trajectory. This would suffice in applications where only SS properties of the system are of interest, as in many quantum transport or SS criticality problems \cite{BeiZ} or where non-trivial, non-stationary dynamics emerges within the system's center manifold in the long-time limit \cite{BucaNSD}. 
In moving to approximate MR, special care must be put in selecting an effective trade-off between accuracy and size of the target reduced model, while ensuring that it remains physically admissible. In this context, it would be interesting to determine whether modifications of the present methods may return a valid quantum model in settings where adiabatic elimination provably violates the CP property \cite{Cyril}.  We look forward to report on some of these issues in future studies. 

\section{Acknowledgments}
The authors are grateful to Augusto Ferrante, Vincent P. Flynn, and Michiel Burgelman for useful discussions, and to Miguel Casanova for independently verifying some of the numerical simulations. T.G. would also like to thank Alain Sarlette and Michael M. Wolf for useful comments on the general topics of this work. T.G. was supported in part by a scholarship provided by the Fondazione Ing. Aldo Gini and is thankful to Dartmouth College for the generous hospitality during the initial stages of this research. Work at Dartmouth was supported in part by the U.S. Army Research Office under grant No. W911NF2210004. F.T. was supported by the European Union through NextGenerationEU, within the National Center for High Performance Computing, Big Data and Quantum Computing under Projects CN00000013, CN 1, and Spoke 10. 


\clearpage

\onecolumn

\section*{Appendix}
\appendix

\section{Supplementary theory results}

\subsection{Technical proofs}
\label{appendix:supplementary_results}

\begin{proposition*}[\ref{prop:reachable_characterization}]
Given a QSO $(\Lc,\Oc,\Sf)$ as in Eq.\,\eqref{eqn:QHM_model_ct}, defined on a subalgebra $\As\subseteq \frak{B}(\Hc),$ with $n=\dim(\Hc)$, the reachable space can be computed as
    \begin{equation}
        \Rs = \Span\{\Lc^i(\rho_0),\, i=0,1,\dots,n^2-1,\,\rho_0\in\Sf\}.
    \end{equation}
    Moreover, $\Rs$ is the smallest $\Lc$-invariant (and $e^{\Lc t}$-invariant) subspace containing $\Span\{\Sf\}$.
\end{proposition*}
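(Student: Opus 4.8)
The plan is to show the two spans coincide and then to read off the invariance and minimality claims, following the classical reachability argument adapted to the superoperator $\Lc$ acting on the $n^2$-dimensional space $\Bf(\Hc)$. Throughout, write $\mathcal{K} \equiv \Span\{\Lc^i(\rho_0) : i = 0, \ldots, n^2-1,\ \rho_0 \in \Sf\}$ for the candidate Krylov subspace, and recall that by definition $\Rs = \Span\{e^{\Lc t}(\rho_0) : t \geq 0,\ \rho_0 \in \Sf\}$.

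First I would establish $\Rs = \mathcal{K}$. For the inclusion $\Rs \subseteq \mathcal{K}$, expand $e^{\Lc t}(\rho_0) = \sum_{i \geq 0} \tfrac{t^i}{i!}\Lc^i(\rho_0)$ and invoke the Cayley--Hamilton theorem for the linear operator $\Lc$: since $\Lc$ acts on a space of dimension at most $n^2$, every power $\Lc^i$ with $i \geq n^2$ is a linear combination of $\Lc^0,\ldots,\Lc^{n^2-1}$. Hence each $e^{\Lc t}(\rho_0)$ lies in $\mathcal{K}$, so $\Rs \subseteq \mathcal{K}$. For the reverse inclusion $\mathcal{K} \subseteq \Rs$, I would use that the curve $t \mapsto e^{\Lc t}(\rho_0)$ takes values in the fixed finite-dimensional---hence closed---subspace $\Rs$; consequently all of its derivatives at $t=0$ remain in $\Rs$, and these are precisely $\Lc^i(\rho_0) = \tfrac{d^i}{dt^i}e^{\Lc t}(\rho_0)\big|_{t=0}$. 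Thus every generator of $\mathcal{K}$ lies in $\Rs$, giving $\mathcal{K} \subseteq \Rs$ and therefore equality.

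Second, I would verify the structural claims using the representation $\Rs = \mathcal{K}$. Taking $i=0$ shows $\Span\{\Sf\} \subseteq \Rs$. For $\Lc$-invariance, note that $\Lc$ sends the generator $\Lc^i(\rho_0)$ to $\Lc^{i+1}(\rho_0)$, which is again a generator when $i < n^2-1$ and, when $i = n^2-1$, reduces to a combination of lower powers by Cayley--Hamilton; linearity then yields $\Lc(\Rs) \subseteq \Rs$. Minimality follows immediately: if $\Vs$ is any $\Lc$-invariant subspace with $\Span\{\Sf\} \subseteq \Vs$, then $\rho_0 \in \Vs$ forces $\Lc^i(\rho_0) \in \Vs$ for all $i$ by repeated application of invariance, so $\Rs \subseteq \Vs$. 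Finally, $\Lc$-invariance and $e^{\Lc t}$-invariance are equivalent on $\Rs$: the forward direction uses $e^{\Lc t} = \sum_i \tfrac{t^i}{i!}\Lc^i$ together with closedness of $\Rs$, while the converse recovers $\Lc$ as the $t=0$ derivative of $e^{\Lc t}$, exactly the derivative argument used above.

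The only genuinely delicate point is the inclusion $\mathcal{K} \subseteq \Rs$: one must justify that the derivatives of the trajectory, although defined as limits, do not escape $\Rs$. This is where finite-dimensionality is essential, since it guarantees that $\Rs$ is closed and that the difference quotients $\big(e^{\Lc(t+h)}(\rho_0) - e^{\Lc t}(\rho_0)\big)/h$, which lie in $\Rs$, converge within $\Rs$. Everything else is routine linear algebra; I would only remark that the cutoff $n^2-1$ need not be tight---the minimal polynomial of $\Lc|_{\As}$ may have degree as small as $\dim(\As) \leq n^2$---but $n^2$ serves as a uniform bound matching $\dim\Bf(\Hc)$.
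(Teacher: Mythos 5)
Your proof is correct, and it takes a genuinely different (though closely related) route from the paper's. You establish $\Rs=\mathcal{K}$ by two direct inclusions: the forward one via the exponential series plus Cayley--Hamilton, and the reverse one by observing that the trajectory lives in the finite-dimensional, hence closed, subspace $\Rs$, so its derivatives at $t=0$ --- which are exactly the Krylov generators $\Lc^i(\rho_0)$ --- cannot escape it. The paper instead argues dually: it writes the span of states reached at a fixed time $t$ as $\im(e^{\Lc t}\Pi_{\Ss})=\ker(\Pi_{\Ss}^\dag e^{\Lc^\dag t})^\perp$ and characterizes that kernel through the power series and Cayley--Hamilton. The two arguments are of comparable length, but the dual version delivers a by-product your argument does not: the states reached at any \emph{single} time $t>0$ already span all of $\Rs$ (the system ``spreads over its reachable set in arbitrarily small time''), which the paper exploits later when constructing the positive element $V=\sum_{\rho_0\in\Sf}\int_0^\epsilon e^{\Lc t}(\rho_0)\,dt$ in Step 2 of the reachable reduction procedure. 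Your primal argument is more elementary and fully sufficient for the proposition as stated; you correctly isolate closedness of $\Rs$ as the one delicate point in the reverse inclusion, your invariance and minimality arguments match the paper's, and your closing remark that $n^2$ is only a uniform bound inherited from $\dim\Bf(\Hc)$ rather than a tight one is accurate.
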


\begin{proof}
Let $\Ss\equiv\Span\{\Sf\}$, and let us also define another subspace, $\Rs_t$, which is the set of states that are reached at time $t$, i.e. $\Rs_t=\Span\{e^{\Lc t}(\rho_0), \rho_0\in\Sf\}$. We can then observe that the subspace $\Rs_t$ coincides with the image of the super-operator $e^{\Lc t}\Pi_{\Ss}$ at all times $t$, where $\Pi_\Ss$ is a projector onto $\Ss$. Recalling that ${\rm Im}(e^{\Lc t}\Pi_\Ss)=\ker(\Pi_\Ss^\dag e^{\Lc^\dag t})^\perp$, we can characterize $\Rs_t$ by characterizing $\ker(\Pi_\Ss^\dag e^{\Lc^\dag t})$. 
From the definition of the exponential map, $X\in\ker(\Pi_{\Ss}^\dag e^{\Lc^\dag t})$ if and only if $\sum_{i=0}^{+\infty}\Pi_\Ss^\dag \Lc^{\dag i}(X)\frac{t^i}{i!}= 0$, which is true if and only if $\Pi_{\Ss}^\dag\Lc^{\dag i}(X)=0$ for all $i=0,1,\dots$. From the Cayley-Hamilton theorem, we then have that $\Pi_{\Ss}^\dag\Lc^{\dag i}(X)=0$ for all $i=0,1,\dots,n^2-1$, if and only if it is equal to zero for all $i=0,1,\dots$. This directly implies that $X\in\Rs_t$ if and only if $X\in\Span\{\Lc^i(\rho_0), \, i=0,1,\dots,n^2-1, \, \rho_0\in\Sf\}$, which is independent of $t$. This means that $\Rs = \oplus_{t\geq0} \Rs_t$ is exactly $\Rs=\Span\{\Lc^i(\rho_0), \, i=0,1,\dots,n^2-1, \, \rho_0\in\Sf\}$. 

The fact that $\Rs$ is $\Lc$-invariant and that it contains $\Span\{\Sf\}$ follows trivially from this characterization of the reachable space and the Cayley-Hamilton theorem. To verify that $\Rs$ is also the smallest such subspace, consider a subspace $\Vs$, which is $\Lc$-invariant and that contains $\Ss$. Then $\Vs$ must also contain $\Lc^k\Ss$ for all $k\geq 0$, and thus must contain $\Rs$. Finally, to see that $\Rs$ is also $e^{\Lc t}$-invariant for all $t\geq 0$, it is sufficient to observe that 
$e^{\mathcal{L} t}(\Rs) = \sum_{k=0}^{\infty}\frac{t^k}{k!} 
\mathcal{L}^k(\Rs) 
\subseteq\Rs.$
\end{proof}

\begin{theorem*}[\ref{thm:linear_reachable_model_reduction}]
Consider a QSO $(\Lc,\Oc, \Sf)$, defined on a sub-algebra $\As\subseteq\Bf(\Hc)$, and its reachable subspace $\Rs$. Let $\Vs$ be an operator subspace that contains the reachable space, i.e., $\Rs\subseteq\Vs$, with $\Pi_\Vs$ denoting a (non-necessarily orthogonal) projector superoperator on $\Vs$. Let $\rs$ and $\es$ be two full-rank factors  of $\Pi_\Vs$, i.e., $\es\rs = \Pi_\Vs$, and $\rs\es = \Ic_{\Vs}$, and define $\Fc_L \equiv \rs\mathcal{L}\es,$ $\Oc_L=\Oc\Jc$ and $\Sf_L \equiv \rs(\Sf)$. We then have 
\begin{equation}
    e^{\Lc t}(\rho_0) = \es e^{\Fc_L t}\rs (\rho_0), \quad  \forall t\geq0, \forall \rho_0\in\Sf.
    \label{eq:Rreduction_app}
    \end{equation} 
Moreover, $\Vs = \Rs$ is an operator subspace of \emph{minimal} dimension for which Eq.\,\eqref{eq:Rreduction_app} holds.
\end{theorem*}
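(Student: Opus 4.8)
The plan is to establish the two assertions separately, the crucial structural input being the $\Lc$-invariance of $\Rs$ proved in Proposition \ref{prop:reachable_characterization}. For the reduction identity \eqref{eq:Rreduction_app}, I would first record that the full trajectory never leaves $\Rs$: since $\Rs$ is $e^{\Lc t}$-invariant and $\rho_0\in\Sf\subseteq\Rs$, we have $e^{\Lc t}(\rho_0)\in\Rs\subseteq\Vs$ for all $t\geq 0$. Because $\es\rs=\Pi_\Vs$ projects onto $\Vs$, this confinement gives the identity $\es\rs\big(e^{\Lc t}(\rho_0)\big)=e^{\Lc t}(\rho_0)$. I would emphasize that this step uses \emph{only} $\Rs\subseteq\Vs$ and not any invariance of $\Vs$ itself; this is exactly what lets the theorem cover an arbitrary enclosing subspace rather than just $\Rs$.

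Next I would introduce the reduced curve $\zeta(t)\equiv\rs\big(e^{\Lc t}(\rho_0)\big)$ and verify it solves the reduced equation. Differentiating yields $\dot\zeta(t)=\rs\Lc\big(e^{\Lc t}(\rho_0)\big)$; substituting $\es\big(\zeta(t)\big)=\es\rs\big(e^{\Lc t}(\rho_0)\big)=e^{\Lc t}(\rho_0)$ from the previous step turns this into $\dot\zeta(t)=\rs\Lc\es\big(\zeta(t)\big)=\Fc_L\zeta(t)$, with $\zeta(0)=\rs(\rho_0)$. By uniqueness of solutions of a linear ODE on the finite-dimensional reduced space, $\zeta(t)=e^{\Fc_L t}\rs(\rho_0)$. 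Applying $\es$ and using once more that $\es\zeta(t)=e^{\Lc t}(\rho_0)$ delivers $e^{\Lc t}(\rho_0)=\es\, e^{\Fc_L t}\rs(\rho_0)$, which is \eqref{eq:Rreduction_app}.

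For minimality I would run a dimension count on the image of the injection map. Suppose a reduced model of dimension $d'$, with factors $\rs',\es'$ and generator $\Fc_L'=\rs'\Lc\es'$, reproduces the dynamics, i.e. $\es'\, e^{\Fc_L' t}\rs'(\rho_0)=e^{\Lc t}(\rho_0)$ for all $t\geq 0$ and $\rho_0\in\Sf$. Each operator on the left lies in $\img(\es')$, so the span of the right-hand side over all $t$ and $\rho_0$ is contained in $\img(\es')$; but by the definition \eqref{eqn:reachable_space_ct} that span is precisely $\Rs$. Hence $\Rs\subseteq\img(\es')$, and since $\es'$ is defined on a $d'$-dimensional space we get $\dim\Rs\leq\dim\img(\es')\leq d'$. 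Thus no reducing subspace can fall below dimension $\dim\Rs$, and $\Vs=\Rs$ attains the bound.

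The only genuinely delicate point is the confinement argument of the first paragraph: because $\Vs$ may be strictly larger than $\Rs$ and is not assumed $\Lc$-invariant, one cannot claim $\Lc\es(\zeta)\in\Vs$ in general, so the factorization identity $\es\rs=\Pi_\Vs$ must never be applied blindly. The fix is to work along the actual trajectory, which is pinned inside the invariant subspace $\Rs\subseteq\Vs$, and to invoke $\es\rs=\Pi_\Vs$ only on operators already known to lie in $\Rs$. Everything else reduces to the uniqueness theorem for linear ODEs and the elementary rank inequality $\dim\img(\es')\leq d'$.
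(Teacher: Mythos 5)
Your proof is correct, and both halves take a genuinely different route from the paper's. For the reduction identity, the paper expands both exponentials as power series and proves the operator identity $(\Pi_\Vs\Lc\Pi_\Vs)^k\Pi_\Rs=(\Pi_\Rs\Lc\Pi_\Rs)^k\Pi_\Rs$ term by term, using $\rs\es=\Ic_\Vs$ to telescope $(\rs\Lc\es)^k$; you instead show that $\zeta(t)=\rs\bigl(e^{\Lc t}(\rho_0)\bigr)$ satisfies the reduced ODE and invoke uniqueness of solutions. The two arguments rest on the same structural facts ($\Sf\subseteq\Rs\subseteq\Vs$ and $\Lc$-invariance of $\Rs$), and your closing caveat --- that $\es\rs=\Pi_\Vs$ may only be applied to operators already pinned inside $\Rs$, since $\Vs$ itself need not be invariant --- is exactly the subtlety the paper's identity is engineered to circumvent; the ODE route arguably makes the logic more transparent. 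For minimality, the paper argues by contradiction over subspaces $\Vs^*\subsetneq\Rs$ (producing a $Y\in\Rs\ominus\Vs^*$ whose image under $e^{\Lc t_2}$ must be both zero and nonzero, the nonzero part coming from invertibility of $e^{\Lc t_2}$) and otherwise defers to classical control-theoretic results; your argument is a direct containment $\Rs=\Span\{e^{\Lc t}(\rho_0)\}\subseteq\img(\es')$ followed by the rank bound $\dim\img(\es')\leq d'$. Your version is cleaner, avoids the paper's somewhat garbled contradiction step, and applies to an arbitrary competing reduced model rather than only to subspaces contained in $\Rs$, at no extra cost.
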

\begin{proof}
Let us start by observing that for all $\rho_0$ and for all $t\geq0$
\begin{align*}
    \es e^{\rs\Lc\es t}\rs(\rho_0) &= \sum_{k=0}^{+\infty}\frac{t^k}{k!} \es (\rs \Lc \es)^k\rs(\rho_0) 
    = \sum_{k=0}^{+\infty}\frac{t^k}{k!} (\Pi_\Vs \Lc \Pi_\Vs)^k\Pi_\Vs(\rho_0)
    = e^{\Pi_\Vs\Lc\Pi_\Vs t}\Pi_\Vs(\rho_0)
\end{align*}
where we used the definition of the exponential map and the property of the isometry $\es\rs = \Pi_\Vs$.
Let then proceed by considering $\Pi_\Rs$ to be a (non-necessarily orthogonal) projector onto $\Rs$ (which might be orthogonal for a different inner product than the one that makes $\Pi_\Vs$ orthogonal) and let us recall three key facts from Proposition \ref{prop:reachable_characterization} and from the hypothesis. First $\Sf\subseteq\Rs$ and hence $\Pi_\Rs(\rho_0) = \rho_0$ for all $\rho_0\in\Sf$. Second, $\Rs$ is $\Lc$-invariant and hence $\Lc\Pi_\Rs = \Pi_\Rs \Lc \Pi_\Rs$ and $\Lc^{k}\Pi_\Rs = (\Pi_\Rs\Lc\Pi_\Rs)^k\Pi_\Rs$ for all $k\in\Nb$. Third, $\Rs\subseteq\Vs$ and thus $\Pi_\Vs\Pi_\Rs = \Pi_\Rs$. Moreover, combining the last two observations we have $\Pi_\Vs\Lc\Pi_\Vs\Pi_\Rs = \Pi_\Vs\Lc\Pi_\Rs  = \Pi_\Vs\Pi_\Rs\Lc\Pi_\Rs= \Pi_\Rs\Lc\Pi_\Rs$ and thus $(\Pi_\Vs\Lc\Pi_\Vs)^{k}\Pi_\Rs = (\Pi_\Rs\Lc\Pi_\Rs)^k\Pi_\Rs$.

Let then consider the right-hand side of the statement, i.e., $e^{\Lc t}(\rho_0)$. Applying the first two facts we just recalled and the definition of the exponential super operator we obtain:
\begin{align*}
    e^{\Lc t}(\rho_0) &= e^{\Lc t}\Pi_\Rs(\rho_0) = \sum_{k=0}^{+\infty} \frac{t^k}{k!}\Lc^{k}\Pi_\Rs(\rho_0) 
    = \sum_{k=0}^{+\infty} \frac{t^k}{k!}(\Pi_\Rs\Lc\Pi_\Rs)^{k}\Pi_\Rs(\rho_0) 
    = e^{\Pi_\Rs\Lc\Pi_\Rs t}\Pi_\Rs(\rho_0),
\end{align*}
for all $t\geq0$ and $\rho_0\in\Rs$ and hence, in particular for all $\rho_0\in\Sf$.
Similarly, for the left-hand side of the statement, we obtain:
\begin{align*}
    e^{\Pi_\Vs\Lc\Pi_\Vs t}\Pi_\Vs(\rho_0) &= e^{\Pi_\Vs\Lc\Pi_\Vs t}\underbrace{\Pi_\Vs\Pi_\Rs}_{\Pi_\Rs}(\rho_0) 
    = \sum_{k=0}^{+\infty} \frac{t^k}{k!}(\Pi_\Vs\Lc\Pi_\Vs)^{k}\Pi_\Rs(\rho_0) \\
    &= \sum_{k=0}^{+\infty} \frac{t^k}{k!}(\Pi_\Rs\Lc\Pi_\Rs)^{k}\Pi_\Rs(\rho_0) 
    = e^{\Pi_\Rs\Lc\Pi_\Rs t}\Pi_\Rs(\rho_0) ,
\end{align*}
for all $t\geq0$ and for all $\rho_0\in\Rs$ and hence, in particular for all $\rho_0\in\Sf$. This proves the first statement.

At last, we will prove the minimality of choosing $\Vs=\Rs$ by contradiction. Assume that there exists a smaller subspace $\Vs^*\subset\Rs$ such that 
$e^{\Lc t}(\rho_0) = e^{\Pi_{\Vs^*}\Lc\Pi_{\Vs^*}t}\Pi_{\Vs^*}(\rho_0)$ for all $\rho_0\in\Sf$ and $t\geq0$.
Let us denote with $\Rs\ominus\Vs^*$ the subspace of $\Rs$ such that $\Rs = \Vs^* \oplus (\Rs\ominus\Vs^*)$.
Then, from the definition of $\Rs$, there exists $\rho_0$ and $t_1$ such that $e^{\Lc t_1}(\rho_0) = X+Y$ where $X\in\Vs^*$ and  $Y\in \Rs\ominus\Vs^*$, with $Y\neq0$. Moreover, there also exist a time $t_2$ such that $e^{\Lc t_2}(Y)=0$. 
On the other hand, from the assumption that both $\Vs^*$ and $\Rs$ lead to effective reductions, we have that 
$(e^{\Pi_{\Rs}\mathcal{L}\Pi_{\Rs} t_2} - e^{\Pi_{\Vs^*}\mathcal{L}\Pi_{\Vs^*}{t_2}} ) \circ e^{\mathcal{L}{t_1}}(\rho_0) = 0. $
Using $e^{\mathcal{L}{t_1}}(\rho_0)=X+ Y,$ and the fact that $\Pi_{\Rs}(X)=\Pi_{\Vs^*}(X)=X$ while $\Pi_{\Vs^*}(Y)=0$ and $\Pi_{\Rs}(Y)=Y$, we obtain
\begin{align*}
    0&=(e^{\Pi_{\Rs}\mathcal{L}\Pi_{\Rs}{t_2}} - e^{\Pi_{\Vs^*}\mathcal{L}\Pi_{\Vs^*}{t_2}}) (X+Y)
    =e^{\Pi_{\Vs}\mathcal{L}\Pi_{\Vs}{t_2}}(Y)=e^{\mathcal{L}{t_2}}(Y),
\end{align*}
where in the last expression we used the fact that $Y\in{\Rs}.$ So we proved that if such ${\Vs^*}\subsetneq \Rs$ existed, $e^{\mathcal{L}{t_2}}(Y)$ should be both equal and different from zero, reaching an absurd. 

The fact that the reachable subspace $\Rs$ provides the operator subspace of minimal dimension for which \eqref{eq:Rreduction} holds is a consequence of known results from control-system theory \cite{wonham, kalman1969topics}.
\end{proof}

\begin{lemma}
Consider a bipartite Hilbert space $\Hc =\Hc_S\otimes\Hc_F$ and a positive-definite operator $\tau\in\Hf_{>}(\Hc_F)$.
Then, for all $Q\in\Bf(\Hc)$ and $A\in\Bf(\Hc_S)$, the following properties hold:
\begin{align*}
    \tr_{F}[Q (A\otimes \tau)] = \tr_{F}[Q (I_S\otimes\tau)]A, \qquad 
    \tr_{F}[(A\otimes\tau) Q] = A \tr_{F}[(I_F\otimes\tau) Q].
\end{align*}
\end{lemma}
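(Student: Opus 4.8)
The plan is to reduce both identities to the elementary ``pull-through'' property of the partial trace, namely that any operator supported only on $\Hc_S$ can be moved outside of $\tr_{F}$. First I would record the factorization
\[
A\otimes\tau = (A\otimes I_F)(I_S\otimes\tau) = (I_S\otimes\tau)(A\otimes I_F),
\]
which holds because operators acting on distinct tensor factors commute. This separates the $\tau$-dependence (carried by $I_S\otimes\tau$) from the $A$-dependence (carried by $A\otimes I_F$), and the two factors may be reordered freely.

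Next I would establish the two pull-through identities
\[
\tr_{F}[(A\otimes I_F)M] = A\,\tr_{F}[M], \qquad \tr_{F}[M(A\otimes I_F)] = \tr_{F}[M]\,A,
\]
valid for every $M\in\Bf(\Hc)$ and $A\in\Bf(\Hc_S)$. Fixing an orthonormal basis $\{\ket{f_k}\}$ of $\Hc_F$ and writing $\tr_{F}[M]=\sum_k(I_S\otimes\bra{f_k})\,M\,(I_S\otimes\ket{f_k})$, the first identity follows from $(I_S\otimes\bra{f_k})(A\otimes I_F)=A\,(I_S\otimes\bra{f_k})$ and the second from $(A\otimes I_F)(I_S\otimes\ket{f_k})=A\,(I_S\otimes\ket{f_k})$; in each case $A$ acts purely on $\Hc_S$, commutes past the $F$-factor projectors, and can be pulled out of the sum. (This is the standard module property of the partial trace and could alternatively be cited rather than rederived.)

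Combining the two ingredients then yields the claim immediately. For the first stated identity I would write $\tr_{F}[Q(A\otimes\tau)] = \tr_{F}[Q(I_S\otimes\tau)(A\otimes I_F)] = \tr_{F}[Q(I_S\otimes\tau)]\,A$, applying the right pull-through with $M=Q(I_S\otimes\tau)$. For the second, $\tr_{F}[(A\otimes\tau)Q] = \tr_{F}[(A\otimes I_F)(I_S\otimes\tau)Q] = A\,\tr_{F}[(I_S\otimes\tau)Q]$, applying the left pull-through with $M=(I_S\otimes\tau)Q$; here I read the operator on the right-hand side as $I_S\otimes\tau$, since $\tau\in\Bf(\Hc_F)$ and the printed $I_F\otimes\tau$ appears to be a typographical slip.

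I do not anticipate a genuine obstacle: the whole argument is a bookkeeping exercise in tensor-factor commutation, and the only point requiring care is keeping the notation for $\tr_{F}$ and its codomain $\Bf(\Hc_S)$ unambiguous. It is worth remarking that positive-definiteness of $\tau$ plays no role whatsoever — the identities are purely algebraic and hold for arbitrary $\tau\in\Bf(\Hc_F)$; the hypothesis $\tau\in\Hf_{>}(\Hc_F)$ is merely inherited from the application, where $\tau$ is a density-operator block of a state extension.
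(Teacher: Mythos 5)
Your proof is correct, and it takes a genuinely different route from the one in the paper. The paper's proof works by expanding $\tau=\sum_k w_k\ketbra{\phi_k}{\phi_k}$ in its eigenbasis and writing both $\tr_F(\cdot)$ and $\cdot\otimes\tau$ as CP maps in Kraus form, with Kraus operators $M_l = I_S\otimes\bra{\phi_l}$ and $S_k=\sqrt{w_k}\,I_S\otimes\ket{\phi_k}$ built from that same eigenbasis; the result then follows from the orthogonality relation $S_k^\dag M_l^\dag = \sqrt{w_k}\,I_S\,\delta_{kl}$. That argument genuinely invokes positivity of $\tau$ (to obtain real square roots $\sqrt{w_k}$ and an orthonormal eigenbasis), whereas your factorization $A\otimes\tau=(A\otimes I_F)(I_S\otimes\tau)=(I_S\otimes\tau)(A\otimes I_F)$ combined with the bimodule ("pull-through") property $\tr_F[(A\otimes I_F)M]=A\,\tr_F[M]$ and $\tr_F[M(A\otimes I_F)]=\tr_F[M]\,A$ isolates exactly where $A$ can be extracted and never touches the structure of $\tau$. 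Your closing remark is therefore accurate and worth making explicit: the identities hold for arbitrary $\tau\in\Bf(\Hc_F)$, and the hypothesis $\tau\in\Hf_{>}(\Hc_F)$ is inherited from the intended application (the distortion operators of a state extension) rather than being needed for the lemma itself. You are also right that the printed $I_F\otimes\tau$ on the right-hand side of the second identity is a typographical slip for $I_S\otimes\tau$; the paper's own proof computes with the correct operator. Your approach buys generality and transparency at essentially no cost; the paper's version buys consistency with the CP/Kraus formalism used throughout Appendix A, which is a stylistic rather than a mathematical advantage.
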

\begin{proof}
Let us start by observing that, since $\tau\in\Hf_>(\Hc_F)$, we can write $\tau = \sum_k w_k \ketbra{\phi_k}{\phi_k}$ with $w_k>0$ for all $k$ and $\{\ket{\phi_k}\}$ an orthonormal base for $\Hc_F$. Moreover, since $\tr_F(\cdot)$ is CP, it allows for a Kraus form, e.g. $\tr_F(\cdot) = \sum_l M_l \cdot M_l^\dag$ with $M_l = I_S\otimes \bra{\psi_l}$ where $\{\ket{\psi_l}\}$ can be any base for $\Hc_F$ and, in particular, we here pick $\ket{\psi_l}=\ket{\phi_l}$. Similarly, the operation $\cdot\otimes\tau:\Bf(\Hc_S)\to\Bf(\Hc)$ is also CP and hence it allows for a Kraus form $\cdot\otimes\tau = \sum_k S_k \cdot S_k^\dag$ with $S_k = \sqrt{w_k} I_S\otimes\ket{\phi_k}$.  
Now, substituting these two Kraus forms into $\tr_{F}[Q (\tau\otimes A)]$ we obtain:
\begin{align*}
    \tr_{F}[Q (\tau\otimes A)] 
    &= \sum_{l} M_l Q (\tau\otimes A)  M_l^\dag
    = \sum_{l,k} M_l Q S_k A S_k^{\dag}  M_l^{\dag}. 
\end{align*}
Now, we can observe that $S_k^\dag M_l^\dag = \sqrt{w_k}(I_S\otimes\ket{\phi_k})^\dag (I_S\otimes \bra{\phi_l})^\dag =\sqrt{w_k} (I_S\otimes\bra{\phi_k})(I_S\otimes\ket{\phi_l}) = \sqrt{w_k}I_S\braket{\phi_k}{\phi_l} = \sqrt{w_k} I_S \delta_{k,l}$ and thus $\tr_{F}[Q (\tau\otimes A)]= \sum_{l} \sqrt{w_l} M_l^{\dag} Q S_l A$.
On the other hand, using the same techniques one can verify that $\tr_{F}[Q (I_S\otimes\tau)]A =\sum_{l} \sqrt{w_l} M_l^{\dag} Q S_l A.$
The second equation of the statement is proved in the same manner.
\end{proof}

\begin{lemma}
Let us consider a Hilbert space comprising a direct sum, $\Hc=\bigoplus_k\Hc_k$. Consider the algebra $\As$ with structure $\As = \bigoplus_k \Bf(\Hc_k)$ and two operators in it, i.e. $A,B\in\As$. Then 
$$ AB = \sum_{k=0}^{K-1} V_k^\dag A_kB_k V_k = \bigoplus_k A_kB_k,$$
where $A_k,B_k\in\Bf(\Hc_k)$ and $V_k$ are such that $V_k:\Hc\to\Hc_k$ and $V_kV_k^\dag = I_k$.
\end{lemma}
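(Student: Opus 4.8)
The plan is to expand both operators in terms of the block isometries $V_k$ and then collapse the resulting double sum using the mutual orthogonality of the summands. First I would record the elementary relations satisfied by the maps $V_k\colon \Hc \to \Hc_k$. Since the $\Hc_k$ are the orthogonal summands of $\Hc = \bigoplus_k \Hc_k$, the adjoint $V_k^\dag\colon \Hc_k \to \Hc$ is the inclusion and $V_k$ the corresponding co-restriction, so that $V_k V_k^\dag = I_k$ by hypothesis, while for $j \neq k$ one has $V_j V_k^\dag = 0$ because the ranges of the inclusions are mutually orthogonal. These combine into the single identity $V_j V_k^\dag = \delta_{jk} I_k$, which is the only structural fact the argument needs.

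Next, because $A, B \in \As = \bigoplus_k \Bf(\Hc_k)$, each is block-diagonal and admits the representation $A = \sum_k V_k^\dag A_k V_k$ and $B = \sum_k V_k^\dag B_k V_k$, where $A_k \equiv V_k A V_k^\dag$ and $B_k \equiv V_k B V_k^\dag$ are the blocks in $\Bf(\Hc_k)$. I would then form the product and distribute,
\[
AB = \Big(\sum_j V_j^\dag A_j V_j\Big)\Big(\sum_k V_k^\dag B_k V_k\Big) = \sum_{j,k} V_j^\dag A_j \,(V_j V_k^\dag)\, B_k V_k .
\]
Substituting $V_j V_k^\dag = \delta_{jk} I_k$ annihilates every off-diagonal term and leaves $AB = \sum_k V_k^\dag A_k B_k V_k$, which is precisely the claimed block-diagonal operator $\bigoplus_k A_k B_k$.

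There is no genuine obstacle here: the statement is just the elementary fact that the product of two block-diagonal operators is again block-diagonal, with the blocks multiplying blockwise. The only point requiring a moment of care is verifying the orthogonality relation $V_j V_k^\dag = \delta_{jk} I_k$, i.e.\ fixing the convention that the $V_k$ are the partial isometries canonically associated with the orthogonal direct-sum decomposition rather than arbitrary isometries onto $\Hc_k$. Once this is in place, the remainder is the one-line cancellation displayed above.
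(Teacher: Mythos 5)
Your proof is correct and follows the same route as the paper, which simply states that the result ``follows trivially from the block-diagonal structure of the operators $A$ and $B$''; you have merely written out explicitly the orthogonality relation $V_j V_k^\dag = \delta_{jk} I_k$ and the resulting cancellation that the paper leaves implicit.
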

\begin{proof}
The proof follows trivially from the block-diagonal structure of the operators $A$ and $B$.
\end{proof}

\begin{proposition}
\label{prop:combining_reduction_injection}
Let consider a $*$-subalgebra $\mathscr{A}$ of $\mathcal{B(H)}$ with Wedderburn decomposition $\As = U\left(\bigoplus_\ell \Bf(\Hc_{S,\ell})\otimes \one_{F,\ell} \right)U^\dag \simeq \bigoplus_\ell \Bf(\Hc_{S,\ell}) =: \check{\As}$. Let then $\rs$ and $\es$ be the CPTP factorization of the state extension $\SE_\mathscr{A} = \es\circ\rs$ as defined in Eq.\,\eqref{eqn:reduction} and \eqref{eqn:injection}. Then, for all $A\in\check{\As}$ and for all $X\in\Bf(\Hc)$, we have  
\begin{equation}
\rs[X \es(A)] = \es^\dag(X) A, \qquad 
\rs[\es(A) X] = A \es^\dag(X).
\end{equation}
\end{proposition}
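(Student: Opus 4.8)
The plan is to reduce both identities, block by block, to the bipartite partial-trace lemma proven immediately above. Adopting the notation of Eqs.\,\eqref{eqn:reduction}--\eqref{eqn:injection}, so that $\Hc_{F,k}$ is the factor on which $\As$ acts fully and $\Hc_{G,k}$ the multiplicity factor carrying the full-rank state $\tau_k$, I would first normalize away the Wedderburn unitary. Since $\rs(Y)=\bigoplus_k\tr_{\Hc_{G,k}}\!\big(W_k Y W_k^\dag\big)$ depends on $Y$ only through the diagonal blocks $Y_{kk}\equiv W_k Y W_k^\dag$, and $\es(A)=U\big(\bigoplus_k A_k\otimes\tau_k\big)U^\dag$, replacing $X$ by $\tilde X\equiv U^\dag X U$ changes neither side of either identity beyond relabeling. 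Hence I may work as if $U=\one$, where $\es(A)=\bigoplus_k A_k\otimes\tau_k$ is block-diagonal and $\rs(Y)=\bigoplus_k\tr_{\Hc_{G,k}}(Y_{kk})$.

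Next I would extract the diagonal blocks of the two products. Because $\es(A)$ is block-diagonal and $\rs$ discards off-diagonal blocks, ordinary block multiplication gives $\big(X\es(A)\big)_{kk}=X_{kk}\,(A_k\otimes\tau_k)$ and $\big(\es(A)X\big)_{kk}=(A_k\otimes\tau_k)\,X_{kk}$, the off-diagonal blocks of $X$ never contributing. Substituting into the definition of $\rs$ yields
\[
\rs[X\es(A)]=\bigoplus_k \tr_{\Hc_{G,k}}\!\big(X_{kk}(A_k\otimes\tau_k)\big),\qquad
\rs[\es(A)X]=\bigoplus_k \tr_{\Hc_{G,k}}\!\big((A_k\otimes\tau_k)X_{kk}\big).
\]

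The core step is then to apply the preceding partial-trace lemma to each block, with the identifications $\Hc_S\mapsto\Hc_{F,k}$, $\Hc_F\mapsto\Hc_{G,k}$, $\tau\mapsto\tau_k$, $A\mapsto A_k$, and $Q\mapsto X_{kk}$; the hypothesis is met because each $\tau_k$ is a full-rank density operator, hence positive-definite. Its two conclusions give $\tr_{\Hc_{G,k}}\!\big(X_{kk}(A_k\otimes\tau_k)\big)=\tr_{\Hc_{G,k}}\!\big(X_{kk}(\one_{F,k}\otimes\tau_k)\big)A_k$ and $\tr_{\Hc_{G,k}}\!\big((A_k\otimes\tau_k)X_{kk}\big)=A_k\,\tr_{\Hc_{G,k}}\!\big((\one_{F,k}\otimes\tau_k)X_{kk}\big)$. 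I would then recognize $\tr_{\Hc_{G,k}}\!\big((\one_{F,k}\otimes\tau_k)X_{kk}\big)$ as exactly the $k$-th block of $\es^\dag(X)$, which a short Hilbert-Schmidt adjoint computation from Eq.\,\eqref{eqn:injection} confirms to be $\es^\dag(X)=\bigoplus_k \tr_{\Hc_{G,k}}\!\big((\one_{F,k}\otimes\tau_k)W_k X W_k^\dag\big)$ (using that $\one_{F,k}\otimes\tau_k$ commutes past the partial trace over $\Hc_{G,k}$). Assembling the blocks by the block-multiplication lemma, for which products in $\check\As=\bigoplus_k\Bf(\Hc_{F,k})$ act diagonally, delivers $\rs[X\es(A)]=\es^\dag(X)\,A$ and $\rs[\es(A)X]=A\,\es^\dag(X)$.

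The main obstacle I anticipate is purely organizational rather than conceptual: one must pin down the explicit block form of $\es^\dag$ and keep the isometry relations $W_kW_\ell^\dag=\delta_{k\ell}\,\one_{F,k}\otimes\one_{G,k}$ straight when moving between the $U=\one$ frame and the original basis, as well as note the (easily checked) cyclicity of the partial trace against operators of the form $\one_{F,k}\otimes\tau_k$ so that the lemma's output matches $\es^\dag$. Once the adjoint is identified and the products are reduced to their diagonal blocks, the two claimed identities follow immediately from the two forms of the partial-trace lemma.
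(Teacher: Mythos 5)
Your proposal is correct and follows essentially the same route as the paper's proof: both reduce the products to their diagonal blocks via the isometry relations $W_kW_\ell^\dag=\delta_{k\ell}\one$ (equivalently, the block-diagonality of $\es(A)$), apply the preceding bipartite partial-trace lemma block by block with $\tau_k>0$, and identify the resulting factor with $\es^\dag(X)=\Jb_0(X)$. The only cosmetic difference is that you normalize away $U$ and speak of diagonal blocks, whereas the paper carries the double sum over isometries explicitly; your observation that $\one_{F,k}\otimes\tau_k$ can be moved across the partial trace is correct and matches the paper's implicit use of the same fact.
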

\begin{proof}
Let us start by recalling the definitions of $\rs$ and $\es$ from Eq.\,\eqref{eqn:reduction} and \eqref{eqn:injection} for convenience:
\begin{align*}
    \es(A)=\sum_{k=0}^{K-1} W_k^\dag\left( V_kAV_k^\dag\otimes \tau_k\right) W_k, &&
\rs(X)=\sum_{k=0}^{K-1}V_k^\dag\tr_{F,k}[W_k X W_k^\dag]V_k
\end{align*}
for all $X\in\Bf(\Hc)$ and $A\in\check{\As}$. 
We can then substitute the definitions of $\rs$ and $\es$ into $\rs( X\es(A))$ to obtain:
\begin{align*}
\rs[X\es(A)] &= \sum_{k=0}^{K-1}  \mathcal{R}[X W_k^\dag (A_k\otimes\tau_k) W_k]
    = \sum_{l,k=0}^{K-1}  V_l^\dag\tr_{F,k} \bigg[W_l X W_k^\dag (A_k\otimes\tau_k) \underbrace{W_k W_l^\dag}_{\one_{n_k}\delta_{k,l}} \bigg]V_l\\
    &= \sum_{k=0}^{K-1}  V_k^\dag \tr_{F,k}\left[W_k X W_k^\dag (A_k\otimes\tau_k) \right] V_k 
    = \sum_{k=0}^{K-1}  V_k^\dag\tr_{F,k}\left[W_k X W_k^\dag(\one_{m_k}\otimes\tau_k)\right] A_k V_k\\
    &=   \underbrace{\sum_{k=0}^{K-1} V_k^\dag\tr_{F,k}[(\one_{m_k}\otimes\tau_k)W_k X W_k^\dag]V_k}_{\es^\dag(X)=\ro(X)} A  ,
\end{align*}
where we used the fact that $W_kW_j = \one_{n_k}\delta_{j,k}$ and the two lemmas above. The other equality can be proved in an analogous manner.
\end{proof}

\begin{proposition}
\label{prop:double_commutant}
Let us consider a set of operators $\Phi$ containing the identity, i.e. $\one\in\Phi$. Then $\Phi'' = \alg(\Phi)$. 
\end{proposition}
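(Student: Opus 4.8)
The plan is to prove the two inclusions $\alg(\Phi)\subseteq\Phi''$ and $\Phi''\subseteq\alg(\Phi)$ separately, recognizing the statement as the finite-dimensional bicommutant theorem. The first inclusion is immediate: the commutant of any set is a unital algebra (and a $*$-algebra once the set is adjoint-closed), so $\Phi''$ is a unital $*$-algebra; since $\Phi\subseteq\Phi''$ by construction and $\alg(\Phi)$ is by definition the smallest $*$-algebra containing $\Phi$, we get $\alg(\Phi)\subseteq\Phi''$. For the reverse inclusion, the first fact I would record is that the commutant depends only on the generated algebra: any operator commuting with every element of $\Phi$ automatically commutes with all sums, products, and adjoints of such elements, hence $\Phi'=\alg(\Phi)'$ and therefore $\Phi''=\alg(\Phi)''$. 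This reduces the claim to showing $\As''=\As$ for the unital $*$-algebra $\As\equiv\alg(\Phi)$.

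To establish $\As''=\As$, I would exploit the Wedderburn structure recalled in Sec.\,\ref{sec:algebraic_model_reduction_1}. Since $\one\in\Phi$, the algebra $\As$ is unital, so its decomposition \eqref{eq:wedderburn} carries no residual summand ($\Hc_R$ empty), giving $\As=U\big(\bigoplus_k \Bf(\Hc_{F,k})\otimes\one_{G,k}\big)U^\dag$ with commutant $\As'=U\big(\bigoplus_k \one_{F,k}\otimes\Bf(\Hc_{G,k})\big)U^\dag$ as in \eqref{eq:wedderburn_commutant}. The key step is then a direct computation: $\As'$ is itself a unital $*$-algebra in Wedderburn form, obtained from the form of $\As$ by interchanging the roles of the factors $\Hc_{F,k}\leftrightarrow\Hc_{G,k}$. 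Applying the commutant formula \eqref{eq:wedderburn_commutant} a second time to $\As'$ therefore returns $\As''=(\As')'=U\big(\bigoplus_k \Bf(\Hc_{F,k})\otimes\one_{G,k}\big)U^\dag=\As$, which closes the argument.

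The main obstacle --- and the only place where the hypotheses genuinely enter --- is this second application of the commutant formula, i.e. verifying that nothing outside the block-diagonal form commutes with $\bigoplus_k \one_{F,k}\otimes\Bf(\Hc_{G,k})$; this is precisely the content of Schur's lemma underlying \eqref{eq:wedderburn_commutant}, and it requires $\As$ to be a genuine $*$-algebra, which is why I would assume (as holds in the intended application, where $\Phi=\Ns^\perp$ has Hermitian generators $O_i$) that $\Phi$ is closed under the adjoint. The hypothesis $\one\in\Phi$ is equally essential: it forces $\As$ to be unital and thereby removes the $\zero_R$ summand, without which $\As''$ would reinstate the identity on the complementary subspace and strictly contain $\As$. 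As an alternative route that avoids the explicit Wedderburn form, I could instead run the classical bicommutant argument --- for $T\in\As''$ and the orthogonal projector $P$ onto the invariant subspace $\As\xi$ (which lies in $\As'$ precisely by $*$-closure, and satisfies $P\xi=\xi$ because $\one\in\As$), one obtains $T\xi=PT\xi\in\As\xi$ --- and then upgrade this pointwise statement to the operator equality $T\in\As$ via the standard amplification to $\Hc^{\oplus n}$ with $n=\dim\Hc$; the amplification step, needed to pass from pointwise to global agreement, would be the delicate part of that variant.
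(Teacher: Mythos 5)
Your proof is correct, and its logical skeleton coincides with the paper's: both arguments hinge on the elementary observation that an operator commuting with every element of $\Phi$ automatically commutes with all sums and products of such elements, so that $\Phi'=\alg(\Phi)'$ and the problem reduces to the double-commutant identity $\As''=\As$ for the unital $*$-algebra $\As=\alg(\Phi)$. Where you diverge is in how that last identity is handled: the paper simply cites the von Neumann bicommutant theorem, whereas you prove it, either by applying the Wedderburn commutant formula \eqref{eq:wedderburn_commutant} twice --- which is clean and entirely self-contained given the material of Sec.~\ref{sec:algebraic_model_reduction_1} --- or via the classical projection-plus-amplification argument. The Wedderburn route buys transparency about exactly where the hypothesis $\one\in\Phi$ enters (it removes the $\Hc_R$ summand, without which $\As''$ would reinstate the identity on $\Hc_R$ and strictly contain $\As$), a point the paper leaves implicit in the citation. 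You are also right to flag that $\Phi$ must be adjoint-closed for the reduction $\Phi'=\alg(\Phi)'$ to hold when $\alg(\Phi)$ denotes the generated $*$-algebra: the paper's own computation only checks commutation with linear combinations and products, not adjoints, so it tacitly relies on the same hypothesis --- which is satisfied in the intended application, where $\Phi=\Ns^\perp$ is spanned by Hermitian operators.
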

\begin{proof}
We will prove the proposition by proving the equivalent relation $\Phi' = (\alg(\Phi))'$, which follows from the Von Neumann bi-commutant theorem, see e.g., \cite[Sec.\,I.9.1]{blackadar2006operator}.  Then, from the definition of $\alg(\Phi)$, it follows that $\Phi\subseteq\alg(\Phi)$, which in turn implies that $\Phi' \supseteq (\alg(\Phi))'$.

It then remains to prove the other inclusion, i.e., $\Phi'\subseteq (\alg(\Phi))'$. This is equivalent to prove that any element that commutes with every element in $\Phi$ must also commute with every element in $\alg(\Phi)$. Take $X\in\Phi'$ we have, from the definition of commutant $[X,Y]=0$ for all $Y\in\Phi$. Then, since any element in $\alg(\Phi)$ can be written as a linear combination and products of elements in $\Phi$ and, since $[X,\alpha Y+\beta Z]=0$ and $[X,YZ] = XYZ - YZX = YXZ - YZX = YZX -YZX = 0$, for all $\alpha,\beta\in\Rb$ and for all $Y,Z\in\Phi$, we have that $[X,Y]=0$ for all $Y\in\alg(\Phi)$.
\end{proof}

\begin{theorem*}[\ref{maxsymm}]
Let $\{O_i\} \subset \Hf (\Hc)$ be a set of observables, and let $\Lc$ be a Lindblad generator. Then $\alg\{\Ns^\perp\}\subsetneq\Bf(\Hc)$ if and only if there exists a non-trivial $\{O_i\}$-ODS for $\Lc$. Furthermore, we have 
$$\Os=\alg\{\Ns^\perp\}=\Cb\bar\Gs',$$ 
where $\bar\Gs$ is the largest group of ODS for the system.
\end{theorem*}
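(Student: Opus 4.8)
The plan is to translate every object in the statement into the language of commutants and then let the bicommutant theorem (Proposition \ref{prop:double_commutant}) do the work. The conceptual pivot is the observation that a unitary $S$ is an $\{O_i\}$-ODS precisely when it commutes with the whole observable space $\Ns^\perp$, i.e.\ $S\in(\Ns^\perp)'$. First I would establish this anchoring equivalence. Using the polynomial description of $\Ns^\perp$ in Eq.\,\eqref{eqn:non_obs_Heisenberg_2}, the defining ODS condition $\Sc\Lc^{\dag n}(O_i)=\Lc^{\dag n}(O_i)$ reads $[S,\Lc^{\dag n}(O_i)]=0$ for all $n,i$; since the operators $\{\Lc^{\dag n}(O_i)\}$ span $\Ns^\perp$, this is equivalent to $[S,X]=0$ for every $X\in\Ns^\perp$, and conversely any unitary in $(\Ns^\perp)'$ satisfies the ODS condition. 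Hence the largest ODS group $\bar\Gs$ is exactly the group of unitary elements of $(\Ns^\perp)'$. Because the $O_i$ are Hermitian and $\Lc^\dag$ preserves Hermiticity, $\Ns^\perp$ is $*$-closed, so $(\Ns^\perp)'$ is a unital $*$-algebra and possesses an abundant unitary group.

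Next I would compute $\Cb\bar\Gs'$. The key structural input is that a finite-dimensional unital $*$-algebra equals the complex linear span of its unitary elements; applied to $(\Ns^\perp)'$ this gives $\alg(\bar\Gs)=\Cb\bar\Gs=(\Ns^\perp)'$. Taking commutants, and using that the commutant of a group equals the commutant of the algebra it spans, I obtain $\Cb\bar\Gs'=(\Ns^\perp)''$. Now I would apply Proposition \ref{prop:double_commutant} to the $*$-closed set $\Ns^\perp$: adjoining $\one$ leaves the commutant unchanged and, under the standing convention that $\Os$ is unital (or after restricting to its support), leaves $\alg(\Ns^\perp)$ unchanged, so that $(\Ns^\perp)''=\alg(\Ns^\perp)=\Os$. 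Combining the two identities yields $\Os=\alg\{\Ns^\perp\}=\Cb\bar\Gs'$, which is the ``furthermore'' clause.

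Finally the stated equivalence follows by tracking triviality through one more commutant. Since $\Os=(\Ns^\perp)''$, its commutant is $\Os'=(\Ns^\perp)'=\Cb\bar\Gs$. For a unital $*$-algebra $\Os$ one has $\Os=\Bf(\Hc)$ iff $\Os'=\Cb\one$; therefore $\alg\{\Ns^\perp\}\subsetneq\Bf(\Hc)$ iff $\Cb\bar\Gs=\Os'$ contains a non-scalar element, and because $\Cb\bar\Gs$ is a $*$-algebra spanned by its unitaries, this occurs iff $\bar\Gs$ contains a unitary not proportional to $\one$, i.e.\ iff a non-trivial $\{O_i\}$-ODS exists (the phases $e^{i\theta}\one$ inducing $\Sc=\Ic$ being the trivial ones). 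This closes the proof.

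The hard part will not be the commutant bookkeeping but the two standard structural facts that make the group-to-algebra passage exact: (a) that a finite-dimensional unital $*$-algebra is the $\Cb$-span of its unitaries, which is what guarantees $\Cb\bar\Gs=(\Ns^\perp)'$ rather than merely $\Cb\bar\Gs\subseteq(\Ns^\perp)'$; and (b) the unitality/support bookkeeping needed to legitimately invoke Proposition \ref{prop:double_commutant}, since $\one$ need not belong to $\Ns^\perp$ a priori. I would state (a) explicitly and dispatch (b) by the adjoin-$\one$-and-restrict-to-support argument already used elsewhere in the paper.
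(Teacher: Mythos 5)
Your proposal is correct, and it reaches the result by a genuinely different (and somewhat slicker) route than the paper. The paper never states your anchoring equivalence explicitly: it proves only the easy inclusion $\Ns^\perp\subseteq\Cb\Gs'$ for an ODS group (Corollary \ref{generalcon}), and then, for the converse and for maximality, it works concretely in the Wedderburn coordinates of $\Os=\alg(\Ns^\perp)$ — it exhibits $\bar\Gs = U\big(\bigoplus_k\one_{F,k}\otimes\Uf(\Hc_{G,k})\big)U^\dag$ by hand and shows any ODS group must have trivial $F$-factors because the block components $O_{n,i,k}$ generate all of $\Bf(\Hc_{F,k})$. You instead observe at the outset that $S$ is an $\{O_i\}$-ODS iff $S$ commutes with every generator $\Lc^{\dag n}(O_i)$, hence iff $S$ is a unitary element of $(\Ns^\perp)'$; maximality of $\bar\Gs$ is then automatic, and the identity $\Cb\bar\Gs=(\Ns^\perp)'$ follows from the standard fact that a finite-dimensional unital $*$-algebra is the $\Cb$-span of its unitaries (which you rightly flag as the one structural input that must be stated — without it one only gets $\Cb\bar\Gs\subseteq(\Ns^\perp)'$). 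From there both arguments converge on the same bicommutant step $\Cb\bar\Gs'=(\Ns^\perp)''=\alg(\Ns^\perp)$ via Proposition \ref{prop:double_commutant}, and both carry the same unitality/support caveat since $\one$ need not lie in $\Ns^\perp$ (indeed $\Lc^\dag(\one)=0$); you handle this explicitly, the paper implicitly. What your route buys is brevity and a clean abstract characterization of the full ODS group; what the paper's buys is an explicit block-form description of $\bar\Gs$, which is what it actually uses in the examples. Your closing equivalence ($\Os=\Bf(\Hc)$ iff $\Os'=\Cb\one$ iff $\bar\Gs$ is trivial modulo phases) is also a valid, slightly more systematic way to obtain the ``if and only if'' clause than the paper's two separate directions.
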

\begin{proof}
If $\Gs$ is an ODS then, from Corollary \ref{generalcon}, we know that $\alg\{\Ns^\perp\}  \subseteq \Cb\Gs'$. Under the assumption that $\Gs$ is non-trivial, meaning it contains elements other than the identity, the commutant of $\Gs$ must be a proper subalgebra of $\Bf(\Hc)$; hence we have established the backward direction.

For the forward direction, assume  $\alg\{\Ns^\perp\}\subsetneq\Bf(\Hc)$. This implies that the observable algebra has a non-trivial Wedderburn decomposition, say:
\begin{equation*}
\alg(\Ns^\perp) = U\bigg(\bigoplus_k \Bf(\Hc_{F,k})\otimes \one_{G,k}\bigg)U^\dag ,
\end{equation*}
where $U$ is the unitary transformation of basis. In particular, generators of the algebra all have block structures of the form,
\begin{equation*}
\Lc^{\dag n}(O_i)=U\bigg(\bigoplus_k O_{n,i,k}\otimes \one_{G,k}\bigg)U^\dag,\quad O_{n,i,k}\in\Bf(\Hc_{F,k}).
\end{equation*}
Note that, since the operators $\Lc^{\dag n}(O_i)$ are the generators for $\alg(\Ns^\perp)$, it must hold that,
\begin{equation}
\label{complete}
    \alg\{O_{n,i,k}\}=\Bf(\Hc_{F,K}).
\end{equation}
Let us then denote by $\Uf(\Hc)$ the unitary group of operators in $\Bf(\Hc)$. Then, the group
\begin{equation}
\label{maximalgroup}
\bar\Gs \equiv U\bigg(\bigoplus_k \one_{F,k}\otimes 
\Uf(\Hc_{G,k})\bigg)U^\dag,
\end{equation}
is a group of ODS. Indeed, any unitary group that commutes with all $\Lc^{\dag n}(O_i)$ must be of the form,
\begin{equation*}
\Gs=T\bigg(\bigoplus_k \mathcal{U}_{F,k}\otimes \mathcal{U}_{G,k}\bigg)T^\dag,
\end{equation*}
where $\mathcal{U}_{F,k}$ and $\mathcal{U}_{G,k}$ are unitary subgroups of $\Uf(\Hc_{F,k})$ and $\Uf(\Hc_{F,k})$, respectively. Also, $[\mathcal{U}_{F,k},O_{ni,k}]=0$ for all $n,i$, which implies that $\mathcal{U}_{F,k}$ lies in the commutant to the algebra generated by $O_{ni,k}$. However, due to \eqref{complete} this commutant is trivial, therefore, $\mathcal{U}_{F,k}= \{\one_{F,k}\}$. We see that \eqref{maximalgroup} is indeed the maximal group of ODS.

Notice that in the above decomposition, it follows from Proposition\,\ref{prop:double_commutant} that ${\mathbb{C}}\bar\Gs'=(\Ns^\perp)''=\alg(\Ns^\perp)$. This implies that reducing to the commutant of the ODS $\Cb\bar\Gs'$ is equivalent to reducing to the output algebra $\Os$.
\end{proof}

\subsection{Algorithm complexity}
\label{app:complexity}

Here, we discuss the implementation and the complexity of the proposed MR algorithms. We focus first on the observable-based linear and quantum MR algorithms: the linear one is covered by the first three steps described below, while to obtain a quantum reduction one has to further proceed by computing minimal algebras and their block decompositions. As in the main text, let $\dim(\Hc)=n<\infty$ and consider a set of $m$ linearly independent observables of interest.

\vspace*{-1mm}

\begin{enumerate}

	 \item In order to compute the generators of $\Ns^\perp$, using Eq.\,\eqref{eqn:non_obs_Heisenberg_2}, we need to perform the product between a $n^2\times n^2$ matrix and a $n^2$ vector, a number of times equal to $n^2-1$ for each observable. This implies a worst-case complexity of $O(mn^8)$.

\vspace*{-2mm}

    \item In order to find an orthogonal basis for $\Ns^\perp,$ we can use Gram-Schmidt orthogonalization procedure on the $mn^2$ vectors of dimension $n^2$, giving a complexity for this step of $O(m^2n^6)$, \cite{golub2013matrix}.

    \item In practice, the two above steps can be combined to obtain a more efficient algorithm in the more relevant case $\Ns^\perp\subsetneq\Bf(\Hc).$ This follows from the fact that, if we define  
    $$\Ns^\perp_f \equiv \Span\{\Lc^{\dag,k}(O_j), k=0,\dots,f,\,\forall j\},$$ 
    we have $\Ns^\perp_{f-1}\subseteq \Ns^{\perp}_{f}\subseteq\Ns^\perp_{f+1}\subseteq {\Ns^\perp} $. 
    Furthermore, one can prove that whenever $\dim(\Ns^\perp_{f-1}) = \dim(\Ns^\perp_{f})$ for some $f$, then $\dim(\Ns^\perp_{f}) = \dim(\Ns^\perp_{q})$ for any $q\geq f$, \cite{kalman1969topics}. {This implies that, instead of computing $mn^2$ vectors and then performing the orthogonalization procedure on all of them one can, at each step, compute $m$ new vectors (which requires a complexity of $O(mn^6)$) and perform the orthogonalization only on these new vectors (which requires a complexity of $O(mn^2)$), discarding all the vectors that are linearly dependent while keeping the others. The algorithm stops when a new iteration does not add any new vector, thus repeating the process $f$ times. This reduces the complexity of these two steps from $O(mn^8)$ to $O(fmn^6)$. Clearly, $f$ cannot be known {\em a priori} and could be as large as $n^2$, but this proves to be a more efficient implementation in common cases, as one usually stops for $f<n^2$.} 

\vspace*{-2mm}
    \item For quantum MR, it is also necessary to compute (4a) the observable algebra $\Os$ and (4b) its Wedderburn decomposition. 

        \vspace*{-2mm}
\begin{itemize}
    \item[(4a)] Since the Wedderburn decomposition of $\Os$ and its commutant $\Os'$ are complementary, one can decide to compute one or the other. To compute the algebra $\Os$ from $\Ns^\perp$, one can adapt the standard algorithms for the closure of a Lie algebra, upon substituting the binary operation $[\cdot,\cdot]$ with the matrix product. Such an approach has a complexity of approximately $O(n^8)$ \cite{zeier2011symmetry}, although more efficient numerical algorithms are being actively investigated \cite{iiyama2025fast}. On the other hand, to compute $\Os'$, assuming that $\dim(\Ns^\perp) = d$, we have to compute the null space of a matrix of size $dn^2\times n^2$, which implies a complexity of $O(d n^6)$ using standard SVD techniques \cite{park2023fast}.

\vspace*{-2mm}
    \item[(4b)] The computation of the Wedderburn decomposition requires the computation of the SVD of generic operators in $\Os$ (or $\Os'$) and thus has a complexity of $O(n^3)$. We also refer the interested reader to \cite{EBERLY200435,murota2010numerical,de2011numerical} for more in-depth discussion.
    \end{itemize}
    
\end{enumerate}

    A complete algorithm for linear ME consists of first performing the linear observable-based MR, followed by the reachable-based MR -- {\em without} the need to iterate the process; see \cite{tit2023,kalman1969topics}. Accordingly, based on the above estimates (first three steps), the linear MR algorithm has an overall complexity of $O(n^8)$.

    For quantum MR, the overall complexity of the Algorithm \ref{algo:composed} still combines the complexities of the reachable-based and observable-based MR. Unlike the linear reduction, however, the (finite) number of of times that the two subroutines need to be repeated remains an open problem at this time; see \cite{tit2023}. Based on the above estimates, a single run of the algorithm proposed in Sec.\,\ref{sec:observablered} for observable-based quantum MR has a complexity of the order of $O(n^8)$. Since the reachable-based quantum MR algorithm proposed in Sec.\, \ref{sec:lindblad_reduction} is virtually identical from a computational complexity point of view, it also has a complexity of the order of $O(n^8)$. Regarding the required number of iterations, a worst-case upper bound is clearly provided by the dimension of the initial operator space ($n^2$). However, at each iteration, the size of the operators we manipulate may be reduced, making this bound very conservative. For instance, it is worth stressing that, in all the examples we considered, we never had to iterate more than 3 times.

    For all the examples included in this paper, we were able to test the proposed numerical MR algorithm on a commercial laptop for systems up to $N=6$ qubits (although the code can certainly be optimized further). Despite this being a modest number of qubits, it allowed us validate the minimality of the algebra we obtained with symmetry-based considerations, as shown in Fig.\,\ref{fig:central_spin_dim} and \ref{fig:xxz_dim}. Leveraging the {\em analytical description} of the algebra given by symmetry-based considerations, we could then extend the results and compute the reduced model for larger system size, as shown e.g., in Fig.\,\ref{fig:Ising} and \ref{fig:tessieri}.

\section{Observable-based model reduction in central-spin models with collective couplings}
\label{centralspin}

\subsection{Single-axis coupling Hamiltonian}
\label{appendix:zurek}

Before discussing the more general central-spin model of Sec.\,\ref{sec:examples_central_spin}, it is worth considering the simpler yet relevant case where the system-bath interaction Hamiltonian involves a single coupling operator, which also enters the free bath Hamiltonian, say,
\begin{equation}
H_{SB} = \frac{1}{2} \Big( \omega_1  \sigma_z^{(1)} + \eta \sigma_x^{(1)} \Big)  + 
\frac{{\mu}}{2} {J}_{x}  + \frac{1}{2} A_x\sigma_x^{(1)} J_x . 
\label{genZurek}
\end{equation}
Importantly, $[H_{\text{int}}, H_B]=0$. If, in addition, $[H_{\text{int}}, H_S]= 0$ (that is, if $\omega_1=0$ in Eq.\eqref{genZurek}), the above become a purely dephasing model, as considered for instance in \cite{zurek1982environment,Dawson}. In such a special case, a well-known analytical solution exists for factorized system-bath initial conditions, and general initial conditions have been studied by using a B$+$ decomposition technique \cite{PhysRevA.100.042120}.

Given the spectral decomposition $J_x = \one_S\otimes\sum_{\ell} \lambda_\ell \ketbra{\varphi_\ell}{\varphi_\ell}$ it is possible to re-write $H_{SB}$ as
\[H_{SB} = \sum_{\ell}H_{S,\ell}\otimes\ketbra{\varphi_\ell}{\varphi_\ell}, \qquad\text{with}\qquad H_{S,\ell}\equiv\frac{1}{2}\Big(\omega_1\sigma_z + \eta\sigma_x+\lambda_\ell\mu\one_S+\lambda_\ell A_x\sigma_x\Big),\]
since $\sum_\ell \ketbra{\varphi_\ell}{\varphi_\ell} = \one_B$. 
In order to practically compute the observable-based MR described in Sec.\,\ref{sec:observablered}, we need to first compute the generators of $\Ns^\perp$,  $\ad^n_{H_{SB}}(\sigma_q^{(1)})$, with $n=0,1,\dots,2^{2N}-1$. It is then easy to verify that $\ad^n_{H_{SB}}(\sigma_q^{(1)}) = \sum_\ell \ad^n_{H_{S,\ell}}(\sigma_q)\otimes\ketbra{\varphi_\ell}{\varphi_\ell}$. This fact is already sufficient to prove that the coherences of the bath are not necessary to reproduce the evolution of $\rho_S$ for any initial condition $\rho_0\in\Df(\Hc)$ and thus the bath can be reduced to a classical Markov model of size $2^{N_B}$. However, further reduction is possible in this case.

For simplicity, let us first consider the case $\omega_1=0$. Then, we have that $\ad^n_{H_{S,\ell}}(\sigma_0) = \ad^n_{H_{S,\ell}}(\sigma_x) = 0$, $\forall \ell$ and $n>1$, while $\ad_{H_{S,\ell}}(\sigma_y) = 2i(\eta+\lambda_\ell A_x)\sigma_z$ and $\ad_{H_{S,\ell}}(\sigma_z) = -2i(\eta+\lambda_\ell A_x)\sigma_y$, $\forall \ell$. This implies that all the operators $\ketbra{\varphi_\ell}{\varphi_\ell}$ associated to the {\em same} eigenvalue $\lambda_\ell$ always appear together, hence
\begin{align*}
    \Ns^\perp &= \Span\{\one_{SB},\sigma_x^{(1)},\sigma_y\otimes\Pi_m, \sigma_x\otimes\Pi_m;\, \forall m\in[-N_B/2,\dots,N_B/2]\},\\ 
    \Os = \alg(\Ns^\perp) &= \Span\{\sigma_q\otimes \Pi_m; \forall m\in[-N_B/2,\dots,N_B/2]\},
\end{align*}
where $\Pi_m = \sum_{\ell|\lambda_\ell = m} \ketbra{\varphi_\ell}{\varphi_\ell}$ are the eigenprojectors associated to the eigenvalues $m=-\frac{N_B}{2},\dots,\frac{N_B}{2}$ of $J_x$. From this, one can easily verify that $\dim(\Ns^\perp) = 2N+2$ and $\dim(\Os)=4N$. 
Note that, as long as $\omega_1=0$, the introduction of dissipative terms on the bath -- either collective $J_q$ or local $\sigma_q^{(j)}$, along axis $q=x,y,z$ -- does not change the observable space and algebra. Further details can be found in \cite{MyThesis}.

Similar calculations can be carried out for the case $\omega_1\neq0$. Specifically, in this case we have: 
$$\ad_{H_{S,\ell}}(\sigma_0) =0, \quad \ad_{H_{S,\ell}}(\sigma_x) = 2i\omega_1 \sigma_y, \quad \ad_{H_{S,\ell}}(\sigma_y)=-2i\omega_1\sigma_x + 2i (\eta+\lambda_\ell A_x)\sigma_z, \quad \ad_{H_{S,\ell}}(\sigma_z) = -2i(\eta+\lambda_\ell A_x)\sigma_y,$$ 
which again lead to the same observable space and same observable algebra. Thus, this central-spin model can be reduced to a qubit interacting with a {\em classical} Markov model of size $N$. Specifically, let us define the reduced Hilbert space $\check{\Hc} \equiv \Hc_S \otimes \Hc_A$, with $\Hc_A\simeq\Cb^N$, a new ``surrogate'' bath. Let then $\{\ket{k}\}_{k=1}^{N}$ be the standard basis for $\Hc_A$, and $\{\ket{m}\}_{m=-N_B/2}^{N_B/2}$ a convenient relabelling $m \equiv k-1-N_B/2$,  so that $\ket{m=-N_B/2}=\ket{k=1}$ and $\ket{m=N_B/2}=\ket{k=N}$. With this, we can define the reduced state $\check{\rho} = \sum_{m} \rho_{S,m}\otimes\ketbra{m}{m}$ and reduced Hamiltonian as 
\[\check{H} = \sum_{m=-N_B/2}^{N_B/2} H_{S,m} \otimes\ketbra{m}{m},\qquad\text{with}\qquad H_{S,m}\equiv \frac{1}{2}\big[\omega_1\sigma_z + \eta\sigma_x+m(\mu\one_S+A_x\sigma_x)\big] .\]
For any initial condition $\rho(0)=\rho_0\in\Df(\Hc)$, one can compute the reduced initial condition $\check{\rho}(0)$ by computing $\rho_{S,m}(0) = \tr_E[ \rho_0 (\one_S\otimes \Pi_m)]$, for all $m=-\frac{N_B}{2},\dots,\frac{N_B}{2}$. The reduced state $\check{\rho}(t)$ then evolves according to $\dot{\check{\rho}}(t) = -i[\check{H},\check{\rho}(t)]$ and the evolution of the central spin can be retrieved as $\rho_S(t) = \sum_m \rho_{S,m}(t)$. 

\smallskip

The reduced model shows that the coherences of the bath in the initial states are not necessary to reproduce the evolution of the central spin. Furthermore, because of the structure of $\check{\rho}$ and $\check{H}$, the evolved state will never develop bath coherences and will always be in the form $\check{\rho} = \sum_{m} \rho_{S,m}\otimes\ketbra{m}{m}$. 
One can further notice that the evolution of the reduced model is equivalent to the evolution of $N$ separate qubits, each initialized in $\rho_{S,m}(0)/\tr[\rho_{S,m}(0)]$ and evolving through the Hamiltonian $\check{H}_{S,m}$. To retrieve the evolution of $\rho_S(t)$, one simply computes the weighted average of the $N$ unitary evolutions, i.e. $\sum_m \rho_{S,m}(t)$. The reduced model thus behaves as a {\em probabilistic ensemble of $N$ qubits}, each with their own evolution and for which we take the expectation value. 
For this reason, we can say that the reduced model is effectively a {\em quantum-classical hybrid} \cite{barchielli2023markovian,Dammeier2023quantumclassical}, where the central spin is the quantum component while the bath behaves as a classical Markov model.

\subsection{Generic XYZ coupling Hamilltonian}
\label{appendix:bipartition}

We now derive the reduced model presented in Sec.\,\ref{section:basecase}, that is, we consider system-bath model Hamiltonians of the form:
\begin{equation}
H_{SB} = \frac{1}{2} \Big( \omega_1  \sigma_z^{(1)} + \eta \sigma_x^{(1)} \Big) + \frac{1}{2} \Big(\overline{\omega} {J}_{z}+ \mu {J}_{x}   \Big) + \frac{1}{2} \Big( A_x\sigma_x^{(1)} J_x +  A_y\sigma_y^{(1)} J_y + A_z\sigma_z^{(1)}J_z \Big). 
\label{fullH}
\end{equation}

Let us start by making explicit the unitary change of basis $U$ that puts $\Cb\Gs_N'$ in its block-diagonal structure. Let $V\in\Bf(\Hc_B)$ be a unitary matrix whose columns are composed of the states $\ket{j,m;\alpha}$, ordered by $j, m$ and $\alpha$. Then we define 
$$U \equiv \Big(\one_2\otimes V\Big) \,K_{2^{N-1},2} \Big(\bigoplus_j K_{2,(2j+1)d_j}\Big),$$ 
where $K_{n,m}$ are tensor swap permutation matrices, that is, such that $K_{n,m}(A\otimes B)K_{n,m}^\dag = B\otimes A$ for all $A\in\Cb^{n\times n}$ and $B\in\Cb^{m\times m}$. Note that the tensor swap matrices are necessary for technical reasons: specifically, they are included due to the fact that the tensor product is distributive with respect to the diagonal sum $\oplus$ only on the left term of the product, i.e., $\oplus_k (A \otimes B_k) \neq A\otimes (\oplus_k B_k)$, while $\oplus_k (A_k \otimes B) = (\oplus_k A_k)\otimes B$.
The resulting unitary $U$ is such that
\begin{align*}
    \mathbb{C}\Gs =  U\bigg(\bigoplus_{j}  \one_{F,j}\otimes \Bf(\Hc_{G,j})\bigg)  U^\dag, \qquad 
    {{\mathbb C}\Gs'}= U\bigg(\bigoplus_{j} \Bf(\Hc_{F,j})\otimes\one_{G,j}\bigg)U^\dag, 
\end{align*}
with $\dim(\Hc_{F,j}) = 4j+2$, and $\dim(\Hc_{G,j}) = d_j$ as defined in Eq.\,\eqref{multiplicity}.
The adjoint of the non-square isometries $W_j:\Hc\to\Hc_{F,j}\otimes\Hc_{G,j}$ are then composed as 
\[W_j = \left[\begin{array}{c|c|c}\zero_{\ell_j,s_j}& \one_{\ell_j} &\zero_{\ell_j,t_j}\end{array}\right]U^\dag,\] 
where $\ell_j = 2(2j+1)d_j$ and $s_j=\sum_{k=0(1/2)}^{j-1} \ell_k$ and $t_j = \sum_{k=j+1}^{N_B/2} \ell_k$.

We now aim to express the interaction Hamiltonian in Eq.\,\eqref{fullH} in the new basis. Let us first rewrite $H_{\text{int}}$ in terms of raising and lowering operators $J_\pm\equiv(J_x\pm iJ_y)$, i.e., 
\begin{align*}
    H_{\text{int}} =\frac{1}{2}\left(A_x\sigma_x^{(1)}\frac{J_++J_-}{2} + A_y\sigma_y^{(1)}i\frac{J_- - J_+}{2} + A_z\sigma_z^{(1)}J_z\right).
\end{align*}
Recalling that $J_\pm\ket{a}\otimes\ket{j,m;\alpha}=\lambda_{j,m,\pm}\ket{a}\otimes\ket{j,m\pm1;\alpha}$, with $\lambda_{j,m,\pm}=\sqrt{(j\mp m)(j\pm m +1)}$, we can further observe that for $q=x,y$ we have: 
\begin{align*}
    U^\dag \sigma_q^{(1)}J_+ U &= \bigg(\bigoplus_j K_{2,(2j+1)d_j}^\dag\bigg) K_{2^{N-1},2}^\dag \big(\one_2\otimes V^\dag\big) \sigma_q^{(1)}J_+ \left(\one_2\otimes V\right) K_{2^{N-1},2} \bigg(\bigoplus_j K_{2,(2j+1)d_j}\bigg)\\
    &= \bigg(\bigoplus_j K_{2,(2j+1)d_j}^\dag\bigg) K_{2^{N-1},2}^\dag \bigg(\sigma_q\otimes \sum_{j,m,\alpha} \lambda_{j,m,+} \ketbra{j,m+1}{j,m} \otimes \ketbra{j,\alpha}{j,\alpha} \bigg) K_{2^{N-1},2} \bigg(\bigoplus_j K_{2,(2j+1)d_j}\bigg)\\
    &= \bigg(\bigoplus_j K_{2,(2j+1)d_j}^\dag\bigg)\bigg(\bigg[\bigoplus_{j} \check{J}_{+,j} \otimes \one_{G,j}\bigg]\otimes \sigma_q  \bigg)\bigg(\bigoplus_j K_{2,(2j+1)d_j}\bigg)\\
    &= \bigg(\bigoplus_j K_{2,(2j+1)d_j}^\dag\bigg)\bigg(\bigoplus_{j} \big[\check{J}_{+,j} \otimes \one_{G,j} \otimes \sigma_q\big] \bigg)\bigg(\bigoplus_j K_{2,(2j+1)d_j}\bigg)\\
    &= \bigoplus_{j} K_{2,(2j+1)d_j}^\dag\left[\check{J}_{+,j} \otimes \one_{G,j} \otimes \sigma_q\right] K_{2,(2j+1)d_j} 
    = \bigoplus_{j} \left[\sigma_q\otimes\check{J}_{+,j} \otimes \one_{G,j} \right].
\end{align*}
Thus, summarizing and extending to the other operators, we have:
\begin{align*}
    U^\dag \sigma_q^{(1)}J_+ U &= \bigoplus_j \sigma_q\otimes \underbrace{\bigg[\sum_{m=-j}^{j-1} \lambda_{j,m,+}\ketbra{m+1}{m} \bigg]}_{\equiv\check{J}_{+,j}} \otimes\one_{G,j}, \quad q=x,y, \\
    U^\dag \sigma_q^{(1)}J_- U &= \bigoplus_j \sigma_q\otimes \underbrace{\bigg[\sum_{m=-j+1}^{j} \lambda_{j,m,-}\ketbra{m-1}{m} \bigg]}_{\equiv \check{J}_{-,j}} \otimes\one_{G,j},\quad q=x,y, \\
    U^\dag \sigma_z^{(1)}J_z U &= \bigoplus_j \sigma_z\otimes \underbrace{\bigg[\sum_{m=-j}^j m\ketbra{m}{m}\bigg]}_{\equiv\check{J}_{z,j}} \otimes\one_{G,j}.
\end{align*}
Combining these results, one can re-write the Hamiltonian $H_{\text{int}}$ in this basis obtaining: 
\begin{equation}
    U^\dag H_{\text{int}} U = \bigoplus_j \frac{1}{2} \Big( A_x \sigma_x\otimes\check{J}_{x,j} + A_y \sigma_y\otimes\check{J}_{y,j} + A_z \sigma_z\otimes\check{J}_{z,j}\Big) \otimes \one_{G,j},
    \label{eq:Ham_in_a_new_basis}
\end{equation}
where $\check{J}_{x,j} = \frac{1}{2}(\check{J}_{+,j}+\check{J}_{-,j})$ and $\check{J}_{y,j} = \frac{i}{2}(\check{J}_{-,j}-\check{J}_{+,j})$. To compute the reduced Hamiltonian $\check{H}_{\text{int}}$ one can then resort to Corollary \ref{corollary:Hamiltonian}, with $\Jc^\dag(X) = \bigoplus_j \tr_{\Hc_{G,j}}\left[W_j X W_j^\dag\right]/\dim(\Hc_{G,j})$, obtaining \cite{tit2023}:
\[\check{H}_{\text{int}} \equiv \Jc^\dag(H_{\text{int}}) = \bigoplus_j \frac{1}{2} \Big( A_x \sigma_x\otimes\check{J}_{x,j} + A_y \sigma_y\otimes\check{J}_{y,j} + A_z \sigma_z\otimes\check{J}_{z,j}\Big).\]
In this particular case, since $H_{\text{int}}\in\Cb\Gs'$ and $\Cb\Gs'$ is a unital algebra, the result of $\Jc^\dag(H_{\text{int}})$ is equivalent to the removal of the identity terms $\one_{G,j}$ from its representation Eq.\,\eqref{eq:Ham_in_a_new_basis}, as shown in Eq.\,\eqref{reducedObHamLind}.

\section{Weak symmetry in boundary-driven XXZ models} 
\label{sec_appendix_symm}

The XXZ boundary-driven model discussed in Sec.\,\ref{sec:examples_XXZ} has been studied in multiple works, including \cite{Buca_2012,popkov2013manipulating}. Let us consider the  
continuous unitary group $\Gs_{\varphi} =e^{-i\varphi M}$ generated by the total magnetization operator. Ref.\,\cite{Buca_2012} states that $\Gs_\varphi$ is a strong symmetry group for a similar model to the one considered here (same Hamiltonian, different noise operators). However, it was not pointed out that it is also a {weak symmetry} for the same model we consider. Moreover, \cite{popkov2013manipulating} states that $\Gs_\varphi$ is a symmetry group for the non-equilibrium SS (see Footnote 5 therein); however, this does not imply that it is a symmetry for the dynamics. Here, we explicitly prove that $\Gs_\varphi$ is a group of weak symmetries for the boundary-driven XXZ chain we consider. We note that similar calculations have been carried out for a XXZ Hamiltonian 
in \cite{fragmentation} (see in particular Sec.\,II.B.2).

First, since the operators in the set $\{\sigma_z^{(j)}\}_{j=1}^{N}$ are mutually commuting, we may write $U_\varphi = \prod_{j=1}^{N} U_{\varphi, j}$ where, by using Pauli algebra, we have $U_{\varphi, j} = e^{-i\varphi/2 \sigma_z^{(j)} } = \cos(\frac{\varphi}{2}) \one_{2^N} -i \sin(\frac{\varphi}{2})\sigma_z^{(j)}$.

Before we proceed to the actual proof, we establish the following preliminary result. 
\begin{lemma}
    Consider $\sigma_+^{(k)}$ and $\sigma_-^{(k)}$, with $k=1,\dots,N$. Then, $U_\varphi \sigma_+^{(k)} = e^{-i\varphi}\sigma_+^{(k)}U_\varphi$ and $U_\varphi \sigma_-^{(k)} = e^{i\varphi}\sigma_-^{(k)}U_\varphi$.
\end{lemma}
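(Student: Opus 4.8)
The plan is to prove the two commutation relations by direct computation, exploiting the product structure $U_\varphi = \prod_{j=1}^N U_{\varphi,j}$ together with the locality of $\sigma_\pm^{(k)}$. First I would observe that $\sigma_+^{(k)}$ acts non-trivially only on site $k$, so it commutes with every factor $U_{\varphi,j}$ for $j\neq k$; hence all the commutation content is concentrated in the single factor $U_{\varphi,k}$. This reduces the problem to verifying a one-qubit identity, namely $U_{\varphi,k}\,\sigma_+^{(k)} = e^{-i\varphi}\sigma_+^{(k)}\,U_{\varphi,k}$ (and the conjugate statement for $\sigma_-^{(k)}$), which can then be multiplied back by the commuting factors to recover the full claim.

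For the single-site identity, I would use the explicit form $U_{\varphi,k} = \cos(\tfrac{\varphi}{2})\,\one - i\sin(\tfrac{\varphi}{2})\,\sigma_z^{(k)}$ stated just above the lemma, together with the elementary Pauli relations $\sigma_z\sigma_+ = \sigma_+$ and $\sigma_+\sigma_z = -\sigma_+$ (restricted to site $k$). Substituting these into $U_{\varphi,k}\,\sigma_+^{(k)}$ gives $\cos(\tfrac{\varphi}{2})\sigma_+^{(k)} - i\sin(\tfrac{\varphi}{2})\sigma_+^{(k)}$, while $\sigma_+^{(k)}\,U_{\varphi,k}$ gives $\cos(\tfrac{\varphi}{2})\sigma_+^{(k)} + i\sin(\tfrac{\varphi}{2})\sigma_+^{(k)}$. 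Comparing the two and recognizing $\cos(\tfrac{\varphi}{2}) - i\sin(\tfrac{\varphi}{2}) = e^{-i\varphi/2}$ against $\cos(\tfrac{\varphi}{2}) + i\sin(\tfrac{\varphi}{2}) = e^{i\varphi/2}$ yields the phase $e^{-i\varphi/2}/e^{i\varphi/2} = e^{-i\varphi}$, which is exactly the claimed factor. The computation for $\sigma_-^{(k)}$ is identical up to the sign of the relevant Pauli relations, producing the conjugate phase $e^{+i\varphi}$.

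The only subtlety — and the one point I would be careful to state rather than gloss over — is the bookkeeping that lets one pass from the single-factor identity to the full $U_\varphi$. Because $[\sigma_z^{(j)}, \sigma_\pm^{(k)}] = 0$ for $j\neq k$, the factors $U_{\varphi,j}$ with $j \neq k$ slide past $\sigma_\pm^{(k)}$ freely, so the phase picked up is governed solely by the $k$-th factor and is independent of $N$ and of $k$. I would note explicitly that this is why the global phase is $e^{\mp i\varphi}$ (a single unit of phase), rather than accumulating contributions from all sites. This is routine but worth making precise, since it is the mechanism underlying the weak-symmetry property the lemma is being used to establish. I do not anticipate any genuine obstacle here; the result is an elementary consequence of the Pauli algebra, and the main value of the lemma is as a computational stepping-stone toward conjugating the boundary Lindblad operators $\sqrt{\alpha}\,\sigma_+^{(1)}$, $\sqrt{\beta}\,\sigma_-^{(1)}$, etc., by $U_\varphi$ in the subsequent verification that $\Gs_\varphi$ commutes with the full generator.
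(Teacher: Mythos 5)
Your proof is correct and follows essentially the same route as the paper's: both reduce the claim to the single factor $U_{\varphi,k}$ (since the other factors commute with $\sigma_\pm^{(k)}$), expand $U_{\varphi,k}=\cos(\tfrac{\varphi}{2})\one - i\sin(\tfrac{\varphi}{2})\sigma_z^{(k)}$, and use the Pauli relations $\sigma_z\sigma_\pm = \pm\sigma_\pm = -\sigma_\pm\sigma_z$ to extract the phases $e^{\mp i\varphi/2}$ on each side, whose ratio gives $e^{\mp i\varphi}$. Your organization (compute the left and right products separately and compare) is a slightly cleaner packaging of the same computation the paper carries out by inserting $U_{\varphi,k}^\dag U_{\varphi,k}$.
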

\begin{proof}
The proof follows from direct calculation:
\begin{align*}
    U_\varphi \sigma_+^{(k)} &= \prod_{j=1}^{N} U_{\varphi,j} \sigma_+^{(k)} = U_{\varphi,k} \sigma_+^{(k)} \prod_{\substack{j=1\\j\neq k}}^{N} U_{\varphi,j} = (\cos(\frac{\varphi}{2})\one_{2^N} -i\sin(\frac{\varphi}{2})\sigma_z^{(k)}) \sigma_+^{(k)} \prod_{\substack{j=1\\j\neq k}}^{N} U_{\varphi,j}\\
    & =[\cos(\frac{\varphi}{2})\sigma_+^{(k)}-i\sin(\frac{\varphi}{2})( \underbrace{\sigma_z^{(k)}\sigma_x^{(k)}}_{i\sigma_y^{(k)}} + \underbrace{i \sigma_z^{(k)}\sigma_y^{(k)}}_{\sigma_x^{(k)}}  )]\prod_{\substack{j=1\\j\neq k}}^{N} U_{\varphi,j}\\ 
    & =[\cos(\frac{\varphi}{2})\sigma_+^{(k)}-i\sin(\frac{\varphi}{2})\sigma_+^{(k)}]\prod_{\substack{j=1\\j\neq k}}^{N} U_{\varphi,j}  =e^{-i\varphi/2}\sigma_+^{(k)}\prod_{\substack{j=1\\j\neq k}}^{N} U_{\varphi,j} \\
    &=e^{-i\varphi/2}\sigma_+^{(k)} U_{\varphi,k}^\dag \underbrace{U_{\varphi,k} \prod_{\substack{j=1\\j\neq k}}^{N} U_{\varphi,j}}_{U_\varphi}  =e^{-i\varphi/2}(\sigma_x^{(k)}+i\sigma_y^{(k)})(\cos(\frac{\varphi}{2})\one_{2^N} +i\sin(\frac{\varphi}{2})\sigma_z^{(k)}) U_\varphi \\
    &=e^{-i\varphi/2}[\cos(\frac{\varphi}{2})(\sigma_x^{(k)}+i\sigma_y^{(k)})-i\sin(\frac{\varphi}{2})({i\sigma_y^{(k)}}+{\sigma_x^{(k)}})] U_\varphi = e^{-i\varphi} \sigma_+^{(k)} U_\varphi .
\end{align*}
One similarly finds $U_\varphi \sigma_-^{(k)} = e^{i\varphi}\sigma_-^{(k)}U_\varphi$.
\end{proof}

We now prove that $U_\varphi = e^{-iM\varphi}$ is a weak symmetry for the boundary-driven XXZ chain. 

\begin{proof}
Consider first the XXZ Hamiltonian. We have $U_\varphi \sigma_z^{(j)} U_\varphi^\dag = \sigma_z^{(j)}$, for all $j$. Then, using the above lemma one obtains:
\[U_\varphi H U_\varphi^\dag = \sum_{j=1}^{N-1} U_\varphi \big[ \sigma_x^{(j)}\sigma_x^{(j+1)}+\sigma_y^{(j)}\sigma_y^{(j+1)}+\Delta\sigma_z^{(j)}\sigma_z^{(j+1)} \big]U_\varphi = H. \]
It remains to prove that $U_\varphi\Dc_{L}(\rho) U_\varphi^\dag = \Dc_{L}(U_\varphi \rho U_\varphi^\dag)$, for $L\in\{\sqrt{\alpha} \sigma_{+}^{(1)}, \sqrt{\beta} \sigma_{+}^{(N)}, \sqrt{\beta} \sigma_{-}^{(1)}, \sqrt{\alpha} \sigma_{-}^{(N)} \}$. 
We first notice that $\sigma_+^{(k)}\sigma_-^{(k)} = \frac{\one_{2^{N}}+\sigma_z^{(k)}}{2}$ and $\sigma_-^{(k)}\sigma_+^{(k)} = \frac{\one_{2^{N}}-\sigma_z^{(k)}}{2}$ and thus both commute with $U_\varphi$. This means that $U_\varphi \{L^\dag L, \rho\} U_\varphi = \{L^\dag L, U_\varphi \rho U_\varphi^\dag\}$ for all the noise operators $L$ here considered.
Then, using the lemma here above, we have, for example, $U_\varphi \sigma_+^{(1)} \rho \sigma_-^{(1)} U_\varphi^\dag = \cancel{e^{-2i\varphi}}\cancel{e^{2i\varphi}} \sigma_-^{(1)} U_\varphi \rho U_\varphi \sigma_+^{(1)}$. In connection with the previous observation, it follows that $U_\varphi \Dc(\rho) U_\varphi^\dag = \Dc(U_\varphi \rho U_\varphi^\dag)$, which concludes the proof.  \end{proof}

From the above proof, we can observe that $e^{-i\varphi M}$ is a weak symmetry for the model, regardless of the choice of the parameters $\alpha, \beta$, and $\Delta$. Even more interestingly, the proof holds also if we were to modify the dynamics by introducing some site-dependent parameters: for example, consider    the noise operators $$L\in\{\sqrt{\alpha_1} \sigma_{+}^{(1)}, \sqrt{\beta_N} \sigma_{+}^{(N)}, \sqrt{\beta_1} \sigma_{-}^{(1)}, \sqrt{\alpha_N} \sigma_{-}^{(N)} \},$$ with location-dependent parameters $\alpha_1, \alpha_N, \beta_1, \beta_N$, together with the Hamiltonian 
$$H=\sum_{j=1}^{N} A_j \big[\sigma_x^{(j)}\sigma_x^{(j+1)}+\sigma_y^{(j)}\sigma_y^{(j+1)} \big]+\Delta_j\sigma_z^{(j)}\sigma_z^{(j+1)}.$$ 
The same would hold even adding creation and annihilation or dephasing noise acting locally on any of the spins.

\smallskip

To conclude this section, we further explicitly verify that Proposition \ref{strongcond} holds, i.e., we verify that $\{O_i\}\subseteq \Gs_\varphi'$. 
Now, trivially, $U_\varphi \sigma_z^{(j)} U_\varphi^\dag = \sigma_z^{(j)}$ since all $U_{\varphi, j}$ commute with all $\sigma_z^{(k)}$. 
The following lemma proves that $J_j\in\Gs_\varphi$.
\begin{lemma}
     For all $j=1,\dots,N$, it holds $U_\varphi (\sigma_x^{(j)}\sigma_x^{(j+1)} + \sigma_y^{(j)}\sigma_y^{(j+1)} ) U_\varphi^\dag = \sigma_x^{(j)}\sigma_x^{(j+1)} + \sigma_y^{(j)}\sigma_y^{(j+1)}$ and $U_\varphi (\sigma_x^{(j)}\sigma_y^{(j+1)} - \sigma_y^{(j)}\sigma_x^{(j+1)} ) U_\varphi^\dag = \sigma_x^{(j)}\sigma_y^{(j+1)} - \sigma_y^{(j)}\sigma_x^{(j+1)}$.
\end{lemma}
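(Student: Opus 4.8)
The plan is to reduce both identities to the single-site phase relations already established in the preceding lemma, namely $U_\varphi \sigma_+^{(k)} = e^{-i\varphi}\sigma_+^{(k)}U_\varphi$ and $U_\varphi \sigma_-^{(k)} = e^{i\varphi}\sigma_-^{(k)}U_\varphi$. The essential observation is that, once both two-body operators are expanded in the ladder basis, every surviving term contains exactly one raising and one lowering factor, so that the accompanying phases $e^{\mp i\varphi}$ cancel pairwise under conjugation by $U_\varphi$. This is simply the algebraic manifestation of the fact that these operators conserve the total magnetization $M$.

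First I would rewrite the two operators using $\sigma_x^{(k)} = \sigma_+^{(k)}+\sigma_-^{(k)}$ and $\sigma_y^{(k)} = -i(\sigma_+^{(k)}-\sigma_-^{(k)})$. A direct expansion yields the ``hopping'' form
\begin{equation*}
\sigma_x^{(j)}\sigma_x^{(j+1)}+\sigma_y^{(j)}\sigma_y^{(j+1)} = 2\big(\sigma_+^{(j)}\sigma_-^{(j+1)}+\sigma_-^{(j)}\sigma_+^{(j+1)}\big),
\end{equation*}
and similarly the current form
\begin{equation*}
\sigma_x^{(j)}\sigma_y^{(j+1)}-\sigma_y^{(j)}\sigma_x^{(j+1)} = 2i\big(\sigma_+^{(j)}\sigma_-^{(j+1)}-\sigma_-^{(j)}\sigma_+^{(j+1)}\big),
\end{equation*}
so that both operators are linear combinations of the two monomials $\sigma_+^{(j)}\sigma_-^{(j+1)}$ and $\sigma_-^{(j)}\sigma_+^{(j+1)}$. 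Note that the ``diagonal'' monomials $\sigma_+^{(j)}\sigma_+^{(j+1)}$ and $\sigma_-^{(j)}\sigma_-^{(j+1)}$, which alone would acquire $e^{\mp 2i\varphi}$ under conjugation, cancel identically out of both combinations before $U_\varphi$ is ever applied.

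Next I would apply the preceding lemma site by site to each monomial. For instance, $U_\varphi \sigma_+^{(j)}\sigma_-^{(j+1)} = e^{-i\varphi}\sigma_+^{(j)}U_\varphi \sigma_-^{(j+1)} = e^{-i\varphi}e^{+i\varphi}\sigma_+^{(j)}\sigma_-^{(j+1)}U_\varphi = \sigma_+^{(j)}\sigma_-^{(j+1)}U_\varphi$, and the identical cancellation occurs for $\sigma_-^{(j)}\sigma_+^{(j+1)}$. Multiplying on the right by $U_\varphi^\dag$ shows that each monomial is invariant under conjugation by $U_\varphi$, and by linearity the two displayed combinations are invariant as well, which is precisely the two claimed identities.

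There is no genuine obstacle here: the content is entirely bookkeeping. The only points worth flagging are to keep careful track of the signs and factors of $i$ when passing to the ladder basis, and to recognize that the phase cancellation succeeds exactly because the only terms that survive carry zero net change in magnetization, so the two single-site phases are always reciprocal.
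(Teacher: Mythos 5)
Your proof is correct, but it takes a different route from the paper's. The paper proves this lemma by a direct computation in the $x$--$y$ basis: it evaluates $U_{\varphi,j}\sigma_x^{(j)}U_{\varphi,j}^\dag$ and $U_{\varphi,j}\sigma_y^{(j)}U_{\varphi,j}^\dag$ explicitly, finding that each single-site conjugation acts as a planar rotation, $\sigma_x^{(j)}\mapsto \cos\varphi\,\sigma_x^{(j)}+\sin\varphi\,\sigma_y^{(j)}$ and $\sigma_y^{(j)}\mapsto \cos\varphi\,\sigma_y^{(j)}-\sin\varphi\,\sigma_x^{(j)}$, and then verifies via the identity $[\cos^2(\tfrac{\varphi}{2})-\sin^2(\tfrac{\varphi}{2})]^2 = 1-4\cos^2(\tfrac{\varphi}{2})\sin^2(\tfrac{\varphi}{2})$ that the cross terms cancel in the sum. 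You instead pass to the ladder basis, observe that both two-body operators reduce to combinations of the two magnetization-conserving monomials $\sigma_+^{(j)}\sigma_-^{(j+1)}$ and $\sigma_-^{(j)}\sigma_+^{(j+1)}$, and reuse the preceding lemma so that the phases $e^{\mp i\varphi}$ cancel pairwise. Your expansions are correct (I checked the coefficients $2$ and $2i$ and the cancellation of the $\sigma_\pm\sigma_\pm$ terms), and your argument is shorter, makes the underlying mechanism (zero net change in $M$) explicit, and gets more mileage out of the single-site lemma the paper has already established; the paper's version, by contrast, exhibits the geometric picture of $U_\varphi$ as a simultaneous rotation about $z$ on every site, at the cost of more trigonometric bookkeeping. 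Either proof is acceptable.
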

\begin{proof}
    We only prove the first statement, as the second follows an identical proof.
First, observe that all $U_{\varphi,k}$ with $k\neq j$ commute with $\sigma_x^{(j)} $ and $\sigma_y^{(j)}$. Therefore, we have 
\begin{align*}
    U_\varphi (\sigma_x^{(j)}\sigma_x^{(j+1)}+\sigma_y^{(j)}\sigma_y^{(j+1)}) U_\varphi^\dag &= U_{\varphi, j+1} U_{\varphi, j} (\sigma_x^{(j)}\sigma_x^{(j+1)}+\sigma_y^{(j)}\sigma_y^{(j+1)}) U_{\varphi, j}^\dag U_{\varphi, j+1}^\dag \\
    &= (U_{\varphi, j} \sigma_x^{(j)} U_{\varphi, j}^\dag) (U_{\varphi, j+1} \sigma_x^{(j+1)} U_{\varphi, j+1}^\dag) + (U_{\varphi, j} \sigma_y^{(j)} U_{\varphi, j}^\dag) (U_{\varphi, j+1} \sigma_y^{(j+1)} U_{\varphi, j+1}^\dag).
\end{align*}
Then, starting from \(U_{\varphi, j} \sigma_x^{(j)} U_{\varphi, j}^\dag\) we also have:
\begin{align*}
    U_{\varphi, j} \sigma_x^{(j)} U_{\varphi, j}^\dag
 &= (\cos(\frac{\varphi}{2}) \one_{2^N} -i \sin(\frac{\varphi}{2})\sigma_z^{(j)}) \sigma_x^{(j)} (\cos(\frac{\varphi}{2}) \one_{2^N} +i \sin(\frac{\varphi}{2})\sigma_z^{(j)})\\
 &= (\cos(\frac{\varphi}{2}) \one_{2^N} -i \sin(\frac{\varphi}{2})\sigma_z^{(j)})(\cos(\frac{\varphi}{2}) \sigma_x^{(j)} +i \sin(\frac{\varphi}{2})\sigma_x^{(j)}\sigma_z^{(j)})\\
 &= \cos^2(\frac{\varphi}{2}) \sigma_x^{(j)} 
 +i \cos(\frac{\varphi}{2})\sin(\frac{\varphi}{2})\sigma_x^{(j)}\sigma_z^{(j)}
 -i \cos(\frac{\varphi}{2})\sin(\frac{\varphi}{2})\sigma_z^{(j)}\sigma_x^{(j)} 
 + \sin^2(\frac{\varphi}{2}) \sigma_z^{(j)}\sigma_x^{(j)}\sigma_z^{(j)})\\
 &= [\cos^2(\frac{\varphi}{2})-\sin^2(\frac{\varphi}{2})]\sigma_x^{(j)}  +  2\cos(\frac{\varphi}{2})\sin(\frac{\varphi}{2})\sigma_y^{(j)} .
 \end{align*}
Similarly, one obtains $U_{\varphi, j} \sigma_y^{(j)} U_{\varphi, j}^\dag = [\cos^2(\frac{\varphi}{2})-\sin^2(\frac{\varphi}{2})]\sigma_y^{(j)}  -  2\cos(\frac{\varphi}{2})\sin(\frac{\varphi}{2})\sigma_x^{(j)}$. Combining these results for $\sigma_x^{(j)}\sigma_x^{(j+1)}$ and $\sigma_y^{(j)}\sigma_y^{(j+1)}$, one obtains 
\begin{align*}
    &U_\varphi \sigma_x^{(j)}\sigma_x^{(j+1)} U_\varphi^\dag = \\ &= \left\{[\cos^2(\frac{\varphi}{2})-\sin^2(\frac{\varphi}{2})]\sigma_x^{(j)}  +  2\cos(\frac{\varphi}{2})\sin(\frac{\varphi}{2})\sigma_y^{(j)}\right\} \left\{[\cos^2(\frac{\varphi}{2})-\sin^2(\frac{\varphi}{2})]\sigma_x^{(j+1)}  +  2\cos(\frac{\varphi}{2})\sin(\frac{\varphi}{2})\sigma_y^{(j+1)}\right\}, \\
    &U_\varphi \sigma_y^{(j)}\sigma_y^{(j+1)} U_\varphi^\dag = \\ &= \left\{[\cos^2(\frac{\varphi}{2})-\sin^2(\frac{\varphi}{2})]\sigma_y^{(j)}  -  2\cos(\frac{\varphi}{2})\sin(\frac{\varphi}{2})\sigma_x^{(j)}\right\} \left\{[\cos^2(\frac{\varphi}{2})-\sin^2(\frac{\varphi}{2})]\sigma_y^{(j+1)}  -  2\cos(\frac{\varphi}{2})\sin(\frac{\varphi}{2})\sigma_x^{(j+1)}\right\},
\end{align*}
which, using the identity $[\cos^2(\frac{\varphi}{2})-\sin^2(\frac{\varphi}{2})]^2 = 1- 4\cos^2(\frac{\varphi}{2})\sin^2(\frac{\varphi}{2})$, leads to their sum  \( U_\varphi \sigma_x^{(j)}\sigma_x^{(j+1)} U_\varphi^\dag + U_\varphi \sigma_y^{(j)}\sigma_y^{(j+1)} U_\varphi^\dag = \sigma_x^{(j)}\sigma_x^{(j+1)} + \sigma_y^{(j)}\sigma_y^{(j+1)} \).
\end{proof}

\medskip

\noindent\hrulefill
\bigskip

\twocolumn

\bibliography{ref}

\end{document}